\crefname{equation}{}{}
\algnewcommand\algorithmicinput{\textbf{Input:}}
\algnewcommand\algorithmicoutput{\textbf{Output:}}
\algnewcommand\Input{\item[\algorithmicinput]}
\algnewcommand\Output{\item[\algorithmicoutput]}
\algnewcommand\To{\textbf{to}~}
\theoremstyle{definition}
\newtheorem{theorem}{Theorem}[section]
\newtheorem{lemma}[theorem]{Lemma}
\newtheorem{corollary}[theorem]{Corollary}
\newtheorem{definition}[theorem]{Definition}
\newtheorem{remark}[theorem]{Remark}
\newtheorem{proposition}[theorem]{Proposition}
\DeclarePairedDelimiter{\delarg}{(}{)}
\DeclarePairedDelimiterX{\delrel}[2]{(}{)}{#1\,\delimsize\|\,#2}
\DeclarePairedDelimiterX{\inner}[2]{\langle}{\rangle}{#1,\,#2}
\DeclarePairedDelimiter{\abs}{\lvert}{\rvert}
\DeclarePairedDelimiter{\norm}{\lVert}{\rVert}
\newcommand{\opnorm}[1]{\norm{#1}_{\text{op}}}
\newcommand{\trnorm}[1]{\norm{#1}_{\text{Tr}}}
\DeclareMathOperator*{\argmax}{arg\,max}
\DeclareMathOperator*{\argmin}{arg\,min}
\newcommand{\newoperator}[6]{
    \expandafter\newcommand\csname #1@star\endcsname[#2][]{#3
        \if\relax\detokenize{#5}\relax\else
            \if\relax\detokenize{##1}\relax
                #4*#6
            \else
                #4*[##1]#6
            \fi
        \fi
    }
    \expandafter\newcommand\csname #1@nostar\endcsname[#2][]{#3
        \if\relax\detokenize{#5}\relax\else
            \if\relax\detokenize{##1}\relax
                #4#6
            \else
                #4[##1]#6
            \fi
        \fi
    }
    \expandafter\newcommand\csname #1\endcsname{\@ifstar
        {\csname #1@star\endcsname}
        {\csname #1@nostar\endcsname}}
}
\newcommand{\regret}[1]{\mathcal{R}_{#1}}
\newcommand{\citep}{\cite}
\newcommand{\bigo}{\mathcal{O}}
\newcommand{\mbf}{\mathbf}
\newcommand{\mcl}{\mathcal}
\newcommand{\reals}{\mathbb{R}}
\newcommand{\complex}{\mathbb{C}}
\newcommand{\naturals}{\mathbb{N}}
\newcommand{\identity}{\mathbbm{1}}
\newcommand{\T}{^\mathrm{T}}
\newcommand{\adj}{^*}
\title{Online learning of a panoply of quantum objects}
\author[1]{Akshay Bansal}
\author[2]{Ian George}
\author[3]{Soumik Ghosh}
\author[1]{Jamie Sikora}
\author[1]{Alice Zheng\footnote{Corresponding author, \href{mailto:alicezheng@vt.edu}{\texttt{alicezheng@vt.edu}}.}}
\affil[1]{Department of Computer Science, Virginia Polytechnic Institute and State University}
\affil[2]{Department of Electrical and Computer Engineering, University of Illinois at Urbana-Champaign}
\affil[3]{Department of Computer Science, University of Chicago\vspace*{.5em}}
\date{October 7, 2024}
\begin{document}

\maketitle

\begin{abstract}
    In many quantum tasks, there is an unknown quantum object that one wishes to learn.
    An online strategy for this task involves adaptively refining a hypothesis to reproduce such an object or its measurement statistics.
    A common evaluation metric for such a strategy is its regret, or roughly the accumulated errors in hypothesis statistics.
    We prove a sublinear regret bound for learning over general subsets of positive semidefinite matrices via the regularized-follow-the-leader algorithm and apply it to various settings where one wishes to learn quantum objects.
    For concrete applications, we present a sublinear regret bound for learning quantum states, effects, channels, interactive measurements, strategies, co-strategies, and the collection of inner products of pure states.
    Our bound applies to many other quantum objects with compact, convex representations.
    In proving our regret bound, we establish various matrix analysis results useful in quantum information theory.
    This includes a generalization of Pinsker's inequality for arbitrary positive semidefinite operators with possibly different traces, which may be of independent interest and applicable to more general classes of divergences.
\end{abstract}

\section{Introduction}

Quantum tomography is arguably one of the most relevant learning problems in quantum information theory.
It is the problem of approximating the description of an unknown quantum state using one or more samples of the state \citep{cramer2010efficient, jullien2014quantum}, and has more recently found its relevance for predicting many other quantum objects.
In particular, tomography of quantum states is now well-understood in terms of its scaling with respect to the required number of samples.
A series of seminal results has shown that an unknown state can be estimated with sample complexity linear in the dimension of the state \citep{o2016efficient} and that this scaling is optimal \citep{haah2017sample}.
Note that the dimension of a quantum state is exponential in the number of qubits, so any algorithm for state tomography would exhibit this scaling.

Another prominent variant is the tomography of quantum channels, dubbed process tomography.
Given an unknown channel $\Phi$, the objective is to reconstruct it to within a certain margin of error.
Studying the number of interactive observables (measurement-state pairs) required to identify a quantum process, \cite{gutoski2014process} showed that an improvement from $\bigo(d^4)$ to $\bigo(d^2)$ observables is possible when the channel is \emph{a priori} known to be $d \times d$ unitary.
More recently, \cite{huang2023learning} considered a slight variant of the process tomography task where it is required to predict $\inner{E}{\Phi(\rho)}$ for any given observable $E$ and input state $\rho$.
Their work provided a learning algorithm that can predict the function value with high accuracy using a training dataset of size $\textup{poly}(d)$.

The approaches mentioned above share a common thread of learning the full representation of the object at hand, and all require at least a linear scaling in the dimension of the learned objects.
For quite some time, this has posed an existential question to the exponential number of amplitudes present within a quantum state -- for their description as such may not have been apt had an exponential number of such states been required to learn them.
Remarkably, for a slightly different task of \emph{shadow tomography} where one is required to learn just the outcome distribution for a given finite set of POVMs, the required number of copies of the unknown state has been shown to scale linearly in the number of qubits \cite{aaronson2018shadow}. The proof idea of this result essentially combines the two ideas of postselected learning of quantum states \cite{aaronson2004limitations} and the application of a gentle search procedure described in \cite{aaronson2005qma}.
The procedure for shadow tomography of quantum states has since been simplified via the use of classical shadows~\citep{huang_predicting_2020}.

One significant limitation of the traditional shadow tomography framework is its failure to address adversarial environments that evolve over time, a scenario more reflective of real-world modern experimental settings.
In a major result, \cite{aaronson2018online} demonstrated that quantum states can indeed be learned in such dynamic, online settings.
Building on this foundation, our work examines the learnability of various quantum objects -- such as quantum states, channels, POVMs, and strategies -- within an \emph{online} framework that accommodates adversarial interactions.

\subsection{Online learning and its application to quantum}\label{sec:previousWork}

Online learning encompasses a broad range of concepts and algorithms -- among them online gradient descent and multi-armed bandits.
Consider an interactive process between a learner and an adversary.
At each time point $t \in [T]$, the learner chooses an action $\omega_t \in \mcl{K}$ and the adversary responds with a loss function $f_t : \mcl{K} \to \reals$.
The learner incurs a penalty $f_t(\omega_t)$, which it aims to minimize.

The above framework gives a lot of power to the adversary.
To make it tractable, we typically require $f_t$ and $\mcl{K}$ to be bounded and convex \cite[Chapter~1]{hazan2016introduction}.
Even then, the adversary can take $\omega_t$ into account when choosing $f_t$, such as by designating each prediction as ``wrong'' unconditionally.
Computing the cumulative loss in such a scenario does not necessarily give a meaningful metric of success.
To account for the unavoidable amount of loss, we may compare against the best fixed action in hindsight via a metric called \emph{regret},
\begin{equation}\label{eq:total-regret}
    \regret{T} = \sum_{t = 1}^{T} f_t(\omega_t) - \min_{\varphi \in \mathcal{K}} \sum_{t = 1}^{T} f_t(\varphi)\ .
\end{equation}

The best action in hindsight $\varphi$ that minimizes the cumulative loss at time $T$ yields a possible choice for a subsequent hypothesis at time $T+1$.
Also known as \emph{follow-the-leader}, this rule occasionally leads to instability and hence suboptimal regret.
To mitigate this issue, the allowed actions can be coerced towards a common point using some convex \emph{regularizer} $\reg{}$.
Balancing the pull of this new function and the previous rule with a parameter $\eta > 0$ results in a new choice of hypothesis
\begin{equation}\label{eq:RFTL}
    \omega_{t+1} = \argmin_{\varphi \in \mathcal{K}} \left\{\eta \sum_{s = 1}^{t} f_s(\varphi) + \reg{\varphi} \right\}\ .
\end{equation}
Since $f_t$, $\reg{}$ and $\mcl{K}$ are convex, the process of choosing such a hypothesis via \cref{eq:RFTL} at each step reduces to a convex optimization problem.

Note lastly that instead of minimizing the loss functions directly, we may elect to use their linear tangent approximations.
This is additionally useful when each function $f_t$ is not revealed in its entirety, but instead just its subgradient at $\omega_t$.
We present the procedure discussed thus far as \cref{algo:generalRFTL}, formulated for learning over subsets of $\Herm{\mcl{X}}$, the set of complex Hermitian matrices acting on a vector space $\mcl{X}$.
\begin{algorithm}[H]
\caption{Regularized Follow-the-Leader (RFTL, a.k.a.~FTRL)}
\label{algo:generalRFTL}
\begin{algorithmic}[1]
    \Input $T$, $\eta > 0$, convex regularization function $\reg{}$, a convex and compact set $\mcl{K} \subseteq \Herm{\mcl{X}}$.
    \State Set initial hypothesis $\omega_1 \gets \argmin_{\varphi \in \mcl{K}} \{ \reg{\varphi} \}$.
    \For{$t \gets 1$ \To $T$}
        \State Predict $\omega_t$ and incur cost $f_t(\omega_t)$, where $f_t : \mcl{K} \to \reals$ is convex.
        \State Let $\nabla_t$ be a subgradient of $f_t$ at $\omega_t$ (assuming $f_t$ is such that a subgradient always exists).
        \State Update decision according to the RFTL rule
        \vspace*{-.75em}
        \begin{align}
            \omega_{t+1} \gets \argmin _{\varphi \in \mcl{K}}  \left\{ \eta\sum_{s = 1}^t \inner{\nabla_s}{\varphi} + \reg{\varphi} \right\}\ .
        \end{align}
        \vspace*{-1em}
    \EndFor
\end{algorithmic}
\end{algorithm}

An online algorithm that successfully learns the task at hand is bound to eventually decrease its incurred loss.
Specifically, learnability is exhibited by the average per-round regret $\regret{T}/T$ tending to $0$ as $T \to \infty$.
Any regret scaling less than $\mathcal{O}(T)$ guarantees this; as such, the first goal with any regret bound is to demonstrate its sublinearity with respect to $T$.

The convexity of $f_t$ in any online algorithm allows bounding its regret by $\inner{\nabla_t}{\omega_t - \varphi}$, with $\varphi$ as featured in $\cref{eq:total-regret}$.
As $\varphi$ is hard to analyze or compare to, we instead relate this to a term like $\inner{\nabla_t}{\omega_t - \omega_{t+1}}$ on a per-algorithm basis.
In the case of a regularized update rule, the introduced inaccuracies accumulate to an accompanying term related to the diameter of the object set $\mcl{K}$ with respect to the regularizer.
\begin{lemma}[\cite{hazan2016introduction} Lemma~5.3]\label{lemma:RegretBoundRFTL}
    \cref{algo:generalRFTL} guarantees
    \begin{align}
        \regret{T} \leq \sum_{t=1}^T \inner{\nabla_t}{\omega_t - \omega_{t+1}} + \frac{1}{\eta} D^2\ ,
    \end{align}
    where $D$ is the diameter of $\mcl{K}$ relative to the function $\reg{}$, i.e., $D^2 = \max_{\varphi, \varphi' \in \mcl{K}} \{\reg{\varphi} - \reg{\varphi'}\}$.
\end{lemma}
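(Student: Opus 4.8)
The plan is to establish the bound by a telescoping argument on the regularized leader sequence. First I would observe that the RFTL update in \cref{algo:generalRFTL} always produces $\omega_{t+1}$ as the minimizer over $\mcl{K}$ of the convex function $g_t(\varphi) := \eta\sum_{s=1}^t \inner{\nabla_s}{\varphi} + \reg{\varphi}$, and in particular $\omega_1$ minimizes $g_0 = \reg{}$. The key structural fact is that $g_t(\varphi) = g_{t-1}(\varphi) + \eta\inner{\nabla_t}{\varphi}$, so these ``leader'' functions are built incrementally. The standard approach is to prove, by induction on $T$, the inequality
\begin{equation*}
    \sum_{t=1}^T \inner{\nabla_t}{\omega_t} \leq \sum_{t=1}^T \inner{\nabla_t}{\varphi} + \reg{\varphi} - \reg{\omega_1} + \sum_{t=1}^T \inner{\nabla_t}{\omega_t - \omega_{t+1}}
\end{equation*}
for every $\varphi \in \mcl{K}$; this is sometimes phrased as a ``be-the-leader''/``follow-the-leader'' lemma. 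The base case $T=0$ is trivial, and the inductive step uses the defining optimality of $\omega_{T+1}$ for $g_T$ together with rearranging the telescoped sum.

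An alternative, perhaps cleaner, route that I would actually write out: prove first the auxiliary claim that for the sequence of minimizers, $\sum_{t=1}^T \inner{\nabla_t}{\omega_{t+1}} + \frac{1}{\eta}\reg{\omega_{T+1}} \leq \sum_{t=1}^T \inner{\nabla_t}{\varphi} + \frac{1}{\eta}\reg{\varphi}$ for all $\varphi \in \mcl{K}$ — i.e.\ ``being the leader'' (predicting $\omega_{t+1}$ at round $t$) beats any fixed $\varphi$. This follows by a short induction: at step $T+1$, optimality of $\omega_{T+1}$ for $g_T$ gives $g_T(\omega_{T+1}) \leq g_T(\omega_{T+2})$, and adding $\eta\inner{\nabla_{T+1}}{\omega_{T+2}} + \reg{}$-bookkeeping to the inductive hypothesis evaluated at $\varphi = \omega_{T+2}$ closes the step. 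Once this claim is in hand, I would add $\sum_{t=1}^T \inner{\nabla_t}{\omega_t - \omega_{t+1}}$ to both sides, use $\reg{\omega_{T+1}} \geq \min_{\varphi'\in\mcl{K}}\reg{\varphi'} = \reg{\omega_1}$ on the left (so it can be dropped, weakening the bound), and bound $\reg{\varphi} - \reg{\omega_1} \leq \max_{\varphi,\varphi'\in\mcl{K}}\{\reg{\varphi}-\reg{\varphi'}\} = D^2$ on the right.

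Finally I would connect this back to the regret itself. By convexity of each $f_t$ and the subgradient inequality, $f_t(\omega_t) - f_t(\varphi) \leq \inner{\nabla_t}{\omega_t - \varphi}$, so summing over $t$ and taking $\varphi$ to be the minimizer in \eqref{eq:total-regret} gives $\regret{T} \leq \sum_{t=1}^T \inner{\nabla_t}{\omega_t - \varphi}$; combining with the displayed inequality above yields $\regret{T} \leq \sum_{t=1}^T \inner{\nabla_t}{\omega_t - \omega_{t+1}} + \frac{1}{\eta}D^2$. The main obstacle — and the only place real care is needed — is the inductive ``be-the-leader'' step: one must keep track of which regularizer evaluations appear with the $\eta$ factor and handle the fact that $\omega_{T+2}$ is used as the comparator inside the inductive hypothesis, while making sure every inequality points in the right direction when $\reg{\omega_{T+1}}$ is discarded. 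Everything else is linear-algebra bookkeeping, since $\inner{\cdot}{\cdot}$ on $\Herm{\mcl{X}}$ behaves exactly like a real inner product.
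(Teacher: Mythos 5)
Your overall architecture (linearize via the subgradient inequality, run a follow-the-leader/be-the-leader induction, then invoke the diameter bound) is the same as the paper's, which simply defers to the proof of Lemma~5.3 in \cite{hazan2016introduction}. However, the route you say you would actually write out rests on a false auxiliary claim: the inequality $\sum_{t=1}^T \inner{\nabla_t}{\omega_{t+1}} + \frac{1}{\eta}\reg{\omega_{T+1}} \leq \sum_{t=1}^T \inner{\nabla_t}{\varphi} + \frac{1}{\eta}\reg{\varphi}$ does not hold in general. Taking $\varphi = \omega_{T+1}$, it is equivalent to $\sum_{t=1}^T \inner{\nabla_t}{\omega_{t+1}} \leq \sum_{t=1}^T \inner{\nabla_t}{\omega_{T+1}}$, i.e.\ a be-the-leader statement for the linear losses alone with the regularizer round deleted, and this can fail: with $\mcl{K} = [0,1]$ viewed as $1\times 1$ Hermitian matrices, $\reg{x} = x^2/2$ (Algorithm~\ref{algo:generalRFTL} allows any convex regularizer), $\eta = 1$, $\nabla_1 = -1/2$, $\nabla_2 = -1$, one gets $\omega_1 = 0$, $\omega_2 = 1/2$, $\omega_3 = 1$, so $\inner{\nabla_1}{\omega_2} + \inner{\nabla_2}{\omega_3} = -5/4 > -3/2 = \inner{\nabla_1}{\omega_3} + \inner{\nabla_2}{\omega_3}$; equivalently your displayed claim at $\varphi = 1$ reads $-3/4 \leq -1$. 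The same defect surfaces inside your induction: applying the hypothesis at $\varphi = \omega_{T+2}$ and adding $\inner{\nabla_{T+1}}{\omega_{T+2}}$ leaves $\frac{1}{\eta}\reg{\omega_{T+1}}$ on the left-hand side, and upgrading it to $\frac{1}{\eta}\reg{\omega_{T+2}}$ would require $\reg{\omega_{T+2}} \leq \reg{\omega_{T+1}}$, which is not available (the regularizer values of the iterates need not decrease, as the example shows). So the step you flag as the delicate one is precisely where the argument breaks.

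The fix is the standard invariant you sketch in your first paragraph: keep the regularizer evaluated at $\omega_1$ on the left throughout, i.e.\ treat $\frac{1}{\eta}\reg{}$ as a round-zero loss and prove $\frac{1}{\eta}\reg{\omega_1} + \sum_{t=1}^T \inner{\nabla_t}{\omega_{t+1}} \leq \frac{1}{\eta}\reg{\varphi} + \sum_{t=1}^T \inner{\nabla_t}{\varphi}$ for all $\varphi \in \mcl{K}$. That induction does close (base case from $\omega_1$ minimizing $\reg{}$; inductive step from the hypothesis at $\varphi = \omega_{T+2}$ plus optimality of $\omega_{T+2}$ for the round-$(T+1)$ objective), and your remaining assembly — adding $\sum_t \inner{\nabla_t}{\omega_t - \omega_{t+1}}$, bounding $\frac{1}{\eta}(\reg{\varphi} - \reg{\omega_1}) \leq \frac{1}{\eta}D^2$, and the convexity step $f_t(\omega_t) - f_t(\varphi) \leq \inner{\nabla_t}{\omega_t - \varphi}$ — then goes through verbatim and recovers the lemma. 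One further bookkeeping slip in that first paragraph: the regularizer difference there must carry the factor $\frac{1}{\eta}$ (the minimized objective is $\eta\sum_{s}\inner{\nabla_s}{\varphi} + \reg{\varphi}$), otherwise the final bound would read $D^2$ rather than $\frac{1}{\eta}D^2$.
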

The inner product term may be interpreted as the ``stability'' of a particular algorithm; it is related to the distance between consecutive hypotheses, and more specifically upper-bounds the single-update objective difference $f_t(\omega_t) - f_t(\omega_{t+1})$.
This is also where the regularizer $\reg{}$ plays a role -- without a force to huddle the hypotheses, this difference may remain constant throughout all timesteps and thus lead to linear regret.
Instead, the inner product is typically bounded through requirements of additional properties, such as the regularizer being strongly convex.
For a more thorough discussion and analysis of relevant algorithms, see \cite{hazan2016introduction}.

\subsubsection{Online learning of quantum states and related work}\label{sec:relatedWork}

The framework of online learning may be applied to the task of learning quantum states through specific selections of loss functions.
In this task, we learn an unknown density operator $\rho \in \density{\mcl{X}}$, i.e., $\rho$ is positive semidefinite and has unit trace. We do so via its interactions with a sequence of measurements $0 \leq E_t \leq \identity$, i.e., $E_t$ and $\identity - E_t$ are positive semidefinite with $\identity$ being the identity matrix.
The probabilities of such measurements are $\inner{E_t}{\rho}$ when given $\rho$, and thus we define $f_t(X) = \ell_t(\inner{E_t}{X})$, with $\ell_t$ typically having relatively small values around $\inner{E_t}{\rho}$.
In the context of quantum tomography, such inner products are sometimes called \emph{shadows}.

The task of learning quantum states has seen extensive analysis in the framework of regret minimization.
Sublinear regret bounds for learning over the set of density operators $\mcl{K} = \density{\mcl{X}}$ are known for bandit feedback \citep{lumbreras2022multi}, and for full feedback via algorithms such as Exponentiated Gradient \citep{warmuth_online_2012}, Follow-the-Perturbed Leader \citep{yang_revisiting_2020}, Follow-the-Regularized Leader or Matrix Multiplicative Weights \citep{aaronson2018online}, and Volumetric-Barrier enhanced Follow-the-Regularized Leader \citep{jezequel_efficient_2022,tseng_online_2023}.
The relevant analyses require different assumptions on the loss functions, such as quadratic, $L_1$, $L$-Lipschitz, and logarithmic.
Notably, the latter case with VB-FTRL attains logarithmic regret, whereas in other cases it is square-root.
In a more recent work, \cite{chen2024adaptive} prove sublinear regret bounds for quantum states that evolve over time due to external factors such as measurements or environmental noise.
For more details about online learning and related algorithms, the reader is referred to \cite{hazan2016introduction,orabona_modern_2023}.

In this work, we focus primarily on the RFTL approach described by \cite{aaronson2018online}.
We may choose $\reg{X} = \trace{X \ln(X)}$ as the regularizer, which ensures the stability of the algorithm by coercing hypotheses to the maximally mixed state.
The RFTL update rule may be seen as minimizing a function $\F{E}{X} = \inner{E}{X} + \reg{X}$, with $E$ being the accumulated gradients.
When done over $\density{\mcl{X}}$, its optimum has the closed-form solution $e^{-E} / \trace{}(e^{-E})$.
This is known as a Gibbs state, which is well-studied in the context of quantum thermodynamics due to its relevance as the equilibrium state of a system described by the Hamiltonian $E$.
The theorems and proofs of \cite{aaronson2018online} for the large part mirror those of \cite{hazan2016introduction}, save for parts that rely on the particular choice of $\mcl{K}$.
In this work, we generalize these results to arbitrary compact and convex subsets of positive semidefinite matrices.

\subsubsection{Our setting}

The setting we consider in this work consists of a finite time-horizon partitioned into $T$ distinct time points.
At each time point $t \in [T]$, the adversary decides on a loss function $\ell_t : \reals \to \reals $ and a \emph{co-object} $E_t \in \mathcal{E}$.
The learner commits to an action $\omega_t \in \mcl{K}$ at time $t$ and then experiences the loss $\ell_t$ evaluated on the inner product $\inner{E_t}{\omega_t}$.
Subsequently, the entire loss function is revealed.
The notion of a co-object arises as a generalization of measurements in the case of learning over quantum states.
Such co-objects represent something that may meaningfully interact with the considered objects and produce some quantity of interest, i.e., their inner product.
Some examples of the relevant sets of co-objects are shown in \cref{sec:QIntroApps}.

We consider the set of allowed actions $\mcl{K}$ to be a subset of positive semidefinite operators $\Pos{\mcl{X}}$ and the set of co-objects $\mathcal{E}$ to be a subset of Hermitian operators $\Herm{\mcl{X}}$, which allows for capturing the interactions between various classes of quantum objects.
The restriction to positive semidefinite operators matches the setting of many analyses in quantum information.
We note that this is not a significant constraint in terms of applicability, with many quantum objects being subsets of positive semidefinite operators (such as quantum states, channels, POVMs, and strategies).
The objective of the learner is to devise a sequence of actions that minimizes their overall regret, as defined for general loss functions in \cref{eq:total-regret}.

\subsection{Main results}\label{sec:results}

We now present our main technical result, the proof of which can be found in \cref{sec:subLinRegret}.

\begin{theorem}[Sublinearity of regret]\label{thm:sublinearity}
    Suppose the loss function at each round $t \in \{1, \ldots, T\}$ is given by $f_t(\omega) = \ell_t(\inner{E_t}{\omega})$, where $f_t : \mcl{K} \to \reals$
    with $\mcl{K}$ compact and convex, and $\ell_t : \reals \to \reals$ is convex and $B$-Lipschitz.
    We additionally assume that there exists $\alpha > 0$ such $\alpha \identity \in \mcl{K}$.
    Then the bound on the regret $\regret{T}$ due to \cref{algo:qrftl} is given by
    \begin{equation}
        \regret{T} \leq 4 B C D \sqrt{A T}\ ,
    \end{equation}
    where $A$ is an upper bound on the trace of any object in $\mcl{K}$, $C$ is an upper bound on the operator norm of any co-object $E_t$ for $t \in \{ 1, \ldots, T\}$, and $D^2 = \max_{\varphi,\varphi' \in \mcl{K}} \{\entr{}(\varphi') - \entr{\varphi}\}$.
\end{theorem}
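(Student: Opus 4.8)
The plan is to run \cref{algo:qrftl}, i.e.\ \cref{algo:generalRFTL} specialized to the negative von Neumann entropy regularizer $\reg{X} = \trace{X \ln X} = -\entr{X}$, and to build on \cref{lemma:RegretBoundRFTL}, which already isolates the two quantities that matter: the ``stability'' sum $\sum_{t=1}^{T}\inner{\nabla_t}{\omega_t - \omega_{t+1}}$ and the regularizer-diameter term, which for our choice of $\reg{}$ is $\max_{\varphi,\varphi' \in \mcl{K}}\{\reg{\varphi} - \reg{\varphi'}\} = \max_{\varphi,\varphi' \in \mcl{K}}\{\entr{\varphi'} - \entr{\varphi}\} = D^2$, exactly the $D$ appearing in the statement. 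A preliminary point I would record is that the hypothesis $\alpha\identity \in \mcl{K}$ forces every iterate $\omega_t$ to be positive definite: the directional derivative of $X \mapsto \trace{X \ln X}$ along any rank-increasing direction is $-\infty$, and from any point of $\mcl{K}$ one can move toward the full-rank matrix $\alpha\identity$, so no rank-deficient matrix can minimize the RFTL objective. Hence $\ln \omega_t$, the gradient $\ln \omega_t + \identity$ of $\reg{}$, and all the divergences used below are well defined.

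Two ingredients then remain. First, a uniform bound on the subgradients: since $f_t = \ell_t \circ \inner{E_t}{\cdot}$ is a convex function composed with a linear functional, the subgradient chain rule gives $\nabla_t = g_t E_t$ for some $g_t \in \partial \ell_t(\inner{E_t}{\omega_t})$, and $B$-Lipschitzness of $\ell_t$ yields $\abs{g_t}\le B$, so $\opnorm{\nabla_t} \le B\opnorm{E_t} \le BC$. Second, the per-step stability bound, via the standard two-sided optimality argument: writing $\Phi_t(\varphi) = \eta\sum_{s \le t}\inner{\nabla_s}{\varphi} + \reg{\varphi}$ one has $\omega_t = \argmin_{\varphi\in\mcl{K}}\Phi_{t-1}(\varphi)$ and $\omega_{t+1} = \argmin_{\varphi\in\mcl{K}}\Phi_t(\varphi)$, so if $\reg{}$ is $\sigma$-strongly convex with respect to the trace norm on $\mcl{K}$ then so is each $\Phi_t$, giving $\Phi_{t-1}(\omega_{t+1}) \ge \Phi_{t-1}(\omega_t) + \tfrac{\sigma}{2}\trnorm{\omega_t - \omega_{t+1}}^2$ and $\Phi_t(\omega_t) \ge \Phi_t(\omega_{t+1}) + \tfrac{\sigma}{2}\trnorm{\omega_t - \omega_{t+1}}^2$. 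Adding these and using $\Phi_t - \Phi_{t-1} = \eta\inner{\nabla_t}{\cdot}$ gives $\eta\inner{\nabla_t}{\omega_t - \omega_{t+1}} \ge \sigma\trnorm{\omega_t - \omega_{t+1}}^2$, and combining with the matrix H\"older inequality $\inner{\nabla_t}{\omega_t - \omega_{t+1}} \le \opnorm{\nabla_t}\trnorm{\omega_t - \omega_{t+1}} \le BC\,\trnorm{\omega_t - \omega_{t+1}}$ yields $\trnorm{\omega_t - \omega_{t+1}} \le \eta BC/\sigma$ and hence $\inner{\nabla_t}{\omega_t - \omega_{t+1}} \le \eta(BC)^2/\sigma$.

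The main obstacle — and the only step that uses anything specific about the entropic regularizer — is supplying the strong convexity invoked above: that $\reg{X} = \trace{X \ln X}$ is strongly convex in the trace norm on $\mcl{K}$ with modulus $\sigma = \tfrac{1}{4A}$, equivalently that its Bregman divergence obeys $\trace{X \ln X - X \ln Y - X + Y} \ge \tfrac{1}{8A}\trnorm{X - Y}^2$ for all $X,Y \in \mcl{K}$. This is a Pinsker-type inequality, but for positive semidefinite operators whose traces may differ and be as large as $A$, so the textbook quantum Pinsker inequality (the unit-trace density operator case) does not apply directly; I expect this operator inequality — the one flagged in the abstract as of independent interest — to absorb the bulk of the work. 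I would try to obtain it by reducing to a classical statement, for instance via a data-processing/monotonicity bound under the two-outcome measurement adapted to the Jordan decomposition of $X - Y$, followed by a generalized classical Pinsker inequality for nonnegative vectors of total mass at most $A$, carefully tracking how the lack of normalization degrades the constant relative to the probability-vector case. Granting this, the stability sum is at most $\eta T(BC)^2/\sigma = 4A\eta T(BC)^2$, so \cref{lemma:RegretBoundRFTL} gives $\regret{T} \le 4A\eta T(BC)^2 + \tfrac{1}{\eta} D^2$; choosing $\eta = \tfrac{D}{2BC\sqrt{AT}}$ balances the two terms and yields $\regret{T} \le 4BCD\sqrt{AT}$.
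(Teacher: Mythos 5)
Your proposal is correct and follows essentially the same route as the paper: Hazan's regret bound (\cref{lemma:RegretBoundRFTL}), positive definiteness of the iterates guaranteed by $\alpha\identity\in\mcl{K}$, the subgradient bound $\opnorm{\nabla_t}\le BC$, a trace-norm stability estimate for the entropic regularizer, H\"older's inequality, and the same tuning $\eta = D/(2BC\sqrt{AT})$; your ``strong convexity with modulus $1/(4A)$ with respect to the trace norm'' together with the two-sided optimality argument is a repackaging of the paper's \cref{lemma:bregman-inner} combined with \cref{thm:generalized-Pinsker's-inequality}, which in fact supplies the stronger Bregman lower bound $\trnorm{P-Q}^2/\left(4\max\{\trace{P},\trace{Q}\}\right) \ge \trnorm{P-Q}^2/(4A)$, more than the $1/(8A)$ you require. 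The one ingredient you leave unproved is that unnormalized Pinsker-type inequality itself, but your sketch (data processing under the two-outcome measurement built from the Jordan decomposition of $X-Y$, followed by a classical Pinsker-type inequality for unnormalized nonnegative vectors) is exactly the paper's proof strategy, with the classical step carried out there via the multivariate Taylor theorem with integral remainder.
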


We discuss several quantum applications of the above result in \cref{sec:QIntroApps}.
While its proof is similar to other regret bound analyses, there are several important challenges faced in this context.

While one may wish to simply utilize the algorithm given by \cite{aaronson2018online} to learn an unknown quantum object, there are some inherent difficulties in establishing the learnability of a general quantum object (such as measurement operators, instruments, co-channels etc.) that do not have a predefined trace value.
We discuss some of these challenges below.

\paragraph{Challenge \#1: Variable traces.}
It is perhaps well-known that the regret incurred by regularized-follow-the-leader is bounded above as in \cref{lemma:RegretBoundRFTL}, which can be further bounded in terms of the trace distance between the successive positive semidefinite estimates $\omega_t$ and $\omega_{t+1}$ via H\"{o}lder's inequality.
When the estimates $\omega_t$ have equal trace values or the set $\mcl{K}$ consists of only objects with equal trace values, the trace distance could be subsequently bounded using the well-known quantum Pinsker's inequality \citep{ohya2004quantum}.
To bound the regret of positive semidefinite objects with possibly varying traces (such as for effects in a POVM), one may wish to develop a variant of Pinsker's inequality that holds for generic positive semidefinite objects.

To surmount this challenge, we offer two solutions.
One solution is to generalize Pinsker's inequality for our setting, as discussed below.
We briefly discuss another solution in \cref{sec:alternative}.

\paragraph{Solution \#1: A generalized Pinsker's inequality.}
We now state a generalized quantum Pinsker's inequality to include possibly unnormalized positive semidefinite objects.
Given the widespread use of Pinsker's inequality in the quantum literature, we believe that this version of Pinsker's inequality could be of independent interest and could find use in other applications.

\begin{theorem}[Generalized Pinsker's inequality]\label{thm:generalized-Pinsker's-inequality}
    For any positive semidefinite $P$ and $Q$, we have
    \begin{equation}\label{eq:generalized-Pinsker's-inequality}
        \frac{1}{4} \trnorm{P - Q}^2 \leq \max\{\trace{P}, \trace{Q}\} \left[ \relentr{P}{Q} - \trace{P - Q} \right]\ .
    \end{equation}
\end{theorem}

The non-triviality of this generalized statement stems from the inability to recover the additive $\trace{P-Q}$ term via the well-known constant-trace Pinsker's inequality, an approach seen in the proof of \cite[Theorem~10.8.1]{Wilde-Book} which only yields a quadratic term.
Our proof for this inequality extends upon the multivariate Taylor's theorem for the unnormalized variant of KL divergence and is detailed in \cref{sec:pinsker}.
We note that the techniques we develop to prove \cref{thm:generalized-Pinsker's-inequality} could also be extended to establish divergence inequalities for a more general class of Csisz\'ar $f$-divergences.
For instance, in a follow-up work by a subset of the authors, the integral representation of KL-divergence described in the proof of \cref{thm:classical-version} is extended to bound classical $f$-divergences in terms of $\chi^2$-divergence.
This results in novel contraction coefficient bounds as well as reverse Pinsker inequalities for classical $f$-divergences that differ from those in \cite{sason2016f} which relied on Lipschitz continuity of $f$.

Previously, numerical evidence suggested that the constant of $1/4$ in the previous equation can be tightened to $1/2$, which would recover quantum Pinsker's inequality for the case of equal traces of $P$ and $Q$.
One could also leverage strong convexity and the unit \emph{moduli of convexity} (with respect to the trace norm) of unnormalized von Neumann entropy shown in the unpublished notes of \cite{yu13} to obtain a tighter version of \cref{eq:generalized-Pinsker's-inequality}.
We remark that a generalized Pinsker's inequality for generic positive semidefinite objects is not explicitly present in the literature to the best of our knowledge.

\begin{remark}
    We note that \cref{thm:generalized-Pinsker's-inequality} is a strengthening of $\relentr{P}{Q} \geq \trace{P - Q}$, a special case of \emph{Klein's inequality} (also stated in \cref{fact:nonNegRelEnt}).
    For positive semidefinite $P$ and $Q$ (not both $0$), we obtain
    \begin{equation}
        \relentr{P}{Q} \geq \trace{P - Q} + \frac{1}{4 \max \{ \trace{P}, \trace{Q} \}} \trnorm{P - Q}^2\ ,
    \end{equation}
    which gives a strictly better lower bound when $P \neq Q$.
    A slightly stronger version of this inequality is given in \cref{cor:quantum-gen-pinsker}.
\end{remark}

\paragraph{Challenge \#2: Trace functional differentiation.}
Another obstacle in finding bounds on the regret for learning positive semidefinite objects with von Neumann entropy as the regularizer is extending gradients for trace functionals of the form $\trace{X \ln(X)}$, where $X$ is positive definite.
Specifically, the gradient of said functional over a subset of the positive definite cone can be used to derive an upper bound on the Bregman divergence, and subsequently on the overall regret.
The Fr\'{e}chet derivative of a different functional $-\ln(\det(X))$, which was examined by \cite{hjorungnes2007complex}, was recently utilized by \cite{zimmert2022pushing} to achieve logarithmic regret in a specific learning setting.
We note a purported absence of a similar discussion for the functional $\trace{X \ln(X)}$ in the existing literature.

\paragraph{Solution \#2: Fr\'{e}chet derivative.}
In this work, we also establish that the function
\begin{equation}
\Phi_{E}(X) \coloneqq \langle E , X \rangle + \trace{X \ln(X)}
\end{equation}
has a gradient for all $X \in \Pd{\mcl{X}}$ and that it is given by
\begin{align}
    \nabla \Phi_{E}(X) = E + \identity + \ln(X) \ .
\end{align}
In some sense, it would be surprising if the above claim were not true as, depending on how one defines $\nabla$ with respect to $X$, it is well-known $\nabla \inner{E}{X} = E$ and it would be intuitive that $\nabla \trace{X \ln(X)} = \identity + \ln(X)$ as $\frac{d}{dt}(t\ln t) = 1 + \ln(t)$.
However, we provide a proof of this result in \cref{sec:deriv-of-func-phi-E} since we could not find a complete proof of this result in the literature.
Also, we expect the tools and methodology used here to have applications beyond this work.
For example, in establishing this fact, we generalize a known trace functional result in \cite{Carlen-2010}.
Given the importance of trace functionals in quantum information theory (see \cite{Carlen-2010} for discussion), we expect this to be useful in other settings.

\medskip
Lastly, generalizing online learning to general quantum objects raises several pertinent considerations.
One is the question of choosing the regularizer, which needs to keep the iterates sufficiently close while maintaining a reasonable radius $D^2$.
As it turns out, the negative entropy function used by \cite{aaronson2018online} for learning quantum states still works.
We bound the distance between iterates via the generalized Pinsker's inequality (\cref{thm:generalized-Pinsker's-inequality}) and the radius via the following lemma.

\begin{lemma}\label{lem:dbound}
    Suppose $\mcl{K} \subseteq \Pos{\mcl{X}}$ satisfies $\trace{X} = A$ for all $X \in \mcl{K}$ and $A \geq 1$.
    Then we have
    \begin{equation}\label{eq:dbound-const}
        D^2 \leq A \ln(\dim(\mcl{X}))\ .
    \end{equation}
    Suppose $\mcl{K} \subseteq \Pos{\mcl{X}}$ satisfies $\trace{X} \leq A$ for all $X \in \mcl{K}$, and $A \geq 1$.
    Then we have
    \begin{equation}
        D^2 \leq \begin{cases}
            A \ln(\dim(\mcl{X})) & \text{ if } A \leq e^{-1} \dim(\mcl{X}) \\
            e^{-1} \dim(\mcl{X}) + A \ln(A) & \text{ if } A \geq e^{-1} \dim(\mcl{X})
        \end{cases}\ .
    \end{equation}
\end{lemma}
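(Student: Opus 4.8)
The plan is to reduce the claim to two elementary scalar estimates on the von Neumann entropy $\entr{X} = -\trace{X\ln X}$ of a positive semidefinite operator, viewed as a function of its trace, and then to optimize these over the trace constraint defining $\mcl K$. Write $d$ for $\dim(\mcl X)$. Since $D^2 = \max_{\varphi,\varphi'\in\mcl K}\{\entr{\varphi'} - \entr{\varphi}\} = \max_{\varphi'\in\mcl K}\entr{\varphi'} - \min_{\varphi\in\mcl K}\entr{\varphi}$, it suffices to upper bound $\entr{\varphi'}$ and lower bound $\entr{\varphi}$ over the appropriate superset of $\mcl K$: $\{X\in\Pos{\mcl X} : \trace X = A\}$ in the first part and $\{X\in\Pos{\mcl X} : \trace X \le A\}$ in the second.

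For the estimates I would fix $X\in\Pos{\mcl X}$, set $a = \trace X$ (the case $X = 0$ being trivial), and use the spectrum $\lambda_1,\dots,\lambda_d \ge 0$ of $X$, which satisfies $\sum_i\lambda_i = a$ and hence $\lambda_i \le a$ for every $i$. The lower bound
\[
\entr{X} = \sum_i(-\lambda_i\ln\lambda_i) \ge \sum_i(-\lambda_i\ln a) = -a\ln a
\]
is then immediate from monotonicity of $\ln$ (with the convention $0\ln 0 = 0$). The upper bound
\[
\entr{X} \le a\ln(d/a)
\]
follows from concavity of $t\mapsto -t\ln t$ applied to the uniform average of the $\lambda_i$ (equivalently, from Klein's inequality, \cref{fact:nonNegRelEnt}, applied to $\relentr{X}{(a/d)\identity}\ge\trace{X-(a/d)\identity}=0$). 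Finally, since $A\ge 1$ one checks that $a\ln a\le A\ln A$ for all $a\in[0,A]$, so in both parts $\min_{\varphi\in\mcl K}\entr{\varphi}\ge -A\ln A$.

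It remains to combine these with the trace constraint. In the constant-trace part $a = A$ for every $X\in\mcl K$, so $D^2 \le A\ln(d/A) - (-A\ln A) = A\ln d$, which is \eqref{eq:dbound-const}. In the bounded-trace part I would maximize $h(a) = a\ln(d/a)$ over $a\in[0,A]$: since $h'(a) = \ln(d/a) - 1$, $h$ increases on $(0,e^{-1}d)$, decreases afterwards, and $h(e^{-1}d) = e^{-1}d$. Hence if $A\le e^{-1}d$ the maximum over $[0,A]$ is $h(A) = A\ln(d/A)$, giving $D^2 \le A\ln(d/A) + A\ln A = A\ln d$; and if $A\ge e^{-1}d$ the maximum is $e^{-1}d$, giving $D^2 \le e^{-1}d + A\ln A$. (The two expressions agree at $A = e^{-1}d$, as they should.) This is exactly the stated case analysis.

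The computations are routine; the only points needing genuine care are the entropy extremization for operators of \emph{non-unit} trace, especially the lower bound $\entr{X}\ge -\trace X\ln\trace X$, where positivity of $X$ (through $\lambda_i\le\trace X$) is what is really used, and the correct deployment of the hypothesis $A\ge 1$, which is precisely what guarantees $A\ln A\ge 0$ and $a\ln a\le A\ln A$ on $[0,A]$ and hence keeps the two regimes consistent.
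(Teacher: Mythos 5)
Your proposal is correct and takes essentially the same route as the paper: both reduce the problem to the scalar bounds $-a\ln a \le \entr{X} \le a\ln(\dim(\mcl X)/a)$ in the trace $a = \trace{X}$ (the paper obtains these via the rescaling identity $\entr{X} = -a\ln a + a\,\entr{X/a}$ and the standard entropy bounds for density operators, you via eigenvalue monotonicity and Klein's inequality), and then perform the same one-variable optimization over $a \in [0,A]$ with critical point $e^{-1}\dim(\mcl X)$. The handling of the minimum via $a\ln a \le A\ln A$ for $A \ge 1$ matches the paper's use of $\min\entr{} = -A\ln A$, so no gap remains.
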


\subsubsection{Regret bounds for learning quantum objects}\label{sec:QIntroApps}

Using the bound above, we can prove sublinear regret bounds for learning many quantum objects.
We sketch how it can be applied to quantum channels and pure state inner products, then informally amalgamate the regret bounds for other objects.

\paragraph{Example \#1: Quantum channels.}
A \emph{quantum channel} is a physical operation that can be performed on quantum states (see \cref{sec:prelims} for definitions).
Suppose $\Phi$ is a quantum channel which maps quantum states acting on $\mcl{X}$ to quantum states acting on $\mcl{Y}$.
We can represent $\Phi$ by its Choi matrix $J$, and it just so happens that $\trace{J} = \dim(\mcl{X})$ for any choice of $J$.
Thus $A = \dim(\mcl{X})$ and we may now additionally make use of \cref{eq:dbound-const} in \cref{lem:dbound}.
The set of all such channels is a subset of $\Pos{\mcl{Y} \otimes \mcl{X}}$, and hence $D^2 \leq A \ln (\dim(\mcl{Y} \otimes \mcl{X})) = A \ln (\dim(\mcl{X})) + A \ln (\dim(\mcl{Y}))$.

It remains now to consider the set of \emph{co-objects} associated with quantum channels.
Much like we could interface with quantum states via measurements, we can interface with quantum channels via \emph{interactive measurements}.
Roughly, these describe the most general means of interacting with a given quantum channel.
Specifically, they describe preparing a larger state acting on $\mcl{X} \otimes \mcl{Z}$, sending the $\mcl{X}$ part through $\Phi$, and measuring the outcome.
An interactive measurement similarly has a Choi representation $R \in \Pos{\mcl{Y} \otimes \mcl{X}}$, meaning we may now take $\inner{J}{R}$ to obtain the effect observed when the interactive measurement acts on $\Phi$.
It happens that the constraints on $R$ ensure that $\opnorm{R} \leq 1$, which implies $C = 1$.
Substituting the values obtained for the constants $A$, $C$, and $D$ into the $4 BCD \sqrt{AT}$ regret bound of \cref{thm:sublinearity} yields the following.

\begin{lemma}
We have the following regret bound for learning quantum channels
    \begin{equation}
\regret{T} \leq \left( 4B \, \dim(\mcl{X}) \sqrt{\ln(\dim(\mcl{X})) + \ln(\dim(\mcl{Y}))} \right) \sqrt{T}\ .
    \end{equation}
\end{lemma}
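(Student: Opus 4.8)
The plan is to obtain the bound as a direct instantiation of \cref{thm:sublinearity}, combined with \cref{lem:dbound}, once the three constants $A$, $C$, $D$ of the channel-learning setup have been identified. First I would cast the online problem in the form required by \cref{thm:sublinearity}. The action set $\mcl{K}$ is taken to be the set of Choi matrices $J$ of quantum channels from $\mcl{X}$ to $\mcl{Y}$; this is a convex subset of $\Pos{\mcl{Y}\otimes\mcl{X}}$ (channels are closed under convex combination), and it is compact, being the intersection of the closed bounded positive cone with the closed affine constraint $\ptrace{\mcl{Y}}{J} = \identity_{\mcl{X}}$ encoding trace preservation. The co-objects $E_t$ are the Choi operators $R_t$ of the interactive measurements supplied by the adversary, and at round $t$ the loss is $f_t(J) = \ell_t(\inner{R_t}{J})$ with $\ell_t : \reals \to \reals$ convex and $B$-Lipschitz (for instance $\ell_t(x) = \abs{x - b_t}$ for the observed statistic $b_t$, giving $B = 1$), which is exactly the structure demanded by the theorem. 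The technical hypothesis $\alpha\identity \in \mcl{K}$ is witnessed by the completely depolarizing channel $\rho \mapsto \trace{\rho}\,\identity_{\mcl{Y}}/\dim(\mcl{Y})$, whose Choi matrix is $\identity_{\mcl{Y}\otimes\mcl{X}}/\dim(\mcl{Y})$, so one may take $\alpha = 1/\dim(\mcl{Y})$.

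Next I would pin down the constants. The trace-preservation constraint $\ptrace{\mcl{Y}}{J} = \identity_{\mcl{X}}$ forces $\trace{J} = \dim(\mcl{X})$ for \emph{every} $J \in \mcl{K}$, so we may take $A = \dim(\mcl{X})$; since this holds with equality, the tighter first branch of \cref{lem:dbound} applies, and because $A = \dim(\mcl{X}) \geq 1$ it gives
\[
    D^2 \leq A \ln(\dim(\mcl{Y}\otimes\mcl{X})) = A\left(\ln(\dim(\mcl{X})) + \ln(\dim(\mcl{Y}))\right)\ .
\]
For the co-objects, I would unpack the definition of an interactive measurement — prepare a state on $\mcl{X}$ together with an arbitrary-dimensional ancilla, route the $\mcl{X}$ register through $\Phi$, then apply a measurement effect on the output register and ancilla — and verify that the associated Choi operator $R$ satisfies $R \preceq \identity_{\mcl{Y}} \otimes \tau$ for some density operator $\tau$ on $\mcl{X}$, whence $\opnorm{R} \leq \opnorm{\identity_{\mcl{Y}} \otimes \tau} \leq 1$; thus $C = 1$ uniformly over $t$.

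Finally I would substitute $A = \dim(\mcl{X})$, $C = 1$, and $D \leq \sqrt{A(\ln(\dim(\mcl{X})) + \ln(\dim(\mcl{Y})))}$ into the bound $\regret{T} \leq 4BCD\sqrt{AT}$ of \cref{thm:sublinearity}, obtaining
\[
    \regret{T} \leq 4B\sqrt{A(\ln(\dim(\mcl{X})) + \ln(\dim(\mcl{Y})))}\cdot\sqrt{AT} = 4BA\sqrt{(\ln(\dim(\mcl{X})) + \ln(\dim(\mcl{Y})))\,T}\ ,
\]
which equals $\left(4B\dim(\mcl{X})\sqrt{\ln(\dim(\mcl{X})) + \ln(\dim(\mcl{Y}))}\right)\sqrt{T}$ as claimed.

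The algebra in this last step is immediate, and the regret bound itself is off-the-shelf; the only genuine content is verifying the two structural facts about Choi representations — that $\trace{J} = \dim(\mcl{X})$ for every channel, and that $\opnorm{R} \leq 1$ for every interactive measurement. I expect the latter to be the main obstacle, since one must be careful that the ancilla dimension is unrestricted and yet the operator-norm bound still holds uniformly; both facts should nonetheless follow directly from the definitions set up in the preliminaries.
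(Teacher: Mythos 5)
Your proposal is correct and follows essentially the same route as the paper: identify $A = \dim(\mcl{X})$ from trace preservation of the Choi matrix, get $D^2 \leq A\left(\ln(\dim(\mcl{X})) + \ln(\dim(\mcl{Y}))\right)$ from the constant-trace case of \cref{lem:dbound}, get $C = 1$ from $R \leq \identity_{\mcl{Y}} \otimes \sigma$ with $\sigma$ a density operator, and substitute into the $4BCD\sqrt{AT}$ bound of \cref{thm:sublinearity}. Your explicit verification of the hypothesis $\alpha\identity \in \mcl{K}$ via the completely depolarizing channel (with $\alpha = 1/\dim(\mcl{Y})$) is a welcome detail the paper only asserts implicitly, and the operator-norm bound you flag as a potential obstacle is immediate from the paper's definition of $\intmeas{\mcl{X}}{\mcl{Y}}$, exactly as you argue.
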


We remark that this regret bound holds for any loss functions $\ell_t$ which are $B$-Lipschitz.

\paragraph{Example \#2: Collections of inner products of pure states.}
Suppose we are now given a collection of $n$ pure states $\{ |\psi_1 \rangle, \ldots, |\psi_n \rangle \}$ and are tasked to learn the \emph{Gram} matrix $G$ of their inner products defined by $G_{i,j} := \langle \psi_i | \psi_j \rangle$.
It necessarily follows that $G \in \Pos{\complex^n}$ and $G_{i,i} = 1$, which are also sufficient conditions for a collection of $n$ pure states to exist that matches such a matrix.
Note that this implies $\trace{G} = \sum_{i=1}^n G_{i,i} = n$, and so $A = n$.
We may again use \cref{eq:dbound-const} of \cref{lem:dbound} to obtain $D^2 \leq A \ln (\dim(\complex^n)) = A \ln (n)$.

The \emph{co-objects} in this setting depend on the application, but most generally are a bounded subset of $\Herm{\complex^n}$.
Specifically, we consider $E$ such that $\opnorm{E} \leq 1$, which implies $C = 1$.
Substituting into the regret bound of \cref{thm:sublinearity}, we get the following.

\begin{lemma}
    We have the following regret bound for learning the Gram matrix of $n$ pure states
    \begin{equation}
        \regret{T} \leq \left( 4 B n \sqrt{\ln(n)} \right) \sqrt{T}\ .
    \end{equation}
\end{lemma}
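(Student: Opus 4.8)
The plan is to instantiate \cref{thm:sublinearity} with the parameters appropriate to learning the Gram matrix of $n$ pure states, exactly as was done in Example \#1 for quantum channels. First I would verify that the hypotheses of \cref{thm:sublinearity} are met: the action set is
\[
    \mcl{K} = \{ G \in \Pos{\complex^n} : G_{i,i} = 1 \text{ for all } i \in [n] \},
\]
which is the set of Gram matrices of $n$-tuples of pure states. This set is convex (an intersection of the positive semidefinite cone with affine constraints) and compact (it is closed, and bounded since $\abs{G_{i,j}} \leq \sqrt{G_{i,i} G_{j,j}} = 1$ by positive semidefiniteness). The loss functions are $f_t(G) = \ell_t(\inner{E_t}{G})$ with $\ell_t$ convex and $B$-Lipschitz, as assumed. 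Finally, $\identity \in \mcl{K}$ (the Gram matrix of any orthonormal set), so the condition $\alpha \identity \in \mcl{K}$ holds with $\alpha = 1$.

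Next I would pin down the three constants $A$, $C$, $D$. Since $\trace{G} = \sum_{i=1}^n G_{i,i} = n$ for every $G \in \mcl{K}$, we may take $A = n$; note $A = n \geq 1$, so \cref{lem:dbound} applies. For the co-objects, the stated assumption is $\opnorm{E_t} \leq 1$, giving $C = 1$. For the diameter, because every element of $\mcl{K}$ has the fixed trace $A = n$, the first case of \cref{lem:dbound} gives $D^2 \leq A \ln(\dim(\complex^n)) = n \ln(n)$, i.e.\ $D \leq \sqrt{n \ln(n)}$.

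Finally I would substitute into the bound $\regret{T} \leq 4BCD\sqrt{AT}$ of \cref{thm:sublinearity}:
\[
    \regret{T} \leq 4 B \cdot 1 \cdot \sqrt{n \ln(n)} \cdot \sqrt{n T} = 4 B n \sqrt{\ln(n)} \sqrt{T}\ ,
\]
which is the claimed bound. This argument is essentially a bookkeeping exercise once \cref{thm:sublinearity} and \cref{lem:dbound} are in hand; the only genuine point requiring a moment's thought — and thus the closest thing to an obstacle — is confirming that $\mcl{K}$ as defined is precisely the set of valid Gram matrices (so that the setting is nonvacuous and the interpretation is correct) and that it indeed satisfies the compactness, convexity, and $\identity \in \mcl{K}$ preconditions; the boundedness follows from the positive-semidefiniteness-plus-unit-diagonal constraints via the $2\times 2$ principal minor inequality noted above.
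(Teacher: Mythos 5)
Your proposal is correct and follows essentially the same route as the paper: take $A = n$ from the unit-diagonal constraint, $C = 1$ from $\opnorm{E_t} \leq 1$, bound $D^2 \leq n\ln(n)$ via the constant-trace case of \cref{lem:dbound}, and plug into the $4BCD\sqrt{AT}$ bound of \cref{thm:sublinearity}. Your explicit verification of convexity, compactness, and $\identity \in \mcl{K}$ is a welcome (if routine) addition that the paper only asserts implicitly.
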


Again, this holds for general loss functions.

\paragraph{A panoply of other quantum examples.}

We informally state the results here, and refer to \cref{sec:quantumApplications} for added background and analyses on the discussed quantum objects.

\begin{theorem}[Informal, see \cref{sec:quantumApplications} for definitions and formal statements]
    There exists an online learning algorithm for the following sets of quantum objects:
    \begin{itemize}
        \item Quantum states (acting on $\mcl{X}$) with regret at most
        \begin{equation}
            \bigo \left( \sqrt{\ln(\dim(\mcl{X}))} \; \sqrt{T} \right);
        \end{equation}

        \item Quantum effects (acting on $\mcl{X}$) with regret at most
        \begin{equation}
            \bigo \left( \dim(\mcl{X}) \sqrt{\ln(\dim(\mcl{X}))} \; \sqrt{T} \right);
        \end{equation}

        \item Separable quantum states (acting on $\mcl{X}$) with regret at most
        \begin{equation}
            \bigo \left( \sqrt{\ln(\dim(\mcl{X})) } \; \sqrt{T} \right);
        \end{equation}

        \item The collection of inner products of $n$ pure states with regret at most
        \begin{equation}
            \bigo \left( n \sqrt{\ln(n) } \; \sqrt{T} \right);
        \end{equation}

        \item Quantum channels (with input $\mcl{X}$ and output $\mcl{Y}$) with regret at most
        \begin{equation}
            \bigo \left( \dim(\mcl{X}) \sqrt{\ln(\dim(\mcl{X})) + \ln(\dim(\mcl{Y})) } \; \sqrt{T} \right);
        \end{equation}

        \item Quantum interactive measurements (interacting with quantum channels as described above) with regret at most (assuming $\dim(\mcl{X}) \geq 3$, see \cref{sec:QCIM})
        \begin{equation}
            \bigo \left( \dim(\mcl{X}) \dim(\mcl{Y}) \sqrt{\ln(\dim(\mcl{X}))+\ln(\dim(\mcl{Y})) } \; \sqrt{T} \right);
        \end{equation}

        \item Quantum measuring strategies\footnote{Also called quantum combs \citep{chiribella_quantum_2008}.} (an object that takes the inputs $\mathcal{X}_1, \mathcal{X}_2, \ldots, \mathcal{X}_n$ and gives the outputs $\mathcal{Y}_1, \mathcal{Y}_2, \ldots, \mathcal{Y}_n$) with regret at most
        \begin{equation}
            \bigo \left(
                \left( \prod_{i=1}^n \dim(\mcl{Y}_i) \right)
                \left( \prod_{i=1}^n \dim(\mcl{X}_i) \right)
                \left( \sum_{i=1}^n \ln(\dim(\mcl{X}_i)) + \sum_{i=1}^n \ln(\dim(\mcl{Y}_i)) \right)^{1/2}
            \; \sqrt{T} \right) ;
        \end{equation}

        \item Quantum measuring co-strategies (interacting with quantum channels as described above) with regret at most
        \begin{equation}
            \bigo \left(
                \left( \prod_{i=1}^n \dim(\mcl{Y}_i) \right)
                \left( \prod_{i=1}^n \dim(\mcl{X}_i) \right)
                \left( \sum_{i=1}^n \ln(\dim(\mcl{X}_i)) + \sum_{i=1}^n \ln(\dim(\mcl{Y}_i)) \right)^{1/2}
            \; \sqrt{T } \right).
        \end{equation}
    \end{itemize}
\end{theorem}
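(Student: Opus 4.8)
To establish the informal theorem, the plan is to instantiate \cref{thm:sublinearity} — i.e.\ run \cref{algo:qrftl} with the negative--entropy regularizer — once for each family of objects listed, so that all the heavy lifting (in particular the generalized Pinsker inequality of \cref{thm:generalized-Pinsker's-inequality}, which is already built into \cref{thm:sublinearity}) is reused. For a fixed family, the only tasks are: (a) fix a compact convex positive semidefinite representation $\mcl{K}$ of the objects together with the associated set of co-objects; (b) read off $A$, an upper bound on $\trace{X}$ over $X \in \mcl{K}$; (c) read off $C$, an upper bound on $\opnorm{E}$ over admissible co-objects $E$; (d) bound $D^2 = \max_{\varphi,\varphi' \in \mcl{K}} \{\entr{\varphi'} - \entr{\varphi}\}$ using \cref{lem:dbound}; and (e) check that $\alpha\identity \in \mcl{K}$ for some $\alpha > 0$. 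Substituting into $4BCD\sqrt{AT}$ and absorbing the ($B$-Lipschitz) constants into $\bigo(\cdot)$ then yields each displayed inequality.

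For quantum states, $\mcl{K} = \density{\mcl{X}}$ gives $A = 1$; the co-objects are effects $0 \le E \le \identity$, so $C = 1$; every $X \in \mcl{K}$ has $\trace{X} = 1$, so the first part of \cref{lem:dbound} gives $D^2 \le \ln(\dim(\mcl{X}))$; and the maximally mixed state witnesses $\alpha\identity \in \mcl{K}$. Separable states are handled identically, since $\separable{\mcl{X}} \subseteq \density{\mcl{X}}$ and the maximally mixed state is separable. For quantum effects, $\mcl{K} = \{E : 0 \le E \le \identity\}$ gives $A = \dim(\mcl{X})$; the co-objects are states, so $C = 1$; $\tfrac12\identity \in \mcl{K}$; and since $\dim(\mcl{X}) \ge e^{-1}\dim(\mcl{X})$ the second case of \cref{lem:dbound} gives $D^2 \le e^{-1}\dim(\mcl{X}) + \dim(\mcl{X})\ln(\dim(\mcl{X})) = \bigo(\dim(\mcl{X})\ln(\dim(\mcl{X})))$, so that $4BCD\sqrt{AT} = \bigo(\dim(\mcl{X})\sqrt{\ln(\dim(\mcl{X}))}\,\sqrt{T})$. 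The channel and Gram-matrix bounds are exactly the two examples already worked out above.

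The interactive-measurement, measuring-strategy (comb), and co-strategy cases follow the same five-step recipe using their Choi-type representations. Because these objects live on tensor products of the input/output spaces, the single-space dimension factors become the products $\prod_i \dim(\mcl{X}_i)$ and $\prod_i \dim(\mcl{Y}_i)$, and the logarithms become the sums $\sum_i \ln(\dim(\mcl{X}_i))$ and $\sum_i \ln(\dim(\mcl{Y}_i))$ after applying \cref{lem:dbound}. The trace bound $A$ and the co-object norm bound $C$ are then extracted from the (hierarchical, recursively defined) admissibility conditions for these objects; the precise definitions and these computations are deferred to \cref{sec:quantumApplications}.

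I expect the main obstacle to lie entirely in those last three families, and to be twofold. First, reading off tight values of $A$ and $C$ from the nested semidefinite characterizations of interactive measurements, combs, and co-strategies is more delicate than for the one-shot linear constraints defining states, effects, and channels. Second — and subtler — one must verify the hypothesis $\alpha\identity \in \mcl{K}$: a scaled identity need not itself be an admissible object of the given type, and for interactive measurements this is precisely where the side condition $\dim(\mcl{X}) \ge 3$ enters (see \cref{sec:QCIM}). Once these structural facts are in hand, every regret bound in the theorem is an immediate substitution into \cref{thm:sublinearity}.
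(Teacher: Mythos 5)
Your plan is the paper's own proof: instantiate \cref{thm:sublinearity} per object class, read off $A$ and $C$ from the positive semidefinite representation and the co-object set, bound $D^2$ via \cref{lem:dbound}, check $\alpha\identity \in \mcl{K}$, and substitute into $4BCD\sqrt{AT}$; all your per-class constants for states, separable states, effects, channels, and Gram matrices agree with \cref{sec:quantumApplications}. One correction to your closing speculation: the side condition $\dim(\mcl{X}) \geq 3$ for interactive measurements has nothing to do with the hypothesis $\alpha\identity \in \mcl{K}$ — that hypothesis holds for every $\dim(\mcl{X}) \geq 2$, since $\tfrac{1}{\dim(\mcl{X})}\identity \leq \identity_{\mcl{Y}} \otimes \tfrac{1}{\dim(\mcl{X})}\identity_{\mcl{X}}$ places a scaled identity in $\intmeas{\mcl{X}}{\mcl{Y}}$. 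Rather, it comes from the case split in \cref{lem:dbound}: with $A = \dim(\mcl{Y})$ and ambient space $\mcl{Y} \otimes \mcl{X}$, the condition $A \leq e^{-1}\dim(\mcl{Y}\otimes\mcl{X})$ reads $\dim(\mcl{X}) \geq e$, i.e.\ $\dim(\mcl{X}) \geq 3$ for integers; when $\dim(\mcl{X}) = 2$ the other branch applies and the algorithm still gives the (slightly different) sublinear bound recorded in \cref{app:interactive}. The analogous caveat explains the $\prod_i \dim(\mcl{X}_i) \geq 3$ assumption for measuring co-strategies.
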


\subsubsection{Limitations and future work}\label{sec:limitations}
A limitation of our bound is that it applies to general quantum objects, co-objects, and loss functions.
Obtaining regret bounds with regret strictly better than $\bigo(\sqrt{T})$ is possible in certain contexts.
For example, the VB-FTRL \citep{jezequel_efficient_2022} algorithm attains $\bigo(\ln (T))$ regret under the assumption of logarithmic losses, and has been extended to quantum states by \cite{tseng_online_2023}.
It might be possible to extend this work to obtain regret bounds closer to those in the above works by using specific loss functions and/or by taking advantage of the specific structure(s) of the set $\mcl{K} \subseteq \Pos{\mcl{X}}$.
We leave this as an interesting future research direction.

\subsection{Regret analysis}\label{sec:regret}

Applying the RFTL algorithm in the setting where $\mcl{K}$ is a generic compact subset of positive semidefinite matrices (denoted here as $\Pos{\mcl{X}}$) results in the following algorithm.

\begin{algorithm}[H]
\caption{RFTL for Online Learning of Quantum Objects}
\label{algo:qrftl}
\begin{algorithmic}[1]
    \Input $T$, $\eta > 0$, a convex and compact set $\mcl{K} \subseteq \Pos{\mcl{X}}$, and a bounded set $\mcl{E} \subseteq \Herm{\mcl{X}}$.
    \State Set initial hypothesis $\omega_1 \gets \argmax_{\varphi \in \mcl{K}} \{ \entr{\varphi} \}$.
    \For{$t \gets 1$ \To $T$}
        \State Predict $\omega_t$ and incur cost $f_t(\omega_t) \coloneqq \ell_t(\inner{E_t}{\omega_t})$ with $\ell_t : \reals \to \reals$ and $E_t \in \mcl{E}$.
        \State Let $\ell_t'(x)$ be a sub-derivative of $\ell_t$ with respect to $x$ and define
        \begin{equation}
            \nabla_t \gets \ell_t'(\inner{E_t}{\omega_t})E_t\ .
        \end{equation}
        \State Update decision according to the RFTL rule
        \begin{equation}\label{eq:learner-update-QRFTL}
            \omega_{t+1} \gets \argmin _{\varphi \in \mcl{K}}  \left\{ \eta\sum_{s = 1}^t \inner{\nabla_s}{\varphi} - \entr{\varphi} \right\}\ .
        \end{equation}
    \EndFor
\end{algorithmic}
\end{algorithm}
The regret bound for the general RFTL algorithm as given in \cref{lemma:RegretBoundRFTL} applies here also, its proof is essentially identical to that of \cite[Lemma~5.3]{hazan2016introduction}.
Note that we have further relaxed the assumption on the real-valued nature of $\mcl{K}$ which is now allowed to constitute any complex Hermitian object, the gradients at which can be analyzed using the techniques described by \cite{mordukhovich2006frechet}.
In compliance with \cref{algo:generalRFTL}, $\nabla_t$ is chosen to be the subgradient of our particular choice of $f_t$.
This can be shown using the chain rule for Fr\'{e}chet derivatives; see, for example, \cite{Coutts-2021a}.

The bound on the regret function given by \cref{lemma:RegretBoundRFTL} is often useful to assert learnability with RFTL.
To attain a regret bound sublinear in $T$, it suffices to show that $\inner{\nabla_t}{\omega_t - \omega_{t+1}}$ scales linearly with $\eta$ and then set $\eta \propto 1/\sqrt{T}$.
Our next lemma serves exactly this purpose.

\begin{lemma}\label{lemma:InnerProductBound}
    \cref{algo:qrftl} guarantees that for all $t \in \{1, \ldots, T-1\}$,
    \begin{equation}\label{eq:upperBoundInnerWithGrad}
        \inner{\nabla_t}{\omega_{t} - \omega_{t+1}} \leq 4 \eta \, \max\{\trace{\omega_t},\trace{\omega_{t+1}}\} \, \opnorm{\nabla_t}^2\ .
    \end{equation}
\end{lemma}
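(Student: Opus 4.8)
The plan is to exploit the variational (RFTL) characterization of $\omega_t$ and $\omega_{t+1}$ together with the generalized Pinsker's inequality (\cref{thm:generalized-Pinsker's-inequality}) to convert a difference of negative-entropy ``potentials'' into a trace-norm distance. Write $\Psi_t(\varphi) = \eta \sum_{s=1}^{t-1}\inner{\nabla_s}{\varphi} - \entr{\varphi}$, so that $\omega_t = \argmin_{\varphi\in\mcl{K}}\Psi_t(\varphi)$ and $\omega_{t+1} = \argmin_{\varphi\in\mcl{K}}\{\Psi_t(\varphi)+\eta\inner{\nabla_t}{\varphi}\}$. First I would apply H\"older's inequality to the left-hand side to get
\begin{equation}
    \inner{\nabla_t}{\omega_t - \omega_{t+1}} \leq \opnorm{\nabla_t}\,\trnorm{\omega_t - \omega_{t+1}}\ ,
\end{equation}
which reduces the task to bounding $\trnorm{\omega_t - \omega_{t+1}}$ by $4\eta\,\max\{\trace{\omega_t},\trace{\omega_{t+1}}\}\,\opnorm{\nabla_t}$.

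The next step is to bound that trace-norm distance via a Bregman-divergence argument. Let $\bregman{\entr{}}{\cdot}{\cdot}$ denote the Bregman divergence of the negative entropy; since $\entr{X} = -\trace{X\ln X}$ (up to sign conventions used in the excerpt) this Bregman divergence is exactly the unnormalized relative entropy $\relentr{\cdot}{\cdot}$ minus the linear $\trace{\cdot}$ correction appearing in \cref{thm:generalized-Pinsker's-inequality}. Using the first-order optimality condition for constrained convex minimization (the complex-operator analog of $\inner{\nabla f(Y)}{X-Y}\geq 0$ for all $X\in\mcl{K}$, promised by \cref{lemma:bregman-inner} and Solution~\#2 on Fr\'echet differentiation), applied at both $\omega_t$ and $\omega_{t+1}$ and added together, the linear terms telescope and one obtains the standard ``generalized Pythagorean'' bound
\begin{equation}
    \bregman{\entr{}}{\omega_{t+1}}{\omega_t} + \bregman{\entr{}}{\omega_t}{\omega_{t+1}} \leq \eta\,\inner{\nabla_t}{\omega_t - \omega_{t+1}}\ .
\end{equation}
Now I apply the generalized Pinsker's inequality to the left-hand side: each Bregman term is $\geq \frac{1}{4\max\{\trace{\cdot},\trace{\cdot}\}}\trnorm{\omega_t-\omega_{t+1}}^2$, so with $M\coloneqq\max\{\trace{\omega_t},\trace{\omega_{t+1}}\}$ we get
\begin{equation}
    \frac{1}{2M}\trnorm{\omega_t-\omega_{t+1}}^2 \leq \eta\,\inner{\nabla_t}{\omega_t-\omega_{t+1}} \leq \eta\,\opnorm{\nabla_t}\,\trnorm{\omega_t-\omega_{t+1}}\ .
\end{equation}
Dividing by $\trnorm{\omega_t-\omega_{t+1}}$ (the bound is trivial if this is zero) gives $\trnorm{\omega_t-\omega_{t+1}}\leq 2M\eta\,\opnorm{\nabla_t}$, and plugging back into the H\"older bound yields $\inner{\nabla_t}{\omega_t-\omega_{t+1}}\leq 2M\eta\,\opnorm{\nabla_t}^2$, which is even better than the claimed constant $4$; the factor of $4$ in the statement presumably absorbs the weaker constant $1/4$ (rather than $1/2$) that actually appears in \cref{thm:generalized-Pinsker's-inequality}, so carrying the honest constant $1/4$ through the two Bregman terms gives $\frac{1}{4M}\trnorm{\cdot}^2 \leq \eta\opnorm{\nabla_t}\trnorm{\cdot}$ and hence the stated $4\eta M\opnorm{\nabla_t}^2$.

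The main obstacle, and the place where care is genuinely needed, is the first-order optimality step: the function $\Phi_E(X) = \inner{E}{X} + \trace{X\ln X}$ is differentiable only on the positive-definite cone $\Pd{\mcl{X}}$, not on all of $\Pos{\mcl{X}}$, so I must check that the minimizers $\omega_t,\omega_{t+1}$ lie in the interior where the gradient $E + \identity + \ln(X)$ exists — this is exactly why the hypothesis $\alpha\identity\in\mcl{K}$ (equivalently, that $\mcl{K}$ has nonempty interior relative to the cone) matters, and why Solution~\#2 is invoked — and that the variational inequality $\inner{\nabla\Phi_{E}(\omega)}{\varphi-\omega}\geq 0$ genuinely holds for all $\varphi\in\mcl{K}$ using the complex Fr\'echet-derivative machinery. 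A secondary technical point is verifying that the Bregman divergence of $\entr{}$ is literally the ``$\relentr{P}{Q}-\trace{P-Q}$'' expression to which \cref{thm:generalized-Pinsker's-inequality} applies; this is a direct computation from $\nabla\trace{X\ln X} = \identity+\ln X$, but the sign/trace bookkeeping must match the theorem's hypotheses on general positive semidefinite operators with possibly unequal traces.
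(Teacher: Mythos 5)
Your proposal is correct, and it differs from the paper's proof at the key intermediate step in a way worth noting. The paper follows the same outer skeleton (H\"older twice, generalized Pinsker, positive definiteness of the iterates via \cref{lemma:minimzerIsPD}), but it bounds the \emph{one-sided} divergence $\bregman{-\entr{}}{\omega_t}{\omega_{t+1}} = \relentr{\omega_t}{\omega_{t+1}} - \trace{\omega_t - \omega_{t+1}}$ by $\eta\,\inner{\nabla_t}{\omega_t-\omega_{t+1}}$ (\cref{lemma:bregman-inner}) using only the first-order optimality condition at $\omega_{t+1}$ (via \cref{lemma:positiveInnerProd}, not \cref{lemma:bregman-inner} itself, which is where you attribute it) together with the value comparison $\F{t-1}{\omega_t}\leq\F{t-1}{\omega_{t+1}}$; no gradient at $\omega_t$ is used. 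You instead add the variational inequalities at \emph{both} iterates, and the linear terms do telescope to the symmetric bound $\bregman{-\entr{}}{\omega_t}{\omega_{t+1}} + \bregman{-\entr{}}{\omega_{t+1}}{\omega_t}\leq \eta\,\inner{\nabla_t}{\omega_t-\omega_{t+1}}$; applying \cref{cor:quantum-gen-pinsker} (constant $1/4$) to each term gives, with $M=\max\{\trace{\omega_t},\trace{\omega_{t+1}}\}$, the bound $\tfrac{1}{2M}\trnorm{\omega_t-\omega_{t+1}}^2\leq\eta\,\inner{\nabla_t}{\omega_t-\omega_{t+1}}$ and hence the constant $2$ rather than $4$; dropping one of the two (nonnegative, by Klein's inequality) Bregman terms recovers the paper's constant exactly. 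What your route costs is differentiability of $-\entr{}$ at $\omega_t$ as well as at $\omega_{t+1}$, but since every iterate is a minimizer of some $\F{E}{}$ over $\mcl{K}$ with $\alpha\identity\in\mcl{K}$, both are positive definite by \cref{lemma:minimzerIsPD}, so this is free, and you correctly identify it as the point needing care. Two harmless bookkeeping slips: the divergence involved is that of the \emph{negative} entropy (your sign caveat covers this), and your final sentence about constants is muddled --- two Bregman terms each with Pinsker constant $1/4$ give $1/(2M)$, not $1/(4M)$ --- but either accounting yields the stated inequality since your constant $2$ is stronger than the claimed $4$.
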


The detailed proof of \cref{lemma:InnerProductBound} makes use of several claims which we formalize in the technical sections of the paper and could be referred for details in \cref{sec:subLinRegret}.

We start by establishing in \cref{lemma:minimzerIsPD} that the learner's hypothesis $\omega_t$ given by \cref{eq:learner-update-QRFTL} in \cref{algo:qrftl} is positive definite under certain mild conditions on the set of allowed actions $\mcl{K}$.
We remark that the hypothesis $\omega_t$ is unique by strict convexity of the regularizer (see \cref{lemma:strictConvexityVonNeumann}), though this is not necessary for the proof.
In order to bound the inner product $\inner{\nabla_t}{\omega_t - \omega_{t+1}}$, it suffices to bound the trace norm $\trnorm{\omega_t - \omega_{t+1}}$ and $\opnorm{\nabla_t}$, as by H\"{o}lder's inequality we have $\inner{\nabla_t}{\omega_t - \omega_{t+1}} \leq \opnorm{\nabla_t} \trnorm{\omega_t - \omega_{t+1}}$.
To provide an upper bound on $\opnorm{\nabla_t}$, we can further decompose it as $\opnorm{\nabla_t} = \opnorm{\ell'_t(\inner{E_t}{\omega_t})} \opnorm{E_t}$ where $\ell'_t(\inner{E_t}{\omega_t})$ is the subgradient of the loss function $\ell_t$ at $\inner{E_t}{\omega_t}$ and $E_t$ is the corresponding co-object.
When $\ell_t$ is $B$-Lipschitz, it is well-known that $\opnorm{\ell'_t(\inner{E_t}{\omega_t})} \leq B$, for which we give an alternative proof in \cref{lemma:boundOnSubgrad}.
However, a bound on the operator norm of different types of co-objects (such as measurements, co-channels, co-strategies etc.) is a much lengthier discussion and is elaborated in parts of \cref{sec:quantumApplications}.
It remains to bound $\trnorm{\omega_t - \omega_{t+1}}$, which we accomplish in two different steps.
We first generalize the well-known quantum Pinsker's inequality to include possibly unnormalized positive semidefinite objects as given by our generalized Pinsker's inequality in \cref{thm:generalized-Pinsker's-inequality}.
As a second step, we upper bound the Bregman divergence (associated with the objective function in \cref{eq:learner-update-QRFTL}) in terms of the inner product $\inner{\nabla_t}{\omega_t - \omega_{t+1}}$ where we recall that the associated function needs to differentiable at $\omega_{t+1}$ for the Bregman divergence to be well-defined.
To this end, we invoke the idea of (Fr\'{e}chet) differentiability of a function at a point in the interior of its domain to calculate the derivative.
Note that the derivative is well-defined as $\omega_{t+1}$ is positive definite.
We combine these results by noting that the upper bound on $\trnorm{\omega_t - \omega_{t+1}}^2$ from \cref{thm:generalized-Pinsker's-inequality} is simply a scaled representation of Bregman divergence between $\omega_t$ and $\omega_{t+1}$ with the gradient of its associated function given by \cref{lem:OpDerivative}.
This provides an upper bound on $\trnorm{\omega_t - \omega_{t+1}}$ and thus results in an upper bound on $\inner{\nabla_t}{\omega_{t} - \omega_{t+1}}$ as given by \cref{lemma:InnerProductBound}.

We focus on the learning framework where the trace of a quantum object is always bounded.
The diameter of the object set $\mcl{K}$ with respect to the von Neumann entropy featured as the constant $D$ in the upper bound of the regret function in \cref{lemma:RegretBoundRFTL} can also be shown to be bounded using \cref{lem:dbound} (with proof details in \cref{sec:diameter}).
Finally, we bound the regret function in \cref{lemma:RegretBoundRFTL} using \cref{lemma:InnerProductBound}, and by setting $\eta$ proportional to $1/\sqrt{T}$, we get the bound in \cref{thm:sublinearity}.

\subsubsection{Alternative approach to attain sublinear regret} \label{sec:alternative}

We additionally present an alternative approach for attaining sublinear regret bounds when learning over sets of objects with unequal traces.
Given a convex and compact set $\mathcal{K} \subseteq \Pos{\mcl{X}}$ with the trace of its objects bounded by $A$, we can define the set
\begin{equation}\label{eq:bar-k}
    \overline{\mathcal{K}} = \left\{ \begin{bmatrix} X & 0 \\ 0 & A - \trace{X} \end{bmatrix} : X \in \mathcal{K} \right\}\ ,
\end{equation}
and note that all objects in $\overline{\mathcal{K}}$ have trace equal to $A$.
We similarly define
\begin{equation}
    \overline{\mathcal{E}} = \left\{ \begin{bmatrix} E & 0 \\ 0 & 0 \end{bmatrix} : E \in \mathcal{E} \right\}\ ,
\end{equation}
and note that there exist bijections between the original and modified sets.
We may thus run the learning algorithm over $\overline{\mathcal{K}}$ and $\overline{\mathcal{E}}$, while interfacing with the adversary over $\mathcal{K}$ and $\mathcal{E}$.
Given an additional assumption that $\mathcal{K} = \{ X \in \Pos{\mcl{X}} : \trace{X} \leq A \}$, the algorithm presented by \cite{aaronson2018online} yields a sublinear regret bound via a slight modification of the above setting.
For general $\mathcal{K}$, the proof method presented in \cref{sec:subLinRegret} (and other results therein) applied to the modified setting attains a sublinear regret bound, with the additional constant trace assumption allowing the use of Pinsker's inequality in place of \cref{thm:generalized-Pinsker's-inequality}.
This leads to the regret being bounded by $2 B C \overline{D} \sqrt{A T}$, where $\overline{D}$ is the diameter of $\overline{\mathcal{K}}$.
As $\overline{D}$ in general differs from $D$, the use of either regret bound may be beneficial in select circumstances.

\section{Preliminaries}\label{sec:prelims}

\subsection{Linear algebra notation and terminology}
In this section, we establish the notation and terminologies that we use in the rest of the paper.
We use calligraphic symbols like $\mathcal{X}$ and $\mathcal{Y}$ to refer to complex finite-dimensional Euclidean vector spaces.
We use the notation $\linear{\mcl{X}}$, $\Herm{\mcl{X}}$, $\Pos{\mcl{X}}$, and $\Pd{\mcl{X}}$ to denote the set of all linear operators, Hermitian operators, positive semidefinite operators, and positive definite operators acting on $\mcl{X}$, respectively.

We often use linear maps of the form $\Phi : \linear{\mcl{X}} \to \linear{\mcl{Y}}$.
We call $\Phi$ \emph{positive} if it holds that $\Phi(P) \in \Pos{\mcl{Y}}$ for every $P \in \Pos{\mcl{X}}$.
A map $\Phi$ is \emph{completely positive} if it holds that $\Phi \otimes \identity_{\linear{\mcl{Z}}}$, where $\identity_{\linear{\mcl{Z}}}$ is the identity map, is a positive map for every vector space $\mcl{Z}$.
A map is called \emph{trace-preserving} if it holds that $\trace{\Phi(X)} = \trace{X}$, for all $X \in \linear{X}$.
We denote the set of all completely positive and trace-preserving maps as $\channel{\mcl{X}}{\mcl{Y}}$.

We use various vector/matrix norms, depending on our needs.
The Euclidean norm of a vector $v \in \mcl{X}$ is given by
\begin{equation}
    \norm{v} \coloneqq \sqrt{\inner{v}{v}}\ .
\end{equation}
The trace of an operator $A \in \linear{\mathcal{X}}$ is given as $\trace{A}$.
The trace norm of an operator $X \in \linear{\mathcal{X}}$ is given by
\begin{equation}
    \trnorm{A} \coloneqq \trace{\sqrt{A\adj A}}\ .
\end{equation}
The operator norm of an operator $X \in \linear{\mcl{X}}$ is given by
\begin{equation}
    \opnorm{A} \coloneqq \max \left\{ \norm{Au} : u \in \mcl{X}, \norm{u} \leq 1 \right\}\ .
\end{equation}

    \subsection{Divergence-induced quantities and their properties}\label{definitions}

    Divergences and the quantities they induce are central to many statistical problems.
    As we see in the subsequent section of the background, online learning is no different in this respect.
    However, as this work considers general positive semidefinite objects rather than (normalized) probability measures as is standard, it is important to establish how we define these quantities in this work.
    We first address this.

    Throughout this work two types of Bregman divergence \citep{Bregman-1967a} are central.
    Bregman divergence has seen a great deal of use in classical \citep{Censor-1992a,Kiwiel-1997,Banerjee-2005a,hazan2016introduction}
    and more recently quantum \citep{Petz-2007a,Quadeer-2019a,He-2023a,Hayashi-2023a} contexts.
    We take the following definition.
    \begin{definition}
        Let $\mcl{K} \subset \mcl{X}$ be convex and $F:\mcl{K} \to \reals$ be convex on $\mcl{K}$ and differentiable on $\interior{\mcl{K}}$.
        The Bregman divergence of $F$ is defined for all $P \in \mcl{K}$ and $Q \in \interior{\mcl{K}}$ as
        \begin{equation}
            \bregman{F}{P}{Q} \coloneqq F(P) - F(Q) + \inner{\nabla F(Q)}{Q - P}\ .
        \end{equation}
    \end{definition}
    Note also that the standard definition requires $F$ to instead be strictly convex on $\mcl{K}$ in order for $\bregman{F}{}{}$ to be a metric.
    We remark that Bregman divergence is in effect the first-order Taylor expansion about $Q$ evaluated at point $P$, which is a useful intuition for later.

    For $P,Q \in \Pos{\mcl{X}}$, we define the quantum relative entropy as
    \begin{align}\label{eq:rel-ent-def}
        \relentr{P}{Q} &\coloneqq \begin{cases}
            \trace{P \ln(P) - P \ln(Q)} & \text{if}\ \image{P} \subseteq \image{Q}\\
            \infty & \text{otherwise}
        \end{cases}\ .
    \end{align}
    This quantity satisfies Klein's inequality \citep{klein1931quantenmechanischen,ruskai_inequalities_2002}, the proof of which can be found in \cite[Lemma~1]{lanford1968mean} and \cite[Proposition~5.22]{watrous2018theory}.
    \begin{remark}[Klein's inequality]\label{fact:nonNegRelEnt}
        Let $P,Q \in \Pos{\mcl{X}}$.
        Then, $\relentr{P}{Q} \geq \trace{P - Q}$, with equality if and only if $P = Q$.
    \end{remark}
    We define the quantum (a.k.a.~von Neumann) entropy for $X \in \Pos{\mcl{X}}$ by the relative entropy.
    \begin{equation}\label{eq:entropy-def}
        \entr{X} \coloneqq -\relentr{X}{\identity} = -\trace{X \ln(X)}\ .
    \end{equation}
    This quantity satisfies the following bounds for $\rho \in \density{\mcl{X}}$.
    \begin{equation}\label{eq:entropy-bounds}
        0 \leq S(\rho) \leq \ln(\dim(\mcl{X})) \ ,
    \end{equation}
    where the lower bound is saturated by $\rho = vv^{\ast}$ for some unit vector $v$, i.e.~$\|v\|_{2} = 1$, and the upper bound is saturated for $\rho = \dim(\mcl{X})^{-1} \identity_{\mcl{X}}$.
    We lastly note the following property.
    \begin{remark}\label{lemma:strictConvexityVonNeumann}
        Von Neumann entropy $\entr{}$ is strictly concave on $\Pos{\mcl{X}}$.
    \end{remark}
    There are various means of proving this, among them the strict concavity of Shannon entropy \cite{wehrl_general_1978}, or \cite[Theorem~5.23]{watrous2018theory} together with the equality condition of Klein's inequality.

    We remark upon two functionals that are used throughout this work that are induced by the negative relative entropy, $R(X) \coloneqq -S(X)$.
    The first is
    \begin{equation}\label{eq:Phi-E-defn}
        \F{E}{X} \coloneqq \inner{E}{X} + \reg{X}\ ,
    \end{equation}
    which is well-defined for any $E \in \Herm{\mcl{X}}$ and $X \in \Pos{\mcl{X}}$.
    The second is the Bregman divergence defined by the regularizer $\reg{}$,
    \begin{equation}\label{eq:neg-ent-breg-defn}
        \bregman{\reg{}}{P}{Q} \coloneqq \bregman{-\entr{}}{P}{Q} = \relentr{P}{Q} - \trace{P - Q}\ ,
    \end{equation}
    where this equality may be established by using the results in \cite{Carlen-2010} or \cref{sec:deriv-of-func-phi-E}.
    We may similarly define a Bregman divergence via $\Phi_E$ for any $E \in \Herm{\mcl{X}}$,
    \begin{align}\label{eq:breg-phi-e-defn}
        \bregman{\Phi_E}{P}{Q} = \Phi_{E}(P) -\Phi_{E}(Q) + \inner{\nabla \Phi_E (Q)}{Q - P}\ ,
    \end{align}
    which is well-defined for any $P \in \Pos{\mcl{X}}$ and $Q \in \Pd{\mcl{X}}$.
    This simplifies to
    \begin{equation}
        \bregman{\Phi_E}{P}{Q} = \relentr{P}{Q} + \trace{P - Q} = \bregman{-\entr{}}{P}{Q}\ ,
    \end{equation}
    which follows from \cref{eq:rel-ent-def}, \cref{eq:entropy-def}, and following lemma.
    \begin{lemma}\label{lem:OpDerivative}
        Let $E \in \Herm{\mcl{X}}$ and $X \in \Pd{\mcl{X}}$.
        The gradient of $\Phi_{E}(X) \coloneqq \langle E , X \rangle + \trace{X \ln (X)}$ is
        \begin{equation}
            \nabla \Phi_E(X) = E + \identity_{\mathcal{X}} + \ln(X) \ .
        \end{equation}
    \end{lemma}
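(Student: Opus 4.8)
The plan is to compute the Fréchet derivative of $\Phi_E$ at a point $X \in \Pd{\mcl{X}}$ by splitting $\Phi_E(X) = \inner{E}{X} + \trace{X \ln(X)}$ into its two summands and differentiating each. The first term is linear in $X$, so its Fréchet derivative at any point is the map $H \mapsto \inner{E}{H}$; under the identification of linear functionals on $\Herm{\mcl{X}}$ with Hermitian operators via the Hilbert--Schmidt inner product, this gradient is simply $E$. The work is therefore concentrated in showing that the trace functional $g(X) \coloneqq \trace{X \ln(X)}$ is Fréchet differentiable on $\Pd{\mcl{X}}$ with gradient $\identity_{\mcl{X}} + \ln(X)$.

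For the trace functional, I would invoke (and, where necessary, generalize) the results of \cite{Carlen-2010} on differentiating functionals of the form $X \mapsto \trace{f(X)}$. The heuristic is the scalar identity $\frac{d}{dt}(t\ln t) = 1 + \ln t$, which suggests $Dg(X)[H] = \trace{(\identity + \ln(X)) H}$. To make this rigorous, write $g(X) = \trace{h(X)}$ with $h(t) = t\ln t$, an operator-convex (in particular, continuously differentiable on $(0,\infty)$) scalar function, and use the fact that for such $h$ the map $X \mapsto \trace{h(X)}$ is Fréchet differentiable at any $X \in \Pd{\mcl{X}}$ with derivative $H \mapsto \trace{h'(X) H}$, where $h'(X)$ is the operator obtained by applying $h'(t) = 1 + \ln t$ to $X$ in its eigenbasis. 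Concretely, one diagonalizes $X = \sum_i \lambda_i P_i$, expands $h(X + tH)$ using the integral (Daleckii--Krein) representation or a direct eigenvalue perturbation argument, takes the trace (which kills the off-diagonal first-order contributions by cyclicity, leaving only the "diagonal" part $\sum_i h'(\lambda_i) \inner{P_i}{H}$), and checks that the remainder is $o(\norm{H})$ uniformly using continuity of $h'$ and boundedness of $X$ away from $0$. Then $\trace{h'(X)H} = \trace{(\identity + \ln(X))H}$, so the gradient of $g$ is $\identity_{\mcl{X}} + \ln(X)$, and adding the gradient of the linear term gives $\nabla \Phi_E(X) = E + \identity_{\mcl{X}} + \ln(X)$.

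The main obstacle is the rigorous justification of differentiating the trace functional $\trace{X\ln(X)}$ in the \emph{Fréchet} sense (not merely along lines), since $t \ln t$ is not differentiable at $0$ and one must control the behavior of the perturbation $X + tH$ uniformly over directions $H$ in a neighborhood of $X$; this is where one needs $X \in \Pd{\mcl{X}}$ so that all eigenvalues of $X + tH$ stay bounded away from $0$ for small $t$, and where the generalization of \cite{Carlen-2010} (which may be stated for a restricted class of functions or a restricted domain) is required. Once this is in hand, the rest of the argument — linearity of the first term and addition of gradients — is routine.
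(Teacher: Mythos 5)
Your proposal is correct and follows the same overall structure as the paper's proof: split $\Phi_E$ into the linear term and the trace functional, note that the linear term has Fr\'echet derivative $H \mapsto \inner{E}{H}$ and hence gradient $E$, reduce everything to the identity $D\trace{h(X)}[H] = \trace{h'(X)H}$ for $h(t) = t\ln t$ at $X \in \Pd{\mcl{X}}$, and read off the gradient via the link between Fr\'echet derivative, directional derivative, and gradient (\cref{prop:frech-relation}). Where you genuinely differ is in how that identity is justified. You propose spectral perturbation theory, i.e.\ the Daleckii--Krein representation: diagonalize $X$, use that $h \in C^{1}((0,\infty))$ so that $X \mapsto h(X)$ is Fr\'echet differentiable on $\Pd{\mcl{X}}$, and observe that taking the trace kills the off-diagonal blocks, leaving $\sum_i h'(\lambda_i)\inner{P_i}{H} = \trace{h'(X)H}$; you also correctly identify that positive definiteness is what keeps the spectrum away from $0$, where $t\ln t$ is not differentiable. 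The paper instead proves the identity essentially from scratch (\cref{lem:Carlen-Generalization}, generalizing \cite{Carlen-2010}): first for monomials via a combinatorial expansion of $(X+tH)^{q}$ and cyclicity of the trace, then for arbitrary $C^{1}$ functions by uniform polynomial approximation with an $\varepsilon/3$ argument modeled on \cite[Theorem V.3.3]{Bhatia-1997}, quoting \cite[Theorem 3.33]{Hiai-2014a} only for existence of the Fr\'echet derivative of $X \mapsto h(X)$. Your route is shorter if one is willing to cite the $C^{1}$ Daleckii--Krein theorem (or the Hiai--Petz differentiability result) as a black box, whereas the paper's route buys a self-contained argument, which was its stated motivation. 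The only soft spot in your sketch is the uniform $o(\norm{H})$ control of the remainder if you carry out the eigenvalue-perturbation argument by hand rather than invoking the known theorem; this is bookkeeping rather than a gap in the idea.
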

    While we believe the above lemma is to be expected, to the best of our knowledge, there is not a complete proof in the literature.
    As such, we establish this lemma in \cref{sec:deriv-of-func-phi-E}.

\section{Quantum applications}\label{sec:quantumApplications}

In this section, we apply our general regret bound for positive semidefinite objects in the landscape of quantum information.
Quantum objects often come in pairs, a physical object and what we refer to here as a co-object.
The co-object is another physical thing which \emph{interacts} with the object to create an outcome.
This concept is best illustrated with the example of measuring a quantum state, which we discuss shortly.

This section is organized as follows.
We introduce a quantum object and its positive semidefinite representation (which always corresponds to a convex, compact set and includes $\alpha \identity$ for some $\alpha > 0$) then continue to upper bound its largest trace $A$ and its operator norm $C$ so that we can apply our regret bound.

\subsection{Learning quantum states and measurements}

A quantum state is a physical object that can be described mathematically by a Hermitian positive semidefinite matrix $\rho$ with trace $1$.
We call such matrices \emph{density operators} and denote them by
\begin{equation}
    \density{\mcl{X}} \coloneqq \{ \rho \geq 0: \trace{\rho} = 1 \}\ ,
\end{equation}
where $\mcl{X}$ is the complex, finite-dimensional vector space that $\rho$ acts on.
If $\rho = vv\adj$ for some unit vector $v$, we call $\rho$ a \emph{pure state}.

\begin{itemize}
    \item When $\mcl{K} = \density{\mcl{X}}$, we have $A = 1$, by definition.
    \item When $\mcl{E} = \density{\mcl{X}}$, we have $C = 1$, attained at any pure state.
\end{itemize}

When you measure a quantum state, you observe an \emph{effect}, which occurs with some probability.
An effect is represented by a positive semidefinite operator $E$.
Born's rule states that the probability of observing \emph{E} while measuring a quantum state $\rho$ occurs with probability $\inner{E}{\rho}$.
Mathematically, the set of effects is denoted by
\begin{equation}
    \effect{\mcl{X}} \coloneqq \{ E \geq 0: E \leq \identity_\mcl{X} \}\ ,
\end{equation}
which can be checked to be the set of all operators that yield proper probabilities given by Born's rule.

\begin{itemize}
    \item When $\mcl{K} = \effect{\mcl{X}}$, we have $A = \dim(\mcl{X})$, attained at $E = \identity_{\mcl{X}}$.
    \item When $\mcl{E} = \effect{\mcl{X}}$, we have $C = 1$, since every effect has eigenvalues between $0$ and $1$.
\end{itemize}

We now apply our regret bound to the online learning of quantum states.
Here, the quantum objects are quantum states and the co-objects are quantum effects.
As mentioned previously, this setting was studied by \cite{aaronson2018online} and we recover their regret bound (up to a constant\footnote{Note that we lose a constant factor since our generalized Pinsker's inequality has a slightly smaller constant in the variable-trace regime as compared to the standard Pinsker's inequality.}), below.

\begin{corollary}[Online learning of quantum states, see also \cite{aaronson2018online}]
\label{app:states}
For $\mcl{K} = \density{\mcl{X}}$ and for $\mcl{E} = \effect{\mcl{X}}$, we have
    \begin{equation}
        \regret{T} \leq \left( 4 B \sqrt{ \ln(\dim(\mcl{X}))} \right) \sqrt{T}\ .
    \end{equation}
\end{corollary}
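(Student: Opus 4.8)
The plan is to instantiate the general regret bound of \cref{thm:sublinearity} with $\mcl{K} = \density{\mcl{X}}$ and $\mcl{E} = \effect{\mcl{X}}$, so the entire task reduces to checking the hypotheses and pinning down the three constants $A$, $C$, and $D$ that appear in the bound $\regret{T} \leq 4BCD\sqrt{AT}$.

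First I would verify the hypotheses. The set $\density{\mcl{X}}$ is convex and compact, being the intersection of the positive semidefinite cone with the affine hyperplane $\trace{X} = 1$ (which renders it bounded), and $\ell_t$ is convex and $B$-Lipschitz by assumption, so $f_t(\omega) = \ell_t(\inner{E_t}{\omega})$ has the required form. The only remaining hypothesis is that $\alpha \identity \in \mcl{K}$ for some $\alpha > 0$; taking $\alpha = \dim(\mcl{X})^{-1}$ gives the maximally mixed state $\dim(\mcl{X})^{-1}\identity \in \density{\mcl{X}}$, so this holds.

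Next I would read off the constants. Every $\rho \in \density{\mcl{X}}$ has $\trace{\rho} = 1$, so we may take $A = 1$. Every effect $E \in \effect{\mcl{X}}$ satisfies $0 \leq E \leq \identity$, hence all of its eigenvalues lie in $[0,1]$ and $\opnorm{E} \leq 1$, so we may take $C = 1$. For the diameter $D^2 = \max_{\varphi, \varphi' \in \density{\mcl{X}}} \{ \entr{\varphi'} - \entr{\varphi} \}$, I would invoke the entropy bounds in \eqref{eq:entropy-bounds} (equivalently, the first case of \cref{lem:dbound} with $A = 1 \geq 1$): since $0 \leq \entr{\rho} \leq \ln(\dim(\mcl{X}))$ for every $\rho \in \density{\mcl{X}}$, we get $D^2 \leq \ln(\dim(\mcl{X})) - 0 = \ln(\dim(\mcl{X}))$.

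Finally, substituting $A = 1$, $C = 1$, and $D = \sqrt{\ln(\dim(\mcl{X}))}$ into \cref{thm:sublinearity} yields $\regret{T} \leq 4B\sqrt{\ln(\dim(\mcl{X}))}\sqrt{T}$, as claimed. There is no real obstacle here: all the substantive work — the generalized Pinsker inequality, the stability estimate of \cref{lemma:InnerProductBound}, the Fr\'echet-derivative computations, and the diameter bound of \cref{lem:dbound} — is already carried out in the proof of \cref{thm:sublinearity}, and this corollary is a direct specialization. The only points I would state with care are the membership $\dim(\mcl{X})^{-1}\identity \in \density{\mcl{X}}$ and the fact that the constant $4$ is slightly worse than what the standard Pinsker inequality gives in the equal-trace regime, as flagged in the accompanying footnote; one could also remark that, because the RFTL update over $\density{\mcl{X}}$ has the Gibbs-state closed form $e^{-E}/\trace{}(e^{-E})$, the resulting algorithm matches that of \cite{aaronson2018online}.
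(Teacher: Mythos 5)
Your proposal is correct and matches the paper's own argument: the paper likewise obtains this corollary by reading off $A=1$ from the trace constraint on $\density{\mcl{X}}$, $C=1$ from $0 \leq E_t \leq \identity$, and $D^2 \leq \ln(\dim(\mcl{X}))$ from the constant-trace case of \cref{lem:dbound} (equivalently \eqref{eq:entropy-bounds}), then substituting into the $4BCD\sqrt{AT}$ bound of \cref{thm:sublinearity}. Your extra checks (the maximally mixed state witnessing $\alpha\identity \in \mcl{K}$, and the remark about the constant $4$ versus \cite{aaronson2018online}) are consistent with what the paper notes.
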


Note that we can immediately apply our result in the other direction as well.
If the object that we are trying to learn is an effect itself, then the co-object is a quantum state.
Note that effects are not constant-trace sets, hence our generalized Pinsker's inequality is key to unlocking this result.

\begin{corollary}[Online learning of quantum effects]
\label{app:effects}
For $\mcl{K} = \effect{\mcl{X}}$ and for $\mcl{E} = \density{\mcl{X}}$, we have
    \begin{equation}
        \regret{T} \leq \left( 4B \, {\dim(\mcl{X})} \sqrt{e^{-1} + \ln(\dim(\mcl{X}))} \right) \sqrt{T}\ .
    \end{equation}
\end{corollary}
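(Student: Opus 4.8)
The plan is to derive this corollary as a direct instantiation of \cref{thm:sublinearity}, taking $\mcl{K} = \effect{\mcl{X}}$ and $\mcl{E} = \density{\mcl{X}}$; all that is needed is to identify the three constants $A$, $C$, and $D$ appearing in the bound $\regret{T} \leq 4BCD\sqrt{AT}$. First I would verify the standing hypotheses of that theorem: the effect set $\effect{\mcl{X}} = \{E : 0 \leq E \leq \identity\}$ is convex and compact, the loss functions $\ell_t$ are $B$-Lipschitz by assumption, and $\alpha \identity \in \effect{\mcl{X}}$ for any $\alpha \in (0,1]$ (say $\alpha = 1/2$), so \cref{thm:sublinearity} indeed applies.

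Next I would pin down $A$ and $C$. Every $E \in \effect{\mcl{X}}$ has spectrum contained in $[0,1]$, hence $\trace{E} \leq \dim(\mcl{X})$, with equality at $E = \identity_{\mcl{X}}$; so $A = \dim(\mcl{X})$ is the (tight) trace bound. For the co-objects, any $\rho \in \density{\mcl{X}}$ is positive semidefinite with $\trace{\rho} = 1$, so $\opnorm{\rho} \leq \trace{\rho} = 1$ (attained on pure states); thus $C = 1$.

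The one genuinely nontrivial ingredient is the bound on $D^2 = \max_{\varphi, \varphi' \in \mcl{K}} \{\entr{\varphi'} - \entr{\varphi}\}$, for which I would invoke the second statement of \cref{lem:dbound}. Since $\trace{X} \leq \dim(\mcl{X})$ for all $X \in \effect{\mcl{X}}$, we apply the lemma with $A = \dim(\mcl{X}) \geq 1$; as $\dim(\mcl{X}) \geq e^{-1}\dim(\mcl{X})$ always holds, we are in the second (``large $A$'') branch, which gives $D^2 \leq e^{-1}\dim(\mcl{X}) + A\ln(A) = \dim(\mcl{X})\,(e^{-1} + \ln(\dim(\mcl{X})))$, i.e.\ $D = \sqrt{\dim(\mcl{X})}\,\sqrt{e^{-1} + \ln(\dim(\mcl{X}))}$. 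Substituting $A = \dim(\mcl{X})$, $C = 1$, and this value of $D$ into $\regret{T} \leq 4BCD\sqrt{AT}$ then yields $\regret{T} \leq 4B\,\dim(\mcl{X})\,\sqrt{e^{-1} + \ln(\dim(\mcl{X}))}\,\sqrt{T}$, as claimed.

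I do not expect a serious obstacle here: once \cref{thm:sublinearity} and \cref{lem:dbound} are in hand, this is a bookkeeping argument, and the only place to be careful is checking which branch of \cref{lem:dbound} is active. The conceptual point worth emphasizing is that $\effect{\mcl{X}}$ is \emph{not} a constant-trace set, so unlike the density-operator case (\cref{app:states}) this application genuinely uses the variable-trace machinery --- the generalized Pinsker's inequality (\cref{thm:generalized-Pinsker's-inequality}) that underlies \cref{thm:sublinearity}, together with the large-$A$ branch of \cref{lem:dbound} --- rather than the standard quantum Pinsker inequality.
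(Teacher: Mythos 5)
Your proposal is correct and follows the same route as the paper: take $A=\dim(\mcl{X})$ (attained at $\identity_{\mcl{X}}$), $C=1$ for density-operator co-objects, apply the variable-trace case of \cref{lem:dbound} in its second branch to get $D^2 \leq \dim(\mcl{X})\left(e^{-1}+\ln(\dim(\mcl{X}))\right)$, and substitute into the $4BCD\sqrt{AT}$ bound of \cref{thm:sublinearity}. Your closing remark about the variable-trace machinery is exactly the point the paper itself makes about this corollary.
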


One of the most examined subsets of quantum states are those that are entangled, and by complementarity, not entangled.
To consider this setting, we must have that the vector space $\mcl{X}$ decomposes into the tensor product $\mcl{X}_1 \otimes \mcl{X}_2$.
The set of states that are not entangled are called separable, and are defined as
\begin{equation}
    \separable{\mcl{X}_1:\mcl{X}_2} \coloneqq \conv{} \{ \rho_1 \otimes \rho_2 :
    \rho_1 \in \density{\mcl{X}_1} \text{ and }
    \rho_2 \in \density{\mcl{X}_2} \}\ ,
\end{equation}
where $\conv{}$ denotes the convex hull.

If we wish to consider learning separable states, there are two ways to choose co-objects.
One may consider co-objects to be so-called \emph{entanglement witnesses}, i.e., the most general co-object that yields a proper probability under Born's rule.
While our bound applies in this setting, entanglement witnesses do not have any physical interpretation (as far as we are aware).
However, we can still choose the co-objects as effects, as we do below.

\begin{corollary}[Online learning of separable states]
\label{app:sep}
For $\mcl{K} = \separable{\mcl{X}_1:\mcl{X}_2}$ and $\mcl{E} = \effect{\mcl{X}_1 \otimes \mcl{X}_2}$, we have
    \begin{equation}
\regret{T}
\leq \left( 4 B \sqrt{ \ln(\dim(\mcl{X}))} \right) \sqrt{T}\ .
    \end{equation}
\end{corollary}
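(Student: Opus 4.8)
The plan is to obtain this as an immediate corollary of \cref{thm:sublinearity}: once we check that $\mcl{K} = \separable{\mcl{X}_1:\mcl{X}_2}$ meets the hypotheses of that theorem and pin down the constants $A$, $C$, and $D$, the stated bound drops out by arithmetic (recalling $\dim(\mcl{X}) = \dim(\mcl{X}_1)\dim(\mcl{X}_2)$).

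First I would verify the structural hypotheses of \cref{thm:sublinearity}. Convexity of $\separable{\mcl{X}_1:\mcl{X}_2}$ is immediate from its definition as a convex hull. Compactness follows because the set of product states $\{\rho_1\otimes\rho_2 : \rho_1\in\density{\mcl{X}_1},\ \rho_2\in\density{\mcl{X}_2}\}$ is the image of the compact set $\density{\mcl{X}_1}\times\density{\mcl{X}_2}$ under the continuous tensor-product map, hence compact, and the convex hull of a compact set in a finite-dimensional space is compact. For the $\alpha\identity$ requirement, the key observation is that the maximally mixed state factorizes, $\dim(\mcl{X})^{-1}\identity_{\mcl{X}} = \bigl(\dim(\mcl{X}_1)^{-1}\identity_{\mcl{X}_1}\bigr)\otimes\bigl(\dim(\mcl{X}_2)^{-1}\identity_{\mcl{X}_2}\bigr)$, so it is a product state and therefore lies in $\mcl{K}$; thus $\alpha = \dim(\mcl{X})^{-1}$ works.

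Next I would bound the three constants. Every product state has trace $1$, and since the trace is linear this persists under convex combinations, so $\mcl{K}\subseteq\density{\mcl{X}}$ and we may take $A = 1$. Every co-object $E\in\effect{\mcl{X}_1\otimes\mcl{X}_2}$ satisfies $0\le E\le\identity$, hence $\opnorm{E}\le 1$, giving $C = 1$. For the diameter, since $\mcl{K}\subseteq\density{\mcl{X}}$ has constant trace $A = 1\ge 1$, \cref{lem:dbound} yields $D^2\le A\ln(\dim(\mcl{X})) = \ln(\dim(\mcl{X}))$; equivalently this is the entropy bound \eqref{eq:entropy-bounds}, $0\le\entr{\varphi}\le\ln(\dim(\mcl{X}))$ for $\varphi\in\density{\mcl{X}}$, so $\entr{}(\varphi')-\entr{\varphi}\le\ln(\dim(\mcl{X}))$ for all $\varphi,\varphi'\in\mcl{K}$.

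Substituting $A = 1$, $C = 1$, and $D = \sqrt{\ln(\dim(\mcl{X}))}$ into $\regret{T}\le 4BCD\sqrt{AT}$ from \cref{thm:sublinearity} gives $\regret{T}\le 4B\sqrt{\ln(\dim(\mcl{X}))}\sqrt{T}$, as claimed, with the algorithm being \cref{algo:qrftl} run on $\mcl{K}$ and $\mcl{E}$. There is essentially no hard step: the corollary is a clean specialization of the general theorem, and the only points worth care are the factorization showing $\separable{\mcl{X}_1:\mcl{X}_2}$ contains $\alpha\identity$, and the fact that taking the generous co-object set $\effect{\mcl{X}_1\otimes\mcl{X}_2}$ (rather than a restricted local-effect set) still keeps $C = 1$, so that the resulting bound coincides with that for learning general quantum states in \cref{app:states}.
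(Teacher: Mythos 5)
Your proposal is correct and follows the same route the paper uses: separable states form a convex, compact subset of $\density{\mcl{X}}$ containing the maximally mixed state, so $A = 1$, effects give $C = 1$, the constant-trace case of \cref{lem:dbound} gives $D^2 \leq \ln(\dim(\mcl{X}))$, and \cref{thm:sublinearity} yields the bound. Your explicit verification of compactness and of $\alpha\identity \in \mcl{K}$ via the factorization of the maximally mixed state is exactly the (implicit) check the paper relies on.
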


Note that while the above regret bound is identical to that in  \cref{app:states}, the RFTL algorithm does make use of the added structure.
In particular, at any point in time, the hypothesis state $\omega_t$ is a separable state.
If we ran RFTL with $\mcl{K} = \density{\mcl{X}_1 \otimes \mcl{X}_2}$ in this setting, it could be the case that $\omega_t$ is not separable at every time step.
This could be important in certain contexts where one wishes to maintain separability.

\subsection{Learning a collection of inner products of pure states}

Suppose now that the object to be learned is not just a single quantum state, but rather a collection of pure states.
Suppose we have $n$ pure states $\{ v_1 v_1\adj, \ldots, v_n v_n\adj \}$.
We denote the Gram matrix of these vectors as $G$ which is defined element-wise as
\begin{equation}
    G_{i,j} := \langle v_i, v_j \rangle\ ,
\end{equation}
i.e., the matrix of inner products.
A matrix is a Gram matrix if and only if it is positive semidefinite, and if those vectors all have unit norm, then $G$ also satisfies $G_{i,i} = 1$ for all $i$.
Thus, the set of Gram matrices corresponding to pure quantum states is a convex and compact set given by
\begin{equation}
    \qgram{n} := \{ G \in \Pos{\complex^n} : G_{1,1} = \cdots = G_{n,n} = 1 \}\ .
\end{equation}
In the online learning context with (quantum) Gram matrices, one is trying to learn the collection of pair-wise inner products.

The co-objects in this setting can vary depending on the application.
To keep it general, we use the set $\mcl{E} = \Herm{\complex^n}$ with bounded norm, i.e., $E \in \mcl{E}$ if and only if $\opnorm{E} \leq 1$.
This is the unit ball, and, for brevity, we denote this as
\begin{equation}
    \ball{n} := \{ E \in \Herm{\complex^n} : \opnorm{E} \leq 1 \}\ .
\end{equation}
In the online learning framework, the algorithm receives feedback depending on $\inner{G}{E}$.
Note that by bounding the norm of $E$, we do not get inner products that grow out of control needlessly.

\begin{itemize}
    \item When $\mcl{K} = \qgram{n}$, we have $A = n$, immediately from the definition.
    \item When $\mcl{E} = \ball{n}$, we have $C = 1$, by definition.
\end{itemize}

In this setting, we have the following regret bound.

\begin{corollary}[Online learning of inner products of $n$ pure states]
\label{app:qgram}
For $\mcl{K} = \qgram{n}$ and for $\mcl{E} = \ball{n}$, we have
    \begin{equation}
\regret{T}
\leq \left( 4 B n \sqrt{\ln(n)} \right) \sqrt{T}\ .
    \end{equation}
\end{corollary}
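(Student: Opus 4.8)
The plan is to obtain the bound as a direct instantiation of \cref{thm:sublinearity}, so the real work reduces to checking that theorem's hypotheses for $\mcl{K} = \qgram{n}$ and $\mcl{E} = \ball{n}$ and then reading off the three constants $A$, $C$, and $D$.

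First I would verify the structural requirements on $\mcl{K}$. The set $\qgram{n}$ is the intersection of the closed convex cone $\Pos{\complex^n}$ with the affine subspace $\{G : G_{i,i} = 1 \text{ for all } i\}$, hence convex and closed; it is bounded because every $G \in \qgram{n}$ is positive semidefinite with $\trace{G} = \sum_{i=1}^n G_{i,i} = n$, so it is compact. The identity $\identity_{\complex^n}$ lies in $\qgram{n}$ (it is positive definite with all diagonal entries equal to $1$), so the hypothesis that $\alpha \identity \in \mcl{K}$ holds with $\alpha = 1$. This also identifies $A = n$ as an upper bound (in fact the exact value) on $\trace{G}$ over $\mcl{K}$.

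Next I would handle the co-objects and the diameter. By definition of $\ball{n}$, every $E \in \mcl{E}$ satisfies $\opnorm{E} \leq 1$, so we may take $C = 1$. For $D^2 = \max_{\varphi,\varphi' \in \mcl{K}}\{\entr{}(\varphi') - \entr{\varphi}\}$, I would invoke the first case of \cref{lem:dbound}: since $\trace{G} = n$ is constant over $\mcl{K}$ and $n \geq 1$, we get $D^2 \leq n \ln(\dim(\complex^n)) = n\ln(n)$, so $D \leq \sqrt{n\ln(n)}$.

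Finally, with each loss function of the form $f_t(\omega) = \ell_t(\inner{E_t}{\omega})$ for $\ell_t$ convex and $B$-Lipschitz (the standing assumption in the Gram-matrix setting, where feedback depends on $\inner{G}{E}$), \cref{thm:sublinearity} yields $\regret{T} \leq 4BCD\sqrt{AT} \leq 4B\sqrt{n\ln(n)}\,\sqrt{nT} = 4Bn\sqrt{\ln(n)}\,\sqrt{T}$, as claimed. There is no serious obstacle here beyond the bookkeeping: all of the substance lives in \cref{thm:sublinearity} and \cref{lem:dbound}, and the only point deserving a moment's care is confirming that $\qgram{n}$ really is compact and contains a scaled identity, which is immediate from its fixed-trace, positive-semidefinite description.
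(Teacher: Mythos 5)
Your proposal is correct and follows essentially the same route as the paper: identify $A = n$ from the unit diagonal, $C = 1$ from the definition of $\ball{n}$, bound $D^2 \leq n\ln(n)$ via the constant-trace case of \cref{lem:dbound}, and substitute into the $4BCD\sqrt{AT}$ bound of \cref{thm:sublinearity}. Your explicit verification that $\qgram{n}$ is compact, convex, and contains $\identity$ (so $\alpha = 1$ works) is a nice touch the paper only states implicitly, but the substance is identical.
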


\subsection{Learning quantum channels and interactive measurements} \label{sec:QCIM}

A \emph{quantum channel} is a physical operation that can be performed on quantum states.
For this, we require these linear maps to be \emph{completely positive} and \emph{trace-preserving}, as defined in \cref{sec:prelims}.
Suppose $\Phi$ is a quantum channel which maps $\density{\mcl{X}}$ to $\density{\mcl{Y}}$ where $\mcl{Y}$ is another complex finite-dimensional vector space ($\dim(\mcl{X})$ and $\dim(\mcl{Y})$ need not be related).
We can represent $\Phi$ by its Choi matrix $J$ belonging to the following set
\begin{equation}
    \channel{\mcl{X}}{\mcl{Y}} \coloneqq \{ J \in \Pos{\mcl{Y} \otimes \mcl{X}} : \ptrace{\mcl{Y}}{J} = \identity_{\mcl{X}} \}\ ,
\end{equation}
where $\ptrace{\mcl{Y}}{}$ is the \emph{partial trace over $\mcl{Y}$} which is the unique linear map that satisfies the equation
\begin{equation}
    \ptrace{\mathcal{Y}}{X \otimes Y} = \trace{Y} \cdot X
\end{equation}
for all $X \in \Herm{\mcl{X}}$ and $Y \in \Herm{\mcl{Y}}$.

\begin{itemize}
    \item When $\mcl{K} = \channel{\mcl{X}}{\mcl{Y}}$, we have $A = \trace{\ptrace{\mcl{Y}}{J}} = \trace{J} = \trace{\identity_{\mcl{X}}} =  \dim(\mcl{X})$, since the partial trace is trace-preserving.
    \item When $\mcl{E} = \channel{\mcl{X}}{\mcl{Y}}$, we have $C = \opnorm{J} \leq \trnorm{J} = \trace{J} = A = \dim(\mcl{X})$ noting the trace norm is equal to the trace since it is positive semidefinite.
    This bound can be attained by any linear map which performs an isometry.
\end{itemize}

To study the online learning of quantum channels, we now discuss their so-called \emph{co-objects}.
An \emph{interactive measurement} is best described as the most general way to interact with a quantum channel.
Suppose one were to create a quantum state $\rho \in \density{\mcl{X} \otimes \mcl{Z}}$, where $\mcl{Z}$ is some complex finite-dimensional vector space corresponding to a \emph{memory}.
Then if we apply $\Phi$ to just the $\mcl{X}$ part of $\rho$, we would have a new quantum state $\rho' \in \density{\mcl{Y} \otimes \mcl{Z}}$.
If we measure $\rho'$, we observe the effect $E$ with some probability.
This entire process is called an interactive measurement and can be represented by its Choi matrix which belongs to the set
\begin{equation}
    \intmeas{\mcl{X}}{\mcl{Y}} \coloneqq \{ R \in \Pos{\mcl{Y} \otimes \mcl{X}} : R \leq \identity_{\mcl{Y}} \otimes \sigma, \, \sigma \in \density{\mcl{X}} \}\ .
\end{equation}

\begin{itemize}
    \item When $\mcl{K} = \intmeas{\mcl{X}}{\mcl{Y}}$, we have $A = \dim(\mcl{Y})$.
    \item When $\mcl{E} = \intmeas{\mcl{X}}{\mcl{Y}}$, we have $C = 1$.
    Both of these bounds are attained by any $R$ of the form $\identity_{\mcl{Y}} \otimes \sigma$.
\end{itemize}

The last ingredient for the online learning setting is that the probability of observing the effect associated with $R$ when the interactive measurement interacts with the channel $J$ is given by $\inner{J}{R}$, see \cite{Gutoski_2007}.
This brings us to the following regret bound.

\begin{corollary}[Online learning of quantum channels]
\label{app:channels}
For $\mcl{K} = \channel{\mcl{X}}{\mcl{Y}}$ and $\mcl{E} = \intmeas{\mcl{X}}{\mcl{Y}}$, we have
    \begin{equation}
        \regret{T} \leq \left( 4B \, \dim(\mcl{X}) \sqrt{\ln(\dim(\mcl{X})) + \ln(\dim(\mcl{Y}))} \right) \sqrt{T}\ .
    \end{equation}
\end{corollary}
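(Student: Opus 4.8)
The plan is to obtain this corollary as a direct instantiation of \cref{thm:sublinearity}, so the work is entirely in verifying the hypotheses and reading off the three constants $A$, $C$, and $D$ for the pair $(\mcl{K},\mcl{E}) = (\channel{\mcl{X}}{\mcl{Y}}, \intmeas{\mcl{X}}{\mcl{Y}})$. First I would check the standing assumptions. The set $\channel{\mcl{X}}{\mcl{Y}}$ is the intersection of $\Pos{\mcl{Y} \otimes \mcl{X}}$ with the affine subspace $\{J : \ptrace{\mcl{Y}}{J} = \identity_{\mcl{X}}\}$, hence convex and closed; it is bounded because any $J$ in it has $\trace{J} = \trace{\ptrace{\mcl{Y}}{J}} = \trace{\identity_{\mcl{X}}} = \dim(\mcl{X})$, so it is compact. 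For the interior-point condition $\alpha\identity \in \mcl{K}$, I would exhibit the Choi matrix $J_\star = \dim(\mcl{Y})^{-1}\,\identity_{\mcl{Y}\otimes\mcl{X}}$ of the completely depolarizing channel and check that $\ptrace{\mcl{Y}}{J_\star} = \identity_{\mcl{X}}$; thus $\alpha = \dim(\mcl{Y})^{-1} > 0$ works. The loss functions are of the form $f_t(\omega) = \ell_t(\inner{E_t}{\omega})$ with $\ell_t$ convex and $B$-Lipschitz by assumption, so \cref{thm:sublinearity} applies.

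Next I would pin down the constants. The trace identity above shows every object of $\mcl{K}$ has trace exactly $\dim(\mcl{X})$, so $A = \dim(\mcl{X})$ is a (tight) trace bound. For $C$, let $R \in \intmeas{\mcl{X}}{\mcl{Y}}$; by definition $0 \leq R \leq \identity_{\mcl{Y}} \otimes \sigma$ for some $\sigma \in \density{\mcl{X}}$, and since the operator norm is monotone on positive semidefinite operators and $\opnorm{\identity_{\mcl{Y}} \otimes \sigma} = \opnorm{\sigma} \leq 1$, we get $\opnorm{R} \leq 1$; hence $C = 1$. For $D$, the ambient space here is $\mcl{Y} \otimes \mcl{X}$ of dimension $\dim(\mcl{Y})\dim(\mcl{X})$, and since $\mcl{K}$ has constant trace $A = \dim(\mcl{X}) \geq 1$, the first case of \cref{lem:dbound} gives $D^2 \leq A\ln(\dim(\mcl{Y} \otimes \mcl{X})) = A\bigl(\ln(\dim(\mcl{X})) + \ln(\dim(\mcl{Y}))\bigr)$.

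Finally I would substitute into $\regret{T} \leq 4BCD\sqrt{AT}$ from \cref{thm:sublinearity}. With $C = 1$ and $D \leq \sqrt{A(\ln(\dim(\mcl{X})) + \ln(\dim(\mcl{Y})))}$ we have $D\sqrt{A} \leq A\sqrt{\ln(\dim(\mcl{X})) + \ln(\dim(\mcl{Y}))} = \dim(\mcl{X})\sqrt{\ln(\dim(\mcl{X})) + \ln(\dim(\mcl{Y}))}$, which yields exactly $\regret{T} \leq \bigl(4B\dim(\mcl{X})\sqrt{\ln(\dim(\mcl{X})) + \ln(\dim(\mcl{Y}))}\bigr)\sqrt{T}$, as claimed.

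There is no deep obstacle in this corollary: all the genuinely hard content --- the generalized Pinsker inequality (\cref{thm:generalized-Pinsker's-inequality}), the Fr\'echet-derivative identity (\cref{lem:OpDerivative}), the stability bound (\cref{lemma:InnerProductBound}), and the diameter estimate (\cref{lem:dbound}) --- is already absorbed into \cref{thm:sublinearity} and \cref{lem:dbound}. The only points that deserve a moment's care are confirming that $\alpha\identity \in \mcl{K}$ (via the depolarizing channel's Choi matrix) and the semidefinite argument giving $\opnorm{R} \leq 1$ for interactive measurements; everything else is bookkeeping of dimensions and a one-line substitution.
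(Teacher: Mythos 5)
Your proposal is correct and follows essentially the same route as the paper: read off $A = \dim(\mcl{X})$ from the trace-preserving constraint, $C = 1$ from $0 \leq R \leq \identity_{\mcl{Y}} \otimes \sigma$, apply the constant-trace case of \cref{lem:dbound} to get $D^2 \leq A(\ln(\dim(\mcl{X})) + \ln(\dim(\mcl{Y})))$, and substitute into the $4BCD\sqrt{AT}$ bound of \cref{thm:sublinearity}. The only difference is that you explicitly verify compactness and the $\alpha\identity \in \mcl{K}$ hypothesis via the depolarizing channel's Choi matrix, details the paper asserts without proof — a harmless (indeed welcome) addition.
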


Flipping the roles of channels and interactive measurements, we have the following.

\begin{corollary}[Online learning of interactive measurements]\label{app:interactive}
    \newcommand{\npt}{\hspace{-.2pt}}
    For $\mcl{K}\npt=\npt\intmeas{\mcl{X}}{\mcl{Y}}$ and $\mcl{E}\npt=\npt\channel{\mcl{X}}{\mcl{Y}}$, we have
    \begin{equation}
        \regret{T} \leq \begin{cases}
            \left( 8B \, \dim(\mcl{Y}) \sqrt{2/e + \ln(\dim(\mcl{Y}))} \right) \sqrt{T} & \text{ if } \dim(\mcl{X}) = 2 \\
            \left( 4B \, \dim(\mcl{X}) \dim(\mcl{Y}) \sqrt{\ln(\dim(\mcl{X}))+\ln(\dim(\mcl{Y}))} \right) \sqrt{T} & \text{ if } \dim(\mcl{X}) \geq 3
        \end{cases}
    \end{equation}
\end{corollary}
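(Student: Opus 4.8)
The plan is to obtain \cref{app:interactive} as a direct instantiation of the master bound $\regret{T} \leq 4BCD\sqrt{AT}$ of \cref{thm:sublinearity} with $\mcl{K} = \intmeas{\mcl{X}}{\mcl{Y}}$ and $\mcl{E} = \channel{\mcl{X}}{\mcl{Y}}$ (the losses $\ell_t$ being $B$-Lipschitz, as in that theorem), so the entire task reduces to pinning down the constants $A$, $C$, $D$. Two of these are immediate from the discussion preceding the corollary: every $R \in \intmeas{\mcl{X}}{\mcl{Y}}$ obeys $\trace{R} \leq \trace{\identity_{\mcl{Y}} \otimes \sigma} = \dim(\mcl{Y})$ for its witnessing density $\sigma \in \density{\mcl{X}}$ (with equality at $R = \identity_{\mcl{Y}} \otimes \sigma$), giving $A = \dim(\mcl{Y})$; and, exactly as in the channel computation, $J \in \channel{\mcl{X}}{\mcl{Y}}$ satisfies $\opnorm{J} \leq \trnorm{J} = \trace{J} = \dim(\mcl{X})$, giving $C = \dim(\mcl{X})$. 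I would also record that $\mcl{K}$ meets the standing hypotheses of \cref{thm:sublinearity}: it is convex and compact, and it contains $\dim(\mcl{X})^{-1}\identity_{\mcl{Y} \otimes \mcl{X}}$ (take $\sigma = \dim(\mcl{X})^{-1}\identity_{\mcl{X}}$ in the defining inequality), so the required $\alpha\identity$ element exists.

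The only genuine computation is the diameter $D^2 = \max_{\varphi,\varphi' \in \mcl{K}}\{\entr{\varphi'} - \entr{\varphi}\}$. The key observation here, and the one place to be careful, is that $\intmeas{\mcl{X}}{\mcl{Y}}$ is \emph{not} a constant-trace set (it contains $0$ as well as operators of trace $\dim(\mcl{Y})$), so the clean first half of \cref{lem:dbound} does not apply; instead I would invoke its second, piecewise clause with ambient space $\mcl{Y} \otimes \mcl{X}$, so $\dim = \dim(\mcl{X})\dim(\mcl{Y})$ and $A = \dim(\mcl{Y}) \geq 1$. The branch is selected by comparing $A$ with $e^{-1}\dim(\mcl{X})\dim(\mcl{Y})$, i.e.\ by whether $\dim(\mcl{X}) \geq e$; since $\dim(\mcl{X})$ is a positive integer, this is exactly the split $\dim(\mcl{X}) \geq 3$ versus $\dim(\mcl{X}) = 2$ that appears in the statement.

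It then remains to carry out the two cases. When $\dim(\mcl{X}) \geq 3$, \cref{lem:dbound} gives $D^2 \leq \dim(\mcl{Y})\ln(\dim(\mcl{X})\dim(\mcl{Y})) = \dim(\mcl{Y})(\ln\dim(\mcl{X}) + \ln\dim(\mcl{Y}))$; substituting $A = \dim(\mcl{Y})$, $C = \dim(\mcl{X})$, and this $D$ into $4BCD\sqrt{AT}$ merges the two factors $\sqrt{\dim(\mcl{Y})}$ into a single $\dim(\mcl{Y})$ and yields $4B\dim(\mcl{X})\dim(\mcl{Y})\sqrt{\ln\dim(\mcl{X})+\ln\dim(\mcl{Y})}\,\sqrt{T}$. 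When $\dim(\mcl{X}) = 2$, the other branch gives $D^2 \leq e^{-1}\dim(\mcl{X})\dim(\mcl{Y}) + \dim(\mcl{Y})\ln\dim(\mcl{Y}) = \dim(\mcl{Y})(2/e + \ln\dim(\mcl{Y}))$, and with $C = \dim(\mcl{X}) = 2$ the same substitution produces $8B\dim(\mcl{Y})\sqrt{2/e + \ln\dim(\mcl{Y})}\,\sqrt{T}$, matching the two lines of \cref{app:interactive}. Since this corollary introduces no new ideas beyond \cref{thm:sublinearity} and \cref{lem:dbound}, the only real subtlety to flag is the variable-trace nature of interactive measurements and the corresponding translation of the threshold $A \leq e^{-1}\dim$ into the dimension condition separating the two cases.
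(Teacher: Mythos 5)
Your proposal is correct and follows essentially the same route as the paper: instantiate \cref{thm:sublinearity} with $A = \dim(\mcl{Y})$, $C = \dim(\mcl{X})$ as computed in \cref{sec:QCIM}, and bound $D^2$ via the variable-trace clause of \cref{lem:dbound}, with the branch condition $A \gtrless e^{-1}\dim(\mcl{X})\dim(\mcl{Y})$ translating precisely into the $\dim(\mcl{X})=2$ versus $\dim(\mcl{X})\geq 3$ split. Your additional checks (that $\intmeas{\mcl{X}}{\mcl{Y}}$ is not constant-trace, and that it contains $\dim(\mcl{X})^{-1}\identity_{\mcl{Y}\otimes\mcl{X}}$ so the $\alpha\identity$ hypothesis holds) are exactly the right points of care and match the paper's intent.
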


\subsection{Learning quantum interactions}

A quantum strategy is a prescribed interaction with a (compatible) object.
For example, suppose Alice and Bob communicate back and forth via a quantum communications network.
Imagine if Bob sends a quantum state in $\mcl{X}_1$ to Alice, who then sends a quantum state in $\mcl{Y}_1$ back to Bob.
Suppose they keep exchanging quantum states in this manner via the spaces $\mcl{X}_2, \mcl{Y}_2, \ldots, \mcl{X}_n, \mcl{Y}_n$ and Bob finally measures at the end to get an outcome.
Alice's actions are described by the $n$-turn strategy which is an object that takes the inputs $\mathcal{X}_1, \mathcal{X}_2, \ldots, \mathcal{X}_n$ and gives the outputs $\mathcal{Y}_1, \mathcal{Y}_2, \ldots, \mathcal{Y}_n$.
Bob's actions are described by the $n$-turn co-strategy which is the {(co-)object} that gives the inputs $\mathcal{X}_1, \mathcal{X}_2, \ldots, \mathcal{X}_n$ to Alice and takes the outputs $\mathcal{Y}_1, \mathcal{Y}_2, \ldots, \mathcal{Y}_n$ from Alice.
Technically, Alice and Bob can have memory spaces as well, but they do not factor into their Choi representations using the formalism in \cite{Gutoski_2007, gutoski2012quantum} which we use in this work.

We now examine the online learning of strategies.
Note that even though co-strategies are defined as interactions, they correspond to a single interaction in the online learning framework; every co-strategy corresponds to a single time step $t$.

We now define the set of Choi representations of strategies recursively as
\begin{subequations}
\begin{align}
    \strat{\mcl{X}_{1}, \ldots, \mcl{X}_n, \mcl{Y}_{1}, \ldots, \mcl{Y}_n} = \{ & Q \in \Pos{\mcl{Y}_{1} \otimes \cdots \otimes \mcl{Y}_{n} \otimes \mcl{X}_{1} \otimes \cdots \otimes \mcl{X}_n} : \\
        & \ptrace{\mcl{Y}_n}{Q} = Q' \otimes \identity_{\mcl{X}_n}, \; \text{ for some } \\
        & Q' \in \strat{\mcl{X}_{1}, \ldots, \mcl{X}_{n-1}, \mcl{Y}_{1}, \ldots, \mcl{Y}_{n-1}} \}\ .
\end{align}
\end{subequations}

\begin{itemize}
    \item When $\mcl{K} = \strat{\mcl{X}_{1}, \ldots, \mcl{X}_n, \mcl{Y}_{1}, \ldots, \mcl{Y}_n}$ a simple inductive calculation shows that we can bound the trace with $A = \dim(\mcl{X}_1 \otimes \cdots \otimes \mcl{X}_n) = \prod_{i=1}^n \dim(\mcl{X}_i)$.

    \item When $\mcl{E} = \strat{\mcl{X}_{1}, \ldots, \mcl{X}_n, \mcl{Y}_{1}, \ldots, \mcl{Y}_n}$, we have $C \leq A = \prod_{i=1}^n \dim(\mcl{X}_i)$ using the same argument that we used for quantum channels.
    We note that this $C$ may not be tight as it was for channels, but, regardless, it works to give us a meaningful regret bound.
\end{itemize}

The Choi representations of co-strategies can be defined in a similar way, below
\begin{subequations}
\begin{align}
    \costrat{\mcl{X}_{1}, \ldots, \mcl{X}_n, \mcl{Y}_{1}, \ldots, \mcl{Y}_n}
    = \{ & R \in \Pos{\mcl{Y}_{1} \otimes \cdots \otimes \mcl{Y}_{n} \otimes \mcl{X}_{1} \otimes \cdots \otimes \mcl{X}_n} : \\
        & R = R' \otimes \identity_{\mcl{Y}_n}, \; \text{ for some }\\
        &  R' \in \Pos{\mcl{Y}_{1} \otimes \cdots \otimes \mcl{Y}_{n-1} \otimes \mcl{X}_{1} \otimes \cdots \otimes \mcl{X}_n}, \\
        & \ptrace{\mcl{X}_n}{R'} \in \costrat{\mcl{X}_{1}, \ldots, \mcl{X}_{n-1}, \mcl{Y}_{1}, \ldots, \mcl{Y}_{n-1}} \}\ .
\end{align}
\end{subequations}

As it stands now, for any strategy $Q$ and co-strategy $R$, we have $\inner{Q}{R} = 1$, which is not interesting for online learning.
What we want is a \emph{measuring} co-strategy, where a measurement is made at the end of the co-strategy.
If we have a matrix $X \geq 0$ such that $X \leq R$ for some co-strategy $R$, then the probability of seeing $X$ when strategy $Q$ interacts with $R$, then measured, is given by $\inner{Q}{X}$.
This is the set of co-objects we seek in this application, and is defined below.
\begin{subequations}
    \begin{align}
        \downarrow\!\costrat{\mcl{X}_{1}, \ldots, \mcl{X}_n, \mcl{Y}_{1}, \ldots, \mcl{Y}_n} = \{ & X \geq 0 : X \leq R \text{ for some } \\
            & R \in \costrat{\mcl{X}_{1}, \ldots, \mcl{X}_n, \mcl{Y}_{1}, \ldots, \mcl{Y}_n} \}\ .
    \end{align}
\end{subequations}

\begin{itemize}
    \item When $\mcl{K} = \downarrow\!\costrat{\mcl{X}_{1}, \ldots, \mcl{X}_n, \mcl{Y}_{1}, \ldots, \mcl{Y}_n}$, a similar calculation as in the case of strategies shows that we can set $A = \prod_{i=1}^n \dim(\mcl{Y}_i)$.
    \item When $\mcl{E} = \downarrow\!\costrat{\mcl{X}_{1}, \ldots, \mcl{X}_n, \mcl{Y}_{1}, \ldots, \mcl{Y}_n}$, we have $C \leq A = \prod_{i=1}^n \dim(\mcl{Y}_i)$.
    As in the case of strategies, this bound for $C$ may not be tight.
\end{itemize}

\begin{corollary}[Online learning of quantum strategies]
\label{app:strats}
For $\mcl{K} = \strat{\mcl{X}_{1}, \ldots, \mcl{X}_n, \mcl{Y}_{1}, \ldots, \mcl{Y}_n} $ and also $\mcl{E} = \downarrow\!\costrat{\mcl{X}_{1}, \ldots, \mcl{X}_n, \mcl{Y}_{1}, \ldots, \mcl{Y}_n}$, we have
    \begin{equation}
        \regret{T} \leq \left( 4 B
            \left( \prod_{i=1}^n \dim(\mcl{Y}_i) \right)
            \left( \prod_{i=1}^n \dim(\mcl{X}_i) \right)
            \sqrt{ \sum_{i=1}^n \ln(\dim(\mcl{X}_i)) + \sum_{i=1}^n \ln(\dim(\mcl{Y}_i))}
        \right) \sqrt{T}\ .
    \end{equation}
\end{corollary}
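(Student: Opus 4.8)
The plan is to apply \cref{thm:sublinearity} directly, so the remaining work is to verify its structural hypotheses for $\mcl{K} = \strat{\mcl{X}_1, \ldots, \mcl{X}_n, \mcl{Y}_1, \ldots, \mcl{Y}_n}$ and to pin down valid constants $A$, $C$, $D$. Convexity and compactness of $\mcl{K}$ are inherited from its recursive description: at each layer one intersects the positive semidefinite cone with a linear constraint on a partial trace (and, recursively, with a lower-order strategy set, which is convex and closed by the inductive hypothesis), so $\mcl{K}$ is closed and convex; it is bounded, hence compact, because every element has the same trace (computed below). For the required interior point $\alpha\identity$, I would take $\alpha = \bigl(\prod_{i=1}^n \dim(\mcl{Y}_i)\bigr)^{-1}$: this is the Choi matrix of the strategy that at every turn discards its input and outputs the maximally mixed state, and a short induction on $n$ verifies both that this Choi matrix equals $\bigl(\prod_{i=1}^n \dim(\mcl{Y}_i)\bigr)^{-1}\identity$ and that it satisfies the recursive membership condition.

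Next I would compute the three constants, each by induction on the number of turns $n$. For $A$: using the defining constraint $\ptrace{\mcl{Y}_n}{Q} = Q' \otimes \identity_{\mcl{X}_n}$ and the trace-preservation of the partial trace, $\trace{Q} = \trace{\ptrace{\mcl{Y}_n}{Q}} = \dim(\mcl{X}_n)\trace{Q'} = \prod_{i=1}^n \dim(\mcl{X}_i)$ by induction, so $\mcl{K}$ is a constant-trace set with $A = \prod_{i=1}^n \dim(\mcl{X}_i) \ge 1$. Then the constant-trace case of \cref{lem:dbound}, applied on the ambient space $\mcl{Y}_1 \otimes \cdots \otimes \mcl{Y}_n \otimes \mcl{X}_1 \otimes \cdots \otimes \mcl{X}_n$, yields $D^2 \le A \ln\bigl(\prod_{i=1}^n \dim(\mcl{Y}_i)\prod_{i=1}^n \dim(\mcl{X}_i)\bigr) = A\bigl(\sum_{i=1}^n \ln(\dim(\mcl{X}_i)) + \sum_{i=1}^n \ln(\dim(\mcl{Y}_i))\bigr)$. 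For $C$: any co-object $X \in \downarrow\!\costrat{\mcl{X}_1, \ldots, \mcl{X}_n, \mcl{Y}_1, \ldots, \mcl{Y}_n}$ obeys $0 \le X \le R$ for some co-strategy $R$, hence $\opnorm{X} \le \opnorm{R} \le \trnorm{R} = \trace{R}$; peeling off the factor $R = R' \otimes \identity_{\mcl{Y}_n}$ and using $\ptrace{\mcl{X}_n}{R'} \in \costrat{\mcl{X}_1, \ldots, \mcl{X}_{n-1}, \mcl{Y}_1, \ldots, \mcl{Y}_{n-1}}$ gives inductively $\trace{R} = \prod_{i=1}^n \dim(\mcl{Y}_i)$, so $C = \prod_{i=1}^n \dim(\mcl{Y}_i)$.

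Finally I would substitute into $\regret{T} \le 4BCD\sqrt{AT}$. Writing $a := \prod_{i=1}^n \dim(\mcl{X}_i)$ and $L := \sum_{i=1}^n \ln(\dim(\mcl{X}_i)) + \sum_{i=1}^n \ln(\dim(\mcl{Y}_i))$, we have $D\sqrt{A} \le \sqrt{aL}\,\sqrt{a} = a\sqrt{L}$, so $4BCD\sqrt{AT} \le 4B\bigl(\prod_{i=1}^n \dim(\mcl{Y}_i)\bigr)\, a\, \sqrt{L}\, \sqrt{T}$, which is exactly the stated bound (as with the earlier examples, the statement is understood to hold for any family of convex, $B$-Lipschitz loss functions $\ell_t$). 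The only steps requiring genuine thought are the inductive trace identities and, more notably, exhibiting the interior point $\alpha\identity$ — one must recognize that the completely depolarizing strategy has Choi matrix proportional to the identity and check it against the recursive constraint; this is the main (if modest) obstacle, and everything downstream is substitution and simplification.
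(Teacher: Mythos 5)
Your proposal is correct and follows essentially the same route as the paper: compute $A = \prod_{i=1}^n \dim(\mcl{X}_i)$ by an inductive trace calculation, bound $C \leq \prod_{i=1}^n \dim(\mcl{Y}_i)$ via $\opnorm{X} \leq \trnorm{R} = \trace{R}$, bound $D^2$ with the constant-trace case of \cref{lem:dbound} on the ambient space, and substitute into \cref{thm:sublinearity}. Your explicit verification that the completely depolarizing strategy gives $\alpha\identity \in \mcl{K}$ is a welcome extra detail that the paper only asserts implicitly, but it does not change the argument.
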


Flipping the respective roles in the result above, we get the following.

\begin{corollary}[Online learning of measuring co-strategies]\label{app:costrats}
    \newcommand{\npt}{\hspace*{-.92pt}}
    Let $\mcl{K}\npt= \downarrow\!\costrat{\mcl{X}_{1}\npt, \ldots\npt, \mcl{X}_n\npt, \mcl{Y}_{1}\npt, \ldots\npt, \mcl{Y}_n}$ and $\mcl{E} = \strat{\mcl{X}_{1}, \ldots, \mcl{X}_n, \mcl{Y}_{1}, \ldots, \mcl{Y}_n}$.
    When it is the case\footnote{The case when $\prod_{i=1}^n \dim(\mcl{X}_i) = 2$ is interesting as well as it corresponds to a single qubit message, possibly in the middle of the interaction.
    In this case, the regret bound can be worked out as well, it is just unwieldy to write down.} that $\prod_{i=1}^n \dim(\mcl{X}_i) \geq 3$,
    we have
    \begin{equation}
        \regret{T} \leq \left( 4 B
            \left( \prod_{i=1}^n \dim(\mcl{Y}_i) \right)
            \left( \prod_{i=1}^n \dim(\mcl{X}_i) \right)
            \sqrt{ \sum_{i=1}^n \ln(\dim(\mcl{X}_i)) + \sum_{i=1}^n \ln(\dim(\mcl{Y}_i))}
        \right) \sqrt{T}\ .
    \end{equation}
\end{corollary}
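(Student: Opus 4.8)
The plan is to apply \cref{thm:sublinearity} with the three constants $A$, $C$, $D$ already assembled in the bullet points above, so that the argument reduces to bookkeeping. First I would record that in this setting the interaction is the Hilbert--Schmidt inner product: when the strategy with Choi representation $E_t \in \mcl{E} = \strat{\mcl{X}_{1}, \ldots, \mcl{X}_n, \mcl{Y}_{1}, \ldots, \mcl{Y}_n}$ interacts with the measuring co-strategy object $\omega \in \mcl{K} = \downarrow\!\costrat{\mcl{X}_{1}, \ldots, \mcl{X}_n, \mcl{Y}_{1}, \ldots, \mcl{Y}_n}$, the probability of the observed outcome is $\inner{E_t}{\omega}$ --- exactly as in \cref{app:strats}, but with the roles of $\mcl{K}$ and $\mcl{E}$ swapped. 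Hence the losses are of the required form $f_t(\omega) = \ell_t(\inner{E_t}{\omega})$ with each $\ell_t$ convex and $B$-Lipschitz. I would also note (as established in the preamble of this section) that $\mcl{K}$ is convex and compact and contains $\alpha \identity$ for some $\alpha > 0$; concretely, the co-strategy obtained by choosing maximally mixed states at every turn is positive definite, and small multiples of it lie in the downward closure $\mcl{K}$. Thus the hypotheses of \cref{thm:sublinearity} are met.

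Next I would substitute the constants. The trace bound $A = \prod_{i=1}^n \dim(\mcl{Y}_i)$ and the operator-norm bound $C \leq \prod_{i=1}^n \dim(\mcl{X}_i)$ for strategies (via $\opnorm{E_t} \leq \trnorm{E_t} = \trace{E_t}$, the latter being the trace of a strategy) are precisely the two bullet points stated just above the corollary. For the diameter I would invoke the second part of \cref{lem:dbound} with ambient space $\mcl{W} \coloneqq \mcl{Y}_1 \otimes \cdots \otimes \mcl{Y}_n \otimes \mcl{X}_1 \otimes \cdots \otimes \mcl{X}_n$, so that $\dim(\mcl{W}) = \left( \prod_i \dim(\mcl{Y}_i) \right)\left( \prod_i \dim(\mcl{X}_i) \right)$. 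The side condition $A \leq e^{-1}\dim(\mcl{W})$ there unwinds to $\prod_i \dim(\mcl{Y}_i) \leq e^{-1}\left( \prod_i \dim(\mcl{Y}_i) \right)\left( \prod_i \dim(\mcl{X}_i) \right)$, i.e.\ $\prod_i \dim(\mcl{X}_i) \geq e$; since this product is a positive integer, it is equivalent to the hypothesis $\prod_{i=1}^n \dim(\mcl{X}_i) \geq 3$. So we land in the clean first branch, $D^2 \leq A\ln(\dim(\mcl{W})) = \left( \prod_i \dim(\mcl{Y}_i) \right)\left( \sum_i \ln \dim(\mcl{Y}_i) + \sum_i \ln \dim(\mcl{X}_i) \right)$.

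Finally I would assemble the $4BCD\sqrt{AT}$ bound of \cref{thm:sublinearity}: since $D\sqrt{A} = \left( \prod_i \dim(\mcl{Y}_i) \right)\sqrt{\sum_i \ln \dim(\mcl{Y}_i) + \sum_i \ln \dim(\mcl{X}_i)}$, multiplying by $4BC = 4B\prod_i \dim(\mcl{X}_i)$ reproduces the stated bound. Everything here is routine arithmetic; the one step requiring genuine care is the dimension bookkeeping --- checking that the hypothesis $\prod_i \dim(\mcl{X}_i) \geq 3$ is exactly what selects the first branch of \cref{lem:dbound}, and that $\prod_i \dim(\mcl{X}_i) = 2$ instead falls into the second branch, producing the messier $e^{-1}\dim(\mcl{W}) + A\ln A$ diameter that is relegated to the footnote. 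A secondary thing to double-check is the inductive trace computation underlying $A = \prod_i \dim(\mcl{Y}_i)$ for the downward closure of co-strategies, which mirrors the strategy case by peeling off one $\mcl{Y}_n$ tensor factor at a time.
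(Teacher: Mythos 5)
Your proposal is correct and follows essentially the same route as the paper: plug the bullet-point constants $A = \prod_i \dim(\mcl{Y}_i)$ and $C \leq \prod_i \dim(\mcl{X}_i)$ into \cref{thm:sublinearity}, and use the variable-trace branch of \cref{lem:dbound}, with the hypothesis $\prod_i \dim(\mcl{X}_i) \geq 3$ serving exactly to select the clean $D^2 \leq A\ln(\dim(\mcl{W}))$ case (the paper leaves this branch check and the $\alpha\identity \in \mcl{K}$ verification implicit, which you spell out correctly).
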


Notice that the regret bounds in \cref{app:strats,app:costrats} are the same, which may not be too surprising given the almost symmetric nature of strategies and co-strategies.

We conclude by remarking that there are many other interesting sets of quantum objects that our bound can be applied to in order to get sublinear regret.
We refer the interested reader to the excellent book by \cite{watrous2018theory} for further discussion on the quantum concepts discussed in this section as well as other possible examples.

\section{Minimizers are positive definite}\label{sec:pd}

\begin{lemma}\label{lemma:minimzerIsPD}
    Suppose $\alpha \identity_\mcl{X} \in \mcl{K}$ for some $\alpha > 0$.
    Any minimizer of $\F{E}{X}$ over $\mcl{K}$ is positive definite.
\end{lemma}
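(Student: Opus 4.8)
The plan is to argue by contradiction: suppose $X^\star \in \mcl{K}$ is a minimizer of $\F{E}{X} = \inner{E}{X} + \trace{X \ln(X)}$ over $\mcl{K}$, and suppose $X^\star$ is not positive definite, i.e.\ it has a nontrivial kernel. I would then perturb $X^\star$ slightly in the direction of $\alpha \identity_\mcl{X}$ and show that $\F{E}{}$ strictly decreases, contradicting minimality. Concretely, for $\varepsilon \in (0,1)$ set $X_\varepsilon \coloneqq (1-\varepsilon) X^\star + \varepsilon \alpha \identity_\mcl{X}$; since $\mcl{K}$ is convex and both $X^\star$ and $\alpha \identity_\mcl{X}$ lie in $\mcl{K}$, we have $X_\varepsilon \in \mcl{K}$, and moreover $X_\varepsilon \in \Pd{\mcl{X}}$ for every $\varepsilon > 0$. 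The goal is to show $\F{E}{X_\varepsilon} < \F{E}{X^\star}$ for all sufficiently small $\varepsilon > 0$.

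The linear term is easy: $\inner{E}{X_\varepsilon} - \inner{E}{X^\star} = \varepsilon \inner{E}{\alpha\identity_\mcl{X} - X^\star}$, which is $O(\varepsilon)$. The key step is to show that the entropy term $\trace{X_\varepsilon \ln(X_\varepsilon)}$ decreases \emph{faster} than linearly as $\varepsilon \to 0^+$ — specifically that its right derivative at $\varepsilon = 0$ is $-\infty$. This is exactly where the kernel of $X^\star$ matters: writing $X^\star = \sum_i \lambda_i \Pi_i$ in its spectral decomposition with $\lambda_i = 0$ on the kernel, the function $\varepsilon \mapsto \trace{X_\varepsilon \ln(X_\varepsilon)}$ picks up a contribution of the form (roughly) $\varepsilon \alpha \ln(\varepsilon \alpha) \cdot (\text{multiplicity of the kernel})$ from the kernel block, plus smooth terms from the strictly positive part. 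Since $\frac{d}{d\varepsilon}[\varepsilon \alpha \ln(\varepsilon \alpha)] = \alpha(1 + \ln(\varepsilon\alpha)) \to -\infty$ as $\varepsilon \to 0^+$, the derivative of the entropy term dominates the bounded linear term, so $\F{E}{X_\varepsilon} - \F{E}{X^\star} < 0$ for small $\varepsilon$. To make this rigorous I would avoid claiming differentiability at a degenerate point and instead directly estimate: bound $\trace{X_\varepsilon \ln(X_\varepsilon)}$ above by splitting into the support of $X^\star$ and its orthogonal complement, using operator-convexity/monotonicity facts (or just the fact that $t \mapsto t\ln t$ is continuous on $[0,\infty)$ with the convention $0\ln 0 = 0$) to control the support part by $\trace{X^\star \ln X^\star} + O(\varepsilon)$, and pick out the $\Theta(\varepsilon \ln \varepsilon)$ term from the kernel part.

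The main obstacle I anticipate is handling the cross terms in $\ln(X_\varepsilon)$ when $X^\star$ and $\alpha\identity$ do not commute — except here they \emph{do} commute, since $\alpha\identity$ commutes with everything, which is a pleasant simplification. So in fact $X_\varepsilon$ is simultaneously diagonalizable with $X^\star$: in the eigenbasis of $X^\star$, $X_\varepsilon$ has eigenvalues $(1-\varepsilon)\lambda_i + \varepsilon\alpha$, and the entropy term is simply $\sum_i \big((1-\varepsilon)\lambda_i + \varepsilon\alpha\big)\ln\big((1-\varepsilon)\lambda_i + \varepsilon\alpha\big)$. For eigenvalues with $\lambda_i > 0$ this is a smooth function of $\varepsilon$ near $0$ with bounded derivative; for each $\lambda_i = 0$ it equals $\varepsilon\alpha\ln(\varepsilon\alpha)$, whose derivative blows down to $-\infty$. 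So the real work is just the one-variable estimate showing the $\varepsilon\ln\varepsilon$ contribution wins against the $O(\varepsilon)$ contributions — a short calculus argument — together with the observation that a minimizer with $\lambda_i > 0$ for all $i$ is precisely a positive definite operator.
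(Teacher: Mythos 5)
Your proposal is correct and follows essentially the same route as the paper's proof: perturb the rank-deficient minimizer toward $\alpha\identity_\mcl{X}$ along a convex combination, exploit that $\alpha\identity_\mcl{X}$ commutes with the minimizer to diagonalize, and show the kernel eigenvalues contribute a $\varepsilon\alpha\ln(\varepsilon\alpha)$ term whose derivative diverges to $-\infty$ and hence beats the $O(\varepsilon)$ linear and smooth contributions. The paper phrases this as the directional difference quotient of the entropy being unbounded below, but the decomposition and the calculus estimate are the same as yours.
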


\begin{proof}
Since $\mcl{K}$ is compact and $\F{E}{}$ is continuous, we know that there exists a minimizer and, moreover, since $\F{E}{}$ is strictly convex, we know that this minimizer is unique.
Denote this minimizer as $W \in \mcl{K}$ and, for brevity, $A = \alpha \identity_\mcl{X}$.
Then we have
\begin{equation} \label{eq:pd-min}
    \F{E}{W} < \F{E}{(1-t)W + tA}, \; \forall t \in (0,1)
\end{equation}
since $(1-t)W + tA \in \mcl{K}$ by convexity.
Rearranging \cref{eq:pd-min}, we have
\begin{equation}
    \inner{E}{W - A} < \frac{\entr{W} - \entr{(1-t)W + tA}}{t} = \frac{\entr{W} - \entr{W + t(A - W)}}{t}.
\end{equation}
Since the left-hand side is a fixed constant, all we need to show is that the right-hand side can be made arbitrarily small as $t$ decreases to $0$ when $W$ is rank-deficient to get a contradiction.

The right-hand side is the directional derivative of $\reg{}$ at $W$ in the direction $A - W$.
Since $A = \alpha \identity$, it commutes with $W$ which simplifies the expression that we wish to work with.
Let $\lambda_1, \ldots, \lambda_r > 0$ be the positive eigenvalues of $W$ noting that $r < n \coloneqq \dim(\mcl{X})$ since $W$ is rank-deficient.
Define the function $f(x) \coloneqq x \ln x$.
We can now write
\begin{equation}
    \dfrac{\entr{W} - \entr{W + t(A-W)}}{t}
    = \sum_{i=1}^r \frac{f(\lambda_i + t (\alpha - \lambda_i)) - f(\lambda_i)}{t}
      + (n - r) \frac{f(t \alpha)}{t}.
\end{equation}
For $x > 0$, the derivative is given by $f'(x) = 1 + \ln(x)$, and thus
\begin{equation}
    \frac{f(\lambda_i + t (\alpha - \lambda_i)) - f(\lambda_i)}{t} \to \left( 1 + \ln(\lambda_i) \right) \left( \alpha - \lambda_i \right) \quad \text{ as } \quad t \to 0^+\ ,
\end{equation}
which is finite.
However,
\begin{equation}
    \frac{f(t \alpha)}{t} = \alpha \ln(t \alpha) \to - \infty \quad \text{ as } \quad t \to 0^+\ .
\end{equation}
Thus, the right-hand side of \cref{eq:pd-min} can be made arbitrarily small.
\end{proof}

\section{Bounding the diameter}\label{sec:diameter}

The following lemma is helpful in proving \cref{lem:dbound}.

\begin{lemma}\label{lem:non-normalized-entropy-bounds}
For $\alpha > 0$, we have
\begin{equation}\label{eq:non-norm-max-ent}
    \max \{ S(X) : \trace{X} \leq \alpha, X \in \Pos{\mcl{X}} \} = \begin{cases} e^{-1}\dim(\mcl{X}) & \alpha \geq e^{-1}\dim(\mcl{X}) \\
    -\alpha \ln(\alpha) + \alpha\ln(\dim(\mcl{X})) & \text{otherwise} \ ,
    \end{cases}
\end{equation}
which is attained at $e^{-1}\identity_{\mcl{X}}$ in the first case and $\dfrac{\alpha}{\dim(\mcl{X})}\identity(\mcl{X})$ otherwise.
Moreover, we have
\begin{equation}
    \min \{ \entr{X} : \trace{X} \leq \alpha , X \in \Pos{\mcl{X}} \} = \begin{cases}
    0 & \alpha \leq 1 \\
    -\alpha \ln(\alpha) & \text{otherwise} \ ,
    \end{cases}
\end{equation}
which in the first case is attained by the zero matrix or any any rank one positive semidefinite matrix $vv\adj$ where $\| v \|_2 = 1$, or in the second case at any rank one positive semidefinite matrix $\alpha \cdot vv\adj$ where $\| v \|_2 = 1$.
\end{lemma}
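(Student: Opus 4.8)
The plan is to reduce both optimizations to a one-dimensional problem about the eigenvalues of $X$ and then read off the answer from elementary properties of $g(x) \coloneqq -x\ln x$ (with $g(0) \coloneqq 0$). Since $\entr{X} = -\trace{X\ln X}$ and $\trace{X}$ are unitarily invariant, they depend only on the multiset of eigenvalues $\lambda_1,\dots,\lambda_n$ of $X$ (where $n = \dim(\mcl{X})$), with $\entr{X} = \sum_i g(\lambda_i)$ and $\trace{X} = \sum_i \lambda_i$; conversely every nonnegative tuple $(\lambda_i)$ arises as the spectrum of a diagonal matrix in $\Pos{\mcl{X}}$. Hence both problems equal
\begin{equation}
\mathrm{opt}\Big\{ \sum_{i=1}^{n} g(\lambda_i) \;:\; \lambda_i \ge 0,\ \sum_{i=1}^{n}\lambda_i \le \alpha \Big\},\qquad \mathrm{opt}\in\{\max,\min\},
\end{equation}
and both extrema are attained because $\{X\in\Pos{\mcl{X}} : \trace{X}\le\alpha\}$ is compact and $\entr{}$ is continuous (using $x\ln x\to 0$ as $x\to 0^+$). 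Throughout I will use that $g$ is strictly concave with $g'(x) = -1-\ln x$, increasing on $(0,e^{-1}]$ and decreasing on $[e^{-1},\infty)$ with maximum $g(e^{-1})=e^{-1}$, and that $g(0)=g(1)=0$, $g>0$ on $(0,1)$, $g<0$ on $(1,\infty)$.

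For the maximum, fix the total mass $s \coloneqq \sum_i\lambda_i \le \alpha$; Jensen's inequality gives $\sum_i g(\lambda_i) \le n\,g(s/n)$. If $\alpha \ge e^{-1}n$, each summand is at most $g(e^{-1})=e^{-1}$, so $\sum_i g(\lambda_i)\le e^{-1}n$, with equality exactly at $\lambda_i = e^{-1}$ for all $i$, i.e.\ $X = e^{-1}\identity_{\mcl{X}}$, which is feasible. If $\alpha < e^{-1}n$, then $s/n \le \alpha/n < e^{-1}$; since $\tfrac{d}{ds}\,n\,g(s/n) = g'(s/n) \ge 0$ on $[0,e^{-1}n]$, the map $s\mapsto n\,g(s/n)$ is nondecreasing there, so $\sum_i g(\lambda_i)\le n\,g(\alpha/n) = -\alpha\ln(\alpha/n) = -\alpha\ln\alpha+\alpha\ln(\dim(\mcl{X}))$, attained at $\lambda_i = \alpha/n$ for all $i$, i.e.\ $X = \tfrac{\alpha}{n}\identity_{\mcl{X}}$.

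For the minimum the crucial ingredient is that $g$ is \emph{subadditive}, $g(a+b)\le g(a)+g(b)$ for $a,b\ge 0$; equivalently $\phi(t)\coloneqq t\ln t$ is superadditive, which follows (for $a,b>0$, the boundary cases being trivial) from
\begin{equation}
\phi(a+b)-\phi(a)-\phi(b) = a\ln\!\Big(1+\tfrac{b}{a}\Big) + b\ln\!\Big(1+\tfrac{a}{b}\Big)\ \geq\ 0.
\end{equation}
Iterating (and dropping zero eigenvalues, which contribute $g(0)=0$) yields $\sum_i g(\lambda_i) \ge g\big(\sum_i\lambda_i\big) = g(s)$ with $s\le\alpha$. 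If $\alpha\le 1$ then $s\le 1$, so $g(s)\ge 0$ and $\entr{X}\ge 0$; moreover each $\lambda_i\le s\le 1$ gives $g(\lambda_i)\ge 0$, so $\entr{X}=0$ forces every $\lambda_i\in\{0,1\}$, which with $\sum_i\lambda_i\le 1$ leaves only the zero matrix or a single rank-one projector $vv^*$, $\|v\|_2=1$ (the latter feasible precisely when $\alpha = 1$). If $\alpha>1$ then $\alpha>e^{-1}$, so $g$ is nonnegative on $[0,1]$ and decreasing on $[e^{-1},\alpha]$, hence $\min_{s\in[0,\alpha]}g(s) = g(\alpha) = -\alpha\ln\alpha$; therefore $\entr{X}\ge -\alpha\ln\alpha$, attained at $\lambda_1=\alpha$ and $\lambda_i=0$ otherwise, i.e.\ $X = \alpha\,vv^*$ with $\|v\|_2=1$, and tracking the equality cases above shows these exhaust the minimizers.

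The only genuinely nontrivial step is the subadditivity of $g$ behind the lower bound; the rest is monotonicity and concavity bookkeeping, plus the continuous extension of $x\ln x$ at the origin and compactness of the feasible set for existence. I would present the subadditivity identity as a short standalone computation and then assemble the four cases exactly as above.
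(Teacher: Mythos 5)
Your proof is correct. It takes a mildly different route from the paper's: the paper first establishes the scaling identity $\entr{X} = -\alpha\ln(\alpha) + \alpha \entr{\widehat{X}}$ with $\widehat{X} = \trace{X}^{-1}X$, imports the standard bounds $0 \leq \entr{\rho} \leq \ln(\dim(\mcl{X}))$ for density operators from \eqref{eq:entropy-bounds}, and then carries out the same one-variable optimizations of $-x\ln(x) + x\ln(\dim(\mcl{X}))$ and $-x\ln(x)$ over the trace value $x \in [0,\alpha]$ that you perform at the end. You instead work directly at the eigenvalue level and re-derive the needed inequalities from scratch: Jensen's inequality for the concave $g(x) = -x\ln x$ plays the role of the upper bound $\entr{\rho} \leq \ln(\dim(\mcl{X}))$, and your subadditivity of $g$ (superadditivity of $x \ln x$) plays the role of the lower bound $\entr{\rho} \geq 0$; the two pairs of facts are equivalent via the scaling identity. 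What your route buys is self-containedness (no appeal to known entropy bounds) and, as a by-product, a characterization of all optimizers — for instance you correctly observe that when $\alpha < 1$ only the zero matrix attains the minimum, a point the lemma's attainment statement glosses over. What the paper's route buys is brevity, since it reuses standard facts and reduces immediately to elementary calculus in a single scalar variable. Both arguments are complete, and your case analysis matches the claimed values in every regime.
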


\begin{proof}
    We consider any $X \in \Pos{\mcl{X}}$ with $\trace{X} = \alpha > 0$.
    Let $\Pi_{\supp{X}}$ be the projector onto the support of $X$.
    Then,
    \begin{subequations}
    \begin{align}
        \entr{X}
        =& -\trace{X \ln(X)} \\
        =& -\trace{X} \trace*{ \widehat{X} \ln\delarg[\big]{ \trace{X} \widehat{X} } } \\
        =& -\alpha \trace*{ \widehat{X} \Big\{ \ln\delarg[\big]{\trace{X}} \Pi_{\supp{X}} + \ln(\widehat{X}) \Big\} } \\
        =& -\alpha \left( \ln(\alpha) \trace{\widehat{X} \, \Pi_{\supp{X}}} + \trace{\widehat{X} \ln(\widehat{X})} \right) \\
        =& -\alpha \ln(\alpha) + \alpha \entr{}(\widehat{X}) \ ,
    \end{align}
    \end{subequations}
    where the second equality uses the definition $\hat{X} \coloneqq \trace{X}^{-1}X$, the third may be determined using the spectral decomposition and definition of $\alpha$, the fourth uses the definition of $\alpha$ and linearity of trace, and the last is again using the definition of entropy.
    Thus, we just need to maximize or minimize the entropy over the set of density matrices and optimize over $\alpha$.

    We begin with maximizing.
    In this case, as stated in \cref{eq:entropy-bounds}, it is well known that
    \begin{equation}
        \max \{ \entr{X} : \trace{X} = 1, X \geq 0 \} = \ln(\dim(X)) \ ,
    \end{equation}
    and is attained at $\dfrac{1}{\dim(\mcl{X})} \identity_{\mcl{X}}$.
    Thus, in this case we are interested in maximizing
    \begin{equation}
        f(x) \coloneqq -x\ln(x) + x\ln(\dim(\mcl{X}))\ .
    \end{equation}
    Now note that $f(x)$ takes the value zero at the points $\{0,\dim(\mcl{X})\}$ and is positive between these points.
    Thus, we want the maximum.
    Taking the derivative and setting it equal to zero,
    \begin{equation}
        -1 + \ln(\dim(\mcl{X})/x) = 0 \implies x = e^{-1}\dim(\mcl{X}).
    \end{equation}
    Simplifying $f(e^{-1}\dim(\mcl{X}))$ obtains the first case of \cref{eq:non-norm-max-ent}, which is then obtained at
    \begin{equation}
        e^{-1}\dim(\mcl{X}) \frac{1}{\dim(\mcl{X})}\identity_{\mcl{X}} = e^{-1}\identity_{\mcl{X}}.
    \end{equation}
    If $\alpha < e^{-1}\dim(\mcl{X})$, then $g(x)$ is monotonically increasing over the interval $[0,\alpha]$, so the optimal value is attained by the largest value possible, $\alpha$ and is attained by the maximally mixed state scaled by that value.

    Similarly, as stated in \cref{eq:entropy-bounds}, we have
    \begin{equation}
        \min \{ \entr{X} : \trace{X} = 1, X \geq 0 \} = 0
    \end{equation}
    attained at any pure state $vv\adj$ (so $\| v \|_2 = 1$).
    Thus, we are interested in minimizing the function $g(x) \coloneqq -x\ln(x)$.
    Note $g(x)$ is zero at $x \in \{0,1\}$, positive over the interval $[0,1]$, and otherwise is negative.
\end{proof}

We are now ready to prove \cref{lem:dbound}.

\begin{proof}[Proof of \cref{lem:dbound}]
    Note that the diameter can only increase if we relax the set.
    Thus, we can bound the diameter over the sets considered in \cref{lem:non-normalized-entropy-bounds}.
    If $\mcl{K}$ has variable traces, then the diameter bound follows immediately from \cref{lem:non-normalized-entropy-bounds}.
    If $\mcl{K}$ has constant trace, say $A \geq 1$, then our diameter bound follows immediately from
    \begin{equation}
        S(X) = -A \ln(A) + A S(\hat{X})\ ,
    \end{equation}
    where $\hat{X} = \dfrac{1}{A} X \in \density{\mcl{X}}$ and the fact that $S(\hat{X}) \in [0, \ln(\dim(\mcl{X}))]$.
\end{proof}

\section{Generalized Pinsker's inequality}\label{sec:pinsker}

In this section, we establish our generalization of Pinsker's inequality (\cref{thm:generalized-Pinsker's-inequality}), which is \cref{cor:quantum-gen-pinsker} in this section.
We use this as an intermediary result to ultimately bound the regret.
Namely, we use it as a lemma in establishing \cref{lemma:InnerProductBound}.
This result was not needed in \cite{aaronson2018online} because the authors only considered the set of quantum states where the trace cannot vary.
As we do not make this guarantee in our setting, we must establish this generalization.

For completeness, we explain how our proof method differs from the standard method for establishing Pinsker's inequality, $D(P||Q) \geq \frac{1}{2}\|P-Q\|_{1}^{2}$.
By properties of the trace norm $\|\cdot \|_{1}$ for $X \in \Herm{\mcl{X}}$ and the data processing inequality of relative entropy (\cref{lemma:rel-ent-DPI}), it ultimately suffices to establish the result classically.
Again by the data processing inequality, it suffices to establish Pinsker's inequality for Bernoulli distributions.
While there is a canonical method of establishing the Bernoulli distribution case (see \cite{Wilde-Book} for example), a particularly simple method provided in \cite{polyanskiy-2023a} is using the remainder form of Taylor's theorem.
However, the proof that makes use of the remainder form of Taylor's theorem relies on $\|\mbf{p} - \mbf{q}\|_{1}^{2} = 2(p-q)^{2}$ for Bernoulli distributions defined by $p,q \in [0,1]$.
This relies on the vectors both summing to one, which we cannot guarantee in our setting as we vary the trace.
To resolve this, we extend the proof method by using Taylor's theorem for multivariate vectors.
This ultimately allows us to establish our result.
We remark this proof method has applications beyond those relevant for this work, which we investigate in a separate paper \citep{Gen-Pinsker-Ineqs}.

The following proofs rely on the following well-known results.
The first is data processing of relative entropy, a proof of the form we use may be found in \cite{Wilde-Book}.
\begin{lemma}[Data processing]\label{lemma:rel-ent-DPI}
    Let $P,Q \in \Pos{\mcl{X}}$.
    Let $\mcl{N} \in \channel{\mcl{X}}{\mcl{Y}}$.
    Then,
    \begin{equation}
        D(\mcl{N}(P)\Vert\, \mcl{N}(Q)) \leq D(P\Vert Q) \ .
    \end{equation}
\end{lemma}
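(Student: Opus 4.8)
The plan is to reduce the statement to the well-known data processing inequality for \emph{normalized} states (the Lindblad--Uhlmann theorem), letting the trace-preserving hypothesis on $\mcl{N}$ handle the passage between the unnormalized and normalized worlds. First I would dispose of the degenerate cases directly. If $P = 0$, both sides vanish (using the convention $0\ln 0 = 0$). If $P \neq 0$ but $\image{P} \not\subseteq \image{Q}$ --- in particular if $Q = 0$ --- then $\relentr{P}{Q} = \infty$ and the inequality is vacuous. So I may assume $P, Q \neq 0$ and $\image{P} \subseteq \image{Q}$; then $P \leq c\,Q$ for some $c > 0$, and since a quantum channel is in particular a positive map, $\mcl{N}(P) \leq c\,\mcl{N}(Q)$, so that $\image{\mcl{N}(P)} \subseteq \image{\mcl{N}(Q)}$ and the right-hand side is finite as well.

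The heart of the argument is homogeneity of relative entropy under rescaling. Writing $a = \trace{P}$, $b = \trace{Q}$ and $\widehat{P} = P/a$, $\widehat{Q} = Q/b$, the identity $\ln(aX) = (\ln a)\,\Pi_{X} + \ln X$ on the support of a positive operator $X$ (with $\Pi_X$ the projector onto $\image{X}$), together with $\image{\widehat{P}} \subseteq \image{\widehat{Q}}$, gives after a short computation
\begin{equation}
    \relentr{P}{Q} = a(\ln a - \ln b) + a\,\relentr{\widehat{P}}{\widehat{Q}}\ .
\end{equation}
Since $\mcl{N}$ is trace preserving we have $\trace{\mcl{N}(P)} = a$ and $\trace{\mcl{N}(Q)} = b$, and by linearity $\widehat{\mcl{N}(P)} = \mcl{N}(\widehat{P})$ and $\widehat{\mcl{N}(Q)} = \mcl{N}(\widehat{Q})$; applying the same identity to $\mcl{N}(P)$ and $\mcl{N}(Q)$ and subtracting, the trace terms cancel, leaving
\begin{equation}
    \relentr{P}{Q} - \relentr{\mcl{N}(P)}{\mcl{N}(Q)} = a\left(\relentr{\widehat{P}}{\widehat{Q}} - \relentr{\mcl{N}(\widehat{P})}{\mcl{N}(\widehat{Q})}\right)\ ,
\end{equation}
which is nonnegative (as $a > 0$) precisely when data processing holds for the density operators $\widehat{P}$ and $\widehat{Q}$.

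It then remains to invoke that normalized inequality: monotonicity of the Umegaki relative entropy under CPTP maps. A self-contained route writes $\mcl{N}$ via a Stinespring dilation as an isometric embedding followed by a partial trace, uses invariance of relative entropy under isometries, and reduces to monotonicity under partial trace, which itself follows from joint convexity of relative entropy (equivalently, Lieb's concavity theorem); the textbook treatment in \cite{Wilde-Book} is the ``proof of the form we use'' alluded to in the statement. The hard part is genuinely just this last ingredient --- a nontrivial but completely standard theorem, orthogonal to the present paper --- since everything specific to our unnormalized setting, namely the support bookkeeping and the rescaling identity above, is elementary.
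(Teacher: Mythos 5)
Your proposal is correct. Note that the paper does not actually prove this lemma --- it simply defers to the treatment in \cite{Wilde-Book} --- so there is no in-paper argument to match; what you supply is a clean reduction that the paper leaves implicit. Your bookkeeping is right: the degenerate cases ($P=0$, or $\image{P}\not\subseteq\image{Q}$ making the right-hand side infinite) are disposed of correctly; in finite dimension $\image{P}\subseteq\image{Q}$ does give $P\leq cQ$ and hence $\image{\mcl{N}(P)}\subseteq\image{\mcl{N}(Q)}$ by positivity of the channel; and the homogeneity identity
\begin{equation}
    \relentr{P}{Q} = a(\ln a - \ln b) + a\,\relentr{\widehat{P}}{\widehat{Q}}\ , \qquad a=\trace{P},\ b=\trace{Q},
\end{equation}
holds because $\trace{\widehat{P}\,\Pi_{Q}}=1$ under the support condition, while trace preservation and linearity of $\mcl{N}$ make the additive term cancel after applying the channel. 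This reduces everything to the Lindblad--Uhlmann monotonicity for density operators, which is exactly the nontrivial standard ingredient the paper's citation is meant to cover. If anything, your route is slightly more careful than the citation alone, since textbook statements of monotonicity are usually given for normalized (or trace-bounded) first arguments, and your rescaling step is precisely what bridges that gap to the fully unnormalized statement used here.
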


The second needed result is a version of Multivariate Taylor's theorem for twice-differentiable functions.
\begin{lemma}\label{lem:first-order-taylor}
     Let $f: \reals^{n} \to \reals$ be $C^{2}$ on an open convex set $S$.
     If $\mbf{a} \in S$ and $\mbf{a} + \mbf{h} \in S$ then
     \begin{equation}
         f(\mbf{a}+\mbf{h}) = f(\mbf{a}) + \langle \nabla f (\mbf{a}), \mbf{h}\rangle + \int^{1}_{0} (1-t)\ \mbf{h}^{T} H_{f}\vert_{\mbf{a}+t\mbf{h}} \mbf{h}\ dt \ .
     \end{equation}
\end{lemma}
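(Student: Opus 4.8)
The plan is to reduce the multivariate statement to the one-dimensional case by restricting $f$ to the line segment joining $\mbf{a}$ and $\mbf{a}+\mbf{h}$. Concretely, I would define $g:[0,1] \to \reals$ by $g(t) \coloneqq f(\mbf{a} + t\mbf{h})$. Because $S$ is open and convex and contains both endpoints, the closed segment $\{\mbf{a} + t\mbf{h} : t \in [0,1]\}$ is a compact subset of $S$; since $t \mapsto \mbf{a}+t\mbf{h}$ is continuous and the preimage of the open set $S$ is open and contains $[0,1]$, it contains an open interval $(-\epsilon, 1+\epsilon)$ for some $\epsilon > 0$. Hence $g$ is genuinely defined and can be differentiated (two-sidedly) on a neighborhood of $[0,1]$, which legitimizes differentiation at the endpoints $t = 0$ and $t = 1$.

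Next I would compute the first two derivatives of $g$ via the chain rule. Since $f$ is $C^{2}$, we have $g'(t) = \langle \nabla f(\mbf{a}+t\mbf{h}), \mbf{h}\rangle = \sum_i h_i\, \partial_i f(\mbf{a}+t\mbf{h})$, and differentiating once more gives $g''(t) = \sum_{i,j} h_i h_j\, \partial_i \partial_j f(\mbf{a}+t\mbf{h}) = \mbf{h}^{T} H_f|_{\mbf{a}+t\mbf{h}}\, \mbf{h}$. Continuity of the second partials (part of the $C^{2}$ hypothesis) guarantees $g''$ is continuous on $[0,1]$, and Schwarz's theorem makes $H_f$ symmetric so the quadratic-form expression is unambiguous. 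These two identities convert the target formula into the purely one-dimensional claim $g(1) = g(0) + g'(0) + \int_0^1 (1-t)\, g''(t)\, dt$, upon substituting $g(0)=f(\mbf{a})$, $g(1)=f(\mbf{a}+\mbf{h})$, and $g'(0)=\langle \nabla f(\mbf{a}),\mbf{h}\rangle$.

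It then remains to prove this scalar integral form of Taylor's theorem, which follows from the fundamental theorem of calculus and a single integration by parts. Starting from $g(1)-g(0) = \int_0^1 g'(t)\, dt$, I would integrate by parts using the antiderivative $v(t) = t-1$ of $dt$ (chosen so the boundary term isolates $g'(0)$):
\begin{equation}
\int_0^1 g'(t)\, dt = \bigl[\,g'(t)(t-1)\,\bigr]_0^1 - \int_0^1 g''(t)(t-1)\, dt = g'(0) + \int_0^1 (1-t)\, g''(t)\, dt \ .
\end{equation}
The boundary term evaluates to $g'(1)\cdot 0 - g'(0)\cdot(-1) = g'(0)$, giving the claim. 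There is no serious obstacle here: the analysis is standard. The only point requiring genuine care is the first paragraph's justification that the restriction $g$ is twice continuously differentiable in a full neighborhood of $[0,1]$ (so that endpoint derivatives and the integration by parts are valid), which is where the openness and convexity of $S$ are used essentially; everything downstream is a routine application of the chain rule and the fundamental theorem of calculus.
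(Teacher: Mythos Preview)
Your proof is correct and takes a genuinely different route from the paper. The paper establishes the lemma by first stating the general multivariate Taylor theorem with integral remainder in multi-index notation (cited from H\"ormander), and then specializing to $k=1$: it identifies the zeroth and first-order terms directly and rewrites the remainder $R_{\mbf{a},1}(\mbf{h})$ by splitting the sum over $|\alpha|=2$ into the ``diagonal'' terms ($\alpha$ has one entry equal to $2$) and the ``off-diagonal'' terms ($\alpha$ has two entries equal to $1$), then recombining these into the quadratic form $\mbf{h}^{T} H_f \mbf{h}$. Your argument instead restricts $f$ to the line segment to obtain a scalar function $g$, computes $g'$ and $g''$ via the chain rule, and derives the one-dimensional integral-remainder identity from the fundamental theorem of calculus followed by a single integration by parts with the deliberately chosen antiderivative $v(t)=t-1$. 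Your approach is more self-contained and elementary for this specific order---it avoids both the multi-index bookkeeping and reliance on the general Taylor theorem as a black box---while the paper's approach makes clearer how the result sits inside the general $C^{k+1}$ framework and would scale more uniformly to higher-order expansions. Your careful justification that $g$ extends $C^{2}$ to an open neighborhood of $[0,1]$ (using compactness of the segment and openness of $S$) is a point the paper's proof does not make explicit, since it is absorbed into the hypotheses of the cited general theorem.
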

This is a direct calculation from multivariate Taylor's theorem in terms of an integral remainder, which we provide in \cref{appx:taylor} for completeness.

We now begin to establish our result.
We first define the following function that is well-defined for any strictly positive vector $\mbf{q} > \mbf{0}$:
\begin{align}
    f_{\mbf{q}}(\mbf{p}) \coloneqq \sum_{i \in [n]} p_{i}\ln(p_{i}/q_{i}) = D(\mbf{p}||\mbf{q}) \ ,
\end{align}
which is defined on the open convex set $S = (0,\infty)^{\times n}$.

\begin{theorem}[Classical version]\label{thm:classical-version}
    Consider $n$-dimensional vectors $\mbf{q}, \mbf{p} \geq \mbf{0}$ and define $\check{\mbf{q}}$ as $\mbf{q}$ restricted to the support of $\mbf{p}$ and $\hat{\mbf{q}} = \mbf{q} - \check{\mbf{q}}$
    \begin{align}
        D(\mbf{p}||\mbf{q}) - \langle \mbf{p} \rangle + \langle \check{\mbf{q}} \rangle + \|\hat{\mbf{q}}\|_{2}^{2}
        \geq  \frac{C}{2}\|\mbf{p}-\mbf{q}\|_{2}^{2} \ , \label{eq:2-norm-pinsker}
    \end{align}
    where $C^{-1} \coloneqq \|\mbf{p}\oplus \mbf{q}\|_{\infty}$.
    This further implies
    \begin{align}
        D(\mbf{p}||\mbf{q}) - \langle \mbf{p} \rangle + \langle \check{\mbf{q}} \rangle + \|\hat{\mbf{q}}\|_{2}^{2}
        \geq  \frac{C_{2}}{4}\|\mbf{p}-\mbf{q}\|_{1}^{2} \ , \label{eq:1-norm-pinsker}
    \end{align}
    where $C_{2}^{-1} \coloneqq \|\mcl{A} \mbf{p} \oplus \mcl{A} \mbf{q}\|_{\infty}$ and $\mcl{A} \coloneqq \sum_{i \in I} e_0 e_i\adj + \sum_{i \not \in I} e_1 e_i\adj$,
    where $e_i$ is the $i$th standard basis vector (counting from $0$).
    Moreover, we note that on the right-hand side of these inequalities $C, C_{2}$ can be replaced by $\widetilde{C}^{-1} \coloneqq \max \{ \langle \mbf{p} \rangle, \langle \mbf{q} \rangle \}$.
    Furthermore, in the special case that $\mbf{q},\mbf{p} \in (0,\infty)^{\times 2}$, $C_{2}$ may be replaced with $C$.
\end{theorem}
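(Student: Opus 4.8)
The plan is to prove \eqref{eq:2-norm-pinsker} first, by a multivariate Taylor expansion of the unnormalized relative entropy regarded as a function of $\mathbf{p}$, and then to obtain \eqref{eq:1-norm-pinsker} and the constant replacements by cheap post-processing. First I would dispose of the support issue: if $\mathrm{supp}(\mathbf{p}) \not\subseteq \mathrm{supp}(\mathbf{q})$ then $D(\mathbf{p}\|\mathbf{q}) = \infty$ and the inequality is vacuous, so assume otherwise and set $I = \mathrm{supp}(\mathbf{p})$. On $I$ both vectors are strictly positive, $\check{\mathbf{q}}$ lives on $I$ and $\hat{\mathbf{q}}$ on its complement, and both $D(\mathbf{p}\|\mathbf{q})$ and $\langle\mathbf{p}\rangle$ depend only on the $I$-coordinates, while $\|\mathbf{p}-\mathbf{q}\|_2^2$ decomposes into its $I$-part plus $\|\hat{\mathbf{q}}\|_2^2$. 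It therefore suffices to prove the strictly positive case for the $I$-restricted vectors and then account for the off-support coordinates $p_i = 0 < q_i$ separately; each such coordinate enters the left-hand correction term and $\|\mathbf{p}-\mathbf{q}\|_2^2$ in a controlled way and is absorbed using $q_i \le \|\mathbf{p}\oplus\mathbf{q}\|_\infty$.

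The heart of the argument is the strictly positive case. I would apply \cref{lem:first-order-taylor} to $f_{\mathbf{q}}$ on $S=(0,\infty)^{\times n}$ with base point $\mathbf{a}=\mathbf{q}$ and increment $\mathbf{h}=\mathbf{p}-\mathbf{q}$, noting the segment $[\mathbf{q},\mathbf{p}]$ stays in $S$. A direct computation gives $f_{\mathbf{q}}(\mathbf{q})=0$, $\nabla f_{\mathbf{q}}(\mathbf{p})=(\ln(p_i/q_i)+1)_i$ so that $\nabla f_{\mathbf{q}}(\mathbf{q})=\mathbf{1}$ and the first-order term is $\langle\mathbf{p}\rangle-\langle\mathbf{q}\rangle$, and the Hessian is diagonal, $H_{f_{\mathbf{q}}}\vert_{\mathbf{r}}=\mathrm{diag}(1/r_i)$. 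Plugging in,
\[
  D(\mathbf{p}\|\mathbf{q}) - \langle\mathbf{p}\rangle + \langle\mathbf{q}\rangle \;=\; \int_0^1 (1-t)\sum_{i\in[n]}\frac{(p_i-q_i)^2}{(1-t)q_i+tp_i}\,dt\, ,
\]
and since $(1-t)q_i+tp_i \le \max\{p_i,q_i\} \le \|\mathbf{p}\oplus\mathbf{q}\|_\infty = C^{-1}$ and $\int_0^1(1-t)\,dt=\tfrac12$, the right-hand side is at least $\tfrac{C}{2}\|\mathbf{p}-\mathbf{q}\|_2^2$. This is \eqref{eq:2-norm-pinsker} in the strictly positive case; retaining the sharper per-coordinate denominator $\max\{p_i,q_i\}$ here is exactly what lets the off-support coordinates from the previous paragraph be absorbed. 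The replacement of $C$ by $\widetilde{C}$ is then immediate, since any single entry is at most the sum of all entries, so $\|\mathbf{p}\oplus\mathbf{q}\|_\infty \le \max\{\langle\mathbf{p}\rangle,\langle\mathbf{q}\rangle\}$.

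To deduce \eqref{eq:1-norm-pinsker}, I would introduce the column-stochastic ``merge'' map $\mathcal{A}$ that sums all coordinates with $p_i \ge q_i$ into one output coordinate and the rest into the other; being a classical channel, it satisfies $D(\mathbf{p}\|\mathbf{q}) \ge D(\mathcal{A}\mathbf{p}\|\mathcal{A}\mathbf{q})$ by \cref{lemma:rel-ent-DPI}, and it preserves total mass, so the left-hand correction terms transform correctly. Applying \eqref{eq:2-norm-pinsker} to the two-entry vectors $\mathcal{A}\mathbf{p},\mathcal{A}\mathbf{q}$, then using $\|v\|_2^2 \ge \tfrac12\|v\|_1^2$ for $v\in\reals^2$ together with $\|\mathcal{A}\mathbf{p}-\mathcal{A}\mathbf{q}\|_1=\|\mathbf{p}-\mathbf{q}\|_1$ (the point of the choice of $\mathcal{A}$), yields \eqref{eq:1-norm-pinsker} with $C_2^{-1}=\|\mathcal{A}\mathbf{p}\oplus\mathcal{A}\mathbf{q}\|_\infty$; the $\widetilde{C}$-replacement goes through as above, and when $n=2$ with $\mathbf{p},\mathbf{q}$ strictly positive the map $\mathcal{A}$ is just a relabeling, so $C_2=C$.

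I expect the main obstacle to be the interaction between the domain restriction and the Taylor remainder. \cref{lem:first-order-taylor} genuinely requires $f_{\mathbf{q}}\in C^2$ on an open convex set containing all of $[\mathbf{q},\mathbf{p}]$, which fails as soon as a coordinate of $\mathbf{p}$ vanishes (then $\mathbf{r}$ touches the boundary and $1/r_i\to\infty$); restricting to $\mathrm{supp}(\mathbf{p})$ cures this, but one then has to show by hand that the leftover $\hat{\mathbf{q}}$-coordinates contribute in the favorable direction, and this only closes if the remainder is estimated with the tight per-coordinate constants rather than the crude global one. A secondary point worth checking carefully is that $\mathcal{A}$ is honestly mass-preserving, so that data processing for the \emph{unnormalized} relative entropy is legitimate.
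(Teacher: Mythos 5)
Your proposal follows essentially the same route as the paper's own proof: the same multivariate Taylor expansion of $f_{\mathbf{q}}$ at base point $\mathbf{q}$ with integral remainder, the same bound on the remainder via $C^{-1}=\|\mathbf{p}\oplus\mathbf{q}\|_{\infty}$ (and the looser $\widetilde{C}$), the same two-outcome coarse-graining channel $\mathcal{A}$ combined with data processing and the two-dimensional $\ell_2$--$\ell_1$ comparison to get \eqref{eq:1-norm-pinsker}, and the same reduction of general nonnegative vectors to the strictly positive case by restricting to $\mathrm{supp}(\mathbf{p})$, which the paper performs at the end rather than at the start. The one step you leave informal --- exactly how the off-support contribution $\|\hat{\mathbf{q}}\|_{2}^{2}$ absorbs the corresponding piece of $\frac{C}{2}\|\mathbf{p}-\mathbf{q}\|_{2}^{2}$ --- is treated at the same level of brevity in the paper ("add it to both sides, properly scaled"), so the proposal matches the paper there as well.
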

\begin{proof}
    For the bulk of the proof, we assume $\mbf{p},\mbf{q} > \mbf{0}$, which means the $\|\hat{\mbf{q}}\|_{2}^{2}$ term is zero and not relevant.
    We lift this at the end of the proof.
    The partial derivatives of $f_{\mbf{q}}(\mbf{p})$ are
    \begin{align}
        \frac{\partial f_{\mbf{q}}}{\partial p_{i}} = 1 + \ln(p_{i}/q_{i})\ , \qquad \frac{\partial^{2} f_{\mbf{q}}}{\partial p_{i} \partial p_{j}} = \delta_{i,j}p_{i}^{-1}\ , \qquad  & i,j \in [n] \ .
    \end{align}
    This implies
    \begin{align}
        H_{f_{\mbf{q}}} = \sum_{i \in [n]} p_{i}^{-1} E_{i,i}\ ,\qquad
        \nabla f_{\mbf{q}} = \sum_{i \in [n]} (
            1 + \ln(p_{i}/q_{i})) \mbf{e}_{i}\ ,
    \end{align}
    where $\mbf{e}_{i}$ indicates the standard basis notation.
    Now we are going to choose $\mbf{a} = \mbf{q}$, $\mbf{p} = \mbf{a} + \mbf{h}$, which means $\mbf{h} = \mbf{p}-\mbf{q}$.
    Applying Lemma \ref{lem:first-order-taylor},
    \begin{subequations}
    \begin{align}
        f_{\mbf{q}}(\mbf{p}) =& f_{\mbf{q}}(\mbf{q}) + \langle \nabla f_{\mbf{q}}(\mbf{q}), \mbf{h} \rangle + \int^{1}_{0} (1-t)\ \mbf{h}^{T} H_{f_{\mbf{q}}}\vert_{\mbf{a}+t\mbf{h}} \mbf{h}\ dt \\
        =& \langle \mbf{p}-\mbf{q} \rangle + \int^{1}_{0} (1-t)\ \mbf{h}^{T} H_{f}\vert_{\mbf{a}+t\mbf{h}} \mbf{h}\ dt \ ,
    \end{align}
    \end{subequations}
    where we have used $f_{\mbf{q}}(\mbf{q}) = 0$ and $\nabla f_{\mbf{q}}(\mbf{q}) = \sum_{i} \mbf{e}_{i}$.
    Pushing the $\langle \mbf{p} - \mbf{q} \rangle$ term to the left-hand side of the equality gets us what we want to lower bound.
    Since $\mbf{a}+t\mbf{h} = \mbf{q} + t(\mbf{p}-\mbf{q}) = (1-t)\mbf{q} + t\mbf{p}$,
    \begin{align}
        f_{\mbf{q}}(\mbf{p}) - \langle \mbf{p} \rangle + \langle \mbf{q} \rangle =& \int^{1}_{0} (1-t)\left[ \sum_{i} \frac{(p_{i}-q_{i})^{2}}{(1-t)q_{i} +tp_{i}} \right] dt \ .
    \end{align}
    At this point it suffices to lower-bound the right-hand side.
    We do this in the following fashion,
    \begin{align}
        \int^{1}_{0} (1-t)\left[ \sum_{i} \frac{(p_{i}-q_{i})^{2}}{(1-t)q_{i} +tp_{i}} \right] dt
        \geq C \int^{1}_{0} (1-t)\left[ \sum_{i} (p_{i}-q_{i})^{2} \right] dt
        = \frac{C}{2} \|\mbf{p}-\mbf{q}\|^{2}_{2} \ ,
    \end{align}
    where one lets $C$ be any constant that lower bounds $[(1-t)q_{i}+tp_{i}]^{-1}$ for all $i$.
    An easy choice is $C = \|\mbf{p}\oplus \mbf{q}\|_{\infty}^{-1}$.
    This proves \cref{eq:2-norm-pinsker}.
    A looser choice that may be convenient is $\max\{\langle \mbf{p} \rangle, \langle \mbf{q} \rangle \}^{-1}$, which proves our moreover statement.

    We now convert this to a $1$-norm bound.
    First, $\frac{1}{\sqrt{|\text{supp}(\mbf{x})|}}\|\mbf{x}\|_{1} \leq\|\mbf{x}\|_{2}$, so for $2$-dimensional vectors, we have $C/4 \|\mbf{p}-\mbf{q}\|_{1}^{2} \leq C/2 \|\mbf{p}-\mbf{q}\|_{2}^{2}$.
    This explains why we can use $C$ in \cref{eq:1-norm-pinsker} when $\mbf{p},\mbf{q}$ are two-dimensional.

    We now use data processing (\cref{lemma:rel-ent-DPI}) to remove scaling in the dimension.
    Define the set $I \coloneqq \{i: p_{i} - q_{i} > 0\}$ and the classical channel in matrix form $\mcl{A} \coloneqq \sum_{i \in I} e_0 e_i\adj + \sum_{i \not \in I} e_1 e_i\adj$.
    It follows $\|\mbf{p}-\mbf{q}\|_{1} = \|\mcl{A}(\mbf{p}) - \mcl{A}(\mbf{q})\|_{1}$.
    Using that $f_{\mbf{q}}(\mbf{p}) = D(\mbf{p}||\mbf{q})$ and that $\mcl{A}$ is CPTP,
    \begin{subequations}
    \begin{align}
        f_{\mbf{q}}(\mbf{p}) - \langle \mbf{p} \rangle + \langle \mbf{q} \rangle =& D(\mbf{p}||\mbf{q}) - \langle \mbf{p} \rangle + \langle \mbf{q} \rangle \\
        \geq& D(\mcl{A} \mbf{p}||\mcl{A} \mbf{q}) - \langle \mcl{A}\mbf{p} \rangle + \langle \mcl{A}\mbf{q} \rangle \\
        \geq& \frac{C_{2}}{4} \|\mcl{A} \mbf{p} - \mcl{A} \mbf{q}\|^{2}_{1} \\
        =& \frac{C_{2}}{4} \|\mbf{p} - \mbf{q}\|_{1}^{2} \ ,
    \end{align}
    \end{subequations}
    where $C_{2}$ is just $C$ under this coarse-graining $\mcl{A}$.
    This proves \cref{eq:1-norm-pinsker}.
    Finally, we see can replace $C_{2}$ by $\widetilde{C}$ because $\mcl{A}$ is CPTP.
    This completes everything for strictly positive vectors.

    To lift this to non-negative vectors, note that if the support of $\mbf{p} \not \ll \mbf{q}$, then $D(\mbf{p}||\mbf{q})$ is infinite and the result is trivial.
    Thus, we assume the support of $\mbf{p} \ll \mbf{q}$.
    Then $D(\mbf{p}||\mbf{q}) = D(\mbf{p}||\check{\mbf{q}})$, so we apply Theorem \ref{thm:classical-version} to this.
    Then note that $\|\mbf{p} - \mbf{q}\|_{2}^{2} = \|\mbf{p}-\check{\mbf{q}}\|_{2}^{2} + \|\hat{\mbf{q}}\|_{2}^{2}$, so we just add it to both sides (properly scaled) to complete the proof.
\end{proof}

Before proceeding to the quantum extension, we note if you let $\mbf{p},\mbf{q}$ be normalized distributions and use $\widetilde{C} = 1$, the above gives $D(\mbf{p}||\mbf{q}) \geq \Delta(\mbf{p},\mbf{q})^{2}$, which is strictly more loose than Pinsker's inequality unless $\mbf{p} = \mbf{q}$.
However, this does not change the fact this result is sharp in some situations as we present.

\begin{proposition}[Sharpness]\label{prop:sharpness}
    For two-dimensional probability distributions, the inequality \cref{eq:2-norm-pinsker} and the lower bound $\frac{C}{4}\|\mbf{p}-\mbf{q}\|_{1}^{2}$ are sharp.
\end{proposition}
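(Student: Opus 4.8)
The plan is to specialize \eqref{eq:2-norm-pinsker} (and the stated $\tfrac{C}{4}\|\mbf{p}-\mbf{q}\|_{1}^{2}$ lower bound) to two-dimensional probability distributions and then exhibit an explicit one-parameter family of pairs along which the ratio of the two sides tends to $1$. Write $\mbf{p} = (p,1-p)$ and $\mbf{q} = (q,1-q)$ with $p,q \in (0,1)$. Since both vectors are then strictly positive, the support of $\mbf{p}$ is everything, so $\check{\mbf{q}} = \mbf{q}$ and $\hat{\mbf{q}} = \mbf{0}$, and $\langle \mbf{p}\rangle = \langle \mbf{q}\rangle = 1$; hence \eqref{eq:2-norm-pinsker} reduces to
\begin{equation*}
    D(\mbf{p}||\mbf{q}) \geq \tfrac{C}{2}\|\mbf{p}-\mbf{q}\|_{2}^{2} = C\,(p-q)^{2}, \qquad C^{-1} = \max\{p,\,1-p,\,q,\,1-q\}.
\end{equation*}
Moreover, for two-dimensional distributions $\|\mbf{p}-\mbf{q}\|_{1}^{2} = 4(p-q)^{2} = 2\|\mbf{p}-\mbf{q}\|_{2}^{2}$, and by the last sentence of \cref{thm:classical-version} the $\tfrac{C}{4}\|\mbf{p}-\mbf{q}\|_{1}^{2}$ bound is valid here with the same $C$; thus it equals $C(p-q)^{2}$ as well. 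So the two assertions of the proposition are in fact the same inequality, and it suffices to show this one inequality is sharp.

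For the sharpness witness I would take $\mbf{q} = (\tfrac12,\tfrac12)$ and $\mbf{p}_{\delta} = (\tfrac12+\delta,\,\tfrac12-\delta)$ with $\delta \downarrow 0$. Then $C^{-1} = \tfrac12+\delta$, so the right-hand side is $C(p-q)^{2} = 2\delta^{2}/(1+2\delta)$, while the left-hand side is $D(\mbf{p}_{\delta}||\mbf{q}) = (\tfrac12+\delta)\ln(1+2\delta) + (\tfrac12-\delta)\ln(1-2\delta)$. Organizing this as $\tfrac12\ln(1-4\delta^{2}) + \delta\ln\!\frac{1+2\delta}{1-2\delta}$ makes the expansion clean: the left-hand side is an even function of $\delta$, equal to $2\delta^{2} + O(\delta^{4})$, whereas the right-hand side is $2\delta^{2} - 4\delta^{3} + O(\delta^{4})$. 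Consequently
\begin{equation*}
    \frac{D(\mbf{p}_{\delta}||\mbf{q})}{\tfrac{C}{2}\|\mbf{p}_{\delta}-\mbf{q}\|_{2}^{2}} = \frac{2\delta^{2}+O(\delta^{4})}{2\delta^{2}-4\delta^{3}+O(\delta^{4})} \longrightarrow 1 \qquad (\delta \to 0^{+}),
\end{equation*}
so the constant $\tfrac12$ (equivalently $\tfrac14$ for the $1$-norm form) cannot be increased. I would also record the companion sanity check that the difference of the two sides is $4\delta^{3} + O(\delta^{4}) > 0$ for small $\delta$, consistent with the inequality.

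There is no genuine obstacle here; the only subtlety worth flagging in the write-up is the meaning of \emph{sharp}. Exact equality in \eqref{eq:2-norm-pinsker} forces $[(1-t)q_{i}+tp_{i}]^{-1}$ to be independent of $i$ and of $t\in[0,1]$ (look back at the integral remainder step in the proof of \cref{thm:classical-version}), which is only possible when $\mbf{p}=\mbf{q}=(\tfrac12,\tfrac12)$, i.e.\ equality holds only trivially. Hence the correct notion is the asymptotic one above: both sides vanish along the family while their ratio converges to $1$. The remaining work is just carrying out the two Taylor expansions to the orders indicated, which is routine.
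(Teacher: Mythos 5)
Your proposal is correct and follows essentially the same route as the paper: both use the witness family $\mbf{q}=(\tfrac12,\tfrac12)$, $\mbf{p}=(\tfrac12+\delta,\tfrac12-\delta)$ and show the ratio of the two sides tends to $1$ as $\delta\to 0^+$, the only cosmetic difference being that you carry out the Taylor expansion of $D(\mbf{p}_\delta\|\mbf{q})$ directly while the paper invokes the local $\chi^2$-behavior of relative entropy (the same second-order expansion), and you observe the two bounds coincide for two-dimensional distributions where the paper merely notes one dominates the other.
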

\begin{proof}
    Since the right-hand side of \cref{eq:2-norm-pinsker} upper bounds $\frac{C}{4}\|\mbf{p}-\mbf{q}\|_{1}^{2}$, it suffices to focus on the latter.
    Now note that if we have normalized probability distributions, we can rewrite the inequality as $D(\mbf{p}||\mbf{q}) \geq C \, \Delta(\mbf{p},\mbf{q})^{2}$ where $\Delta(\mbf{p},\mbf{q})^{2} \coloneqq \frac{1}{2}\|\mbf{p}-\mbf{q}\|_{1}$ is the trace distance (total variational distance).
    Now consider the sequence of distributions $\mbf{p}_{n} = \begin{bmatrix} 1/2 + 1/n & 1/2 - 1/n \end{bmatrix}\T$ and $\mbf{q}_{n} = \begin{bmatrix} 1/2 & 1/2 \end{bmatrix}\T$.
    Then for every $n$, the coefficent $C$, is $C_{n} = (\frac{1}{2}+\frac{1}{n})^{-1}$.
    We now consider the limit of the ratio
    \begin{subequations}
    \begin{align}
        \lim_{n \to \infty} \frac{D(\mbf{p}_{n}||\mbf{q}_{n})}{C_{n} \,  \Delta(\mbf{p}_{n},\mbf{q}_{n})^{2}} =& \lim_{n \to \infty} \frac{1/2 \chi^{2}(\mbf{p}_{n}||\mbf{q}_{n})}{C_{n} (1/n^{2})} \\
        =& \lim_{n \to \infty} \frac{(1/2)(4/n^{2})}{(1/2+1/n)^{-1}(1/n^{2})} \\
        =& \lim_{n \to \infty} 1 + 2/n \\
        =& 1 \ ,
    \end{align}
    \end{subequations}
    where first equality is using $\Delta(\mbf{p}_{n},\mbf{q}_{n}) = 1/n$ and that relative entropy is locally $\chi^{2}$-like \citep{polyanskiy-2023a} (but we are working in base $e$), the second is using $\chi^{2}(\mbf{p}||\mbf{q}) = \sum_{i \in [n]} q_{i}(p_{i}/q_{i}-1)^{2}$, the third is simplifying, and the last is taking the limit.
    This completes the proof.
\end{proof}

Finally, we extend our classical result to the quantum setting.
\begin{corollary}[Quantum Version]\label{cor:quantum-gen-pinsker}
    For any positive semidefinite operators $P,Q \geq 0$,
    \begin{align}
        \relentr{P}{Q} - \trace{P} + \trace{Q}
        \geq  \frac{C_{Q}}{4}\|P-Q\|_{1}^{2} \geq \frac{\widetilde{C}_{Q}}{4}\|P-Q\|_{1}^{2} \ , \label{eq:q-pinsker}
    \end{align}
    where $C_{Q}^{-1} \coloneqq \max \{\trace{\Pi P}, \trace{\Pi Q}, \trace{P} - \trace{\Pi P}, \trace{Q} - \trace{\Pi Q}\}$ where $\Pi$ is the projector onto the positive eigenspace of $P-Q$ and  $\widetilde{C}_{Q}^{-1} \coloneqq \max\{\trace{P},\trace{Q}\}$.
    Moreover, the first inequality is sharp.
\end{corollary}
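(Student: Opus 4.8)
The plan is to reduce the quantum statement to the classical \cref{thm:classical-version} by post-processing $P$ and $Q$ through a single well-chosen binary measurement. The point is to pick the measurement so that it simultaneously (i) can only decrease the relative entropy (data processing), (ii) preserves both traces, and (iii) preserves the trace norm $\trnorm{P-Q}$ \emph{exactly}. First I would dispose of the trivial case: if $\image{P}\not\subseteq\image{Q}$ then $\relentr{P}{Q}=\infty$ and there is nothing to prove, so assume $\relentr{P}{Q}<\infty$. Let $\Pi$ be the projector onto the positive eigenspace of $P-Q$, write $\bar\Pi=\identity-\Pi$, and let $\mcl{M}\in\channel{\mcl{X}}{\complex^2}$ be the projective measurement channel $\mcl{M}(X)=\trace{\Pi X}\,\ketbra{0}{0}+\trace{\bar\Pi X}\,\ketbra{1}{1}$. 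Put $\mbf{p}=(\trace{\Pi P},\trace{\bar\Pi P})$ and $\mbf{q}=(\trace{\Pi Q},\trace{\bar\Pi Q})$, so that $\relentr{\mcl{M}(P)}{\mcl{M}(Q)}=D(\mbf{p}\|\mbf{q})$. Data processing (\cref{lemma:rel-ent-DPI}) gives $\relentr{P}{Q}\geq D(\mbf{p}\|\mbf{q})$, and trace preservation of $\mcl{M}$ gives $\langle\mbf{p}\rangle=\trace{P}$, $\langle\mbf{q}\rangle=\trace{Q}$; since $\Pi$ is the Jordan--Hahn projector, $\trace{\Pi(P-Q)}=\trace{(P-Q)_+}\geq0$ and $\trace{\bar\Pi(P-Q)}=-\trace{(P-Q)_-}\leq0$, whence $\|\mbf{p}-\mbf{q}\|_1=\trace{(P-Q)_+}+\trace{(P-Q)_-}=\trnorm{P-Q}$, and $\|\mbf{p}\oplus\mbf{q}\|_\infty=C_Q^{-1}$ straight from the definition of $C_Q$. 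Hence the corollary reduces to the purely classical $2$-dimensional claim
\[
 D(\mbf{p}\|\mbf{q})-\langle\mbf{p}\rangle+\langle\mbf{q}\rangle\ \geq\ \tfrac{C}{4}\,\|\mbf{p}-\mbf{q}\|_1^2,\qquad C^{-1}=\|\mbf{p}\oplus\mbf{q}\|_\infty,
\]
for arbitrary nonnegative $\mbf{p},\mbf{q}\in\reals^2$.

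For strictly positive $\mbf{p},\mbf{q}$ this displayed inequality is exactly the two-dimensional case of \cref{thm:classical-version}: its $2$-norm form \eqref{eq:2-norm-pinsker} reads $D(\mbf{p}\|\mbf{q})-\langle\mbf{p}\rangle+\langle\mbf{q}\rangle\geq\tfrac{C}{2}\|\mbf{p}-\mbf{q}\|_2^2$, and $\|x\|_1^2\leq2\|x\|_2^2$ on $\reals^2$ upgrades this to the claimed $1$-norm form with constant $C/4$ (this is precisely the ``furthermore'' clause of that theorem). To drop strict positivity I would perturb any vanishing coordinate to $\varepsilon>0$, apply the strictly positive case, and let $\varepsilon\to0^+$, using continuity of the relative entropy (finite whenever $\mbf{p}\ll\mbf{q}$, with $\varepsilon\ln(\varepsilon/q_i)\to0$) and of $C$ in its arguments; when $\mbf{p}\not\ll\mbf{q}$ the left side is $+\infty$ and the claim is trivial. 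This yields the first inequality of the corollary. The second, with $\widetilde{C}_Q^{-1}=\max\{\trace{P},\trace{Q}\}$, follows either from the ``moreover'' clause of \cref{thm:classical-version} (replacing $C$ by $\widetilde{C}^{-1}=\max\{\langle\mbf{p}\rangle,\langle\mbf{q}\rangle\}=\widetilde{C}_Q^{-1}$) or simply from $C_Q\geq\widetilde{C}_Q$, since each of $\trace{\Pi P},\trace{\bar\Pi P}$ is at most $\trace{P}$ and similarly for $Q$.

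For sharpness of the first inequality I would take the diagonal pair $P_n=\mathrm{diag}(\tfrac12+\tfrac1n,\ \tfrac12-\tfrac1n)$ and $Q_n=\mathrm{diag}(\tfrac12,\tfrac12)$. Every step in the reduction is then an equality: $\mcl{M}$ merely reads off the diagonal, so $\relentr{P_n}{Q_n}=D(\mbf{p}_n\|\mbf{q}_n)$ with $\mbf{p}_n,\mbf{q}_n$ the sequences of \cref{prop:sharpness}; moreover $\trace{P_n}=\trace{Q_n}=1$, $\trnorm{P_n-Q_n}=2/n$, and $C_Q^{-1}=\tfrac12+\tfrac1n$. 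Therefore the ratio of the two sides of the first inequality equals $D(\mbf{p}_n\|\mbf{q}_n)\big/\bigl(C_n\,\Delta(\mbf{p}_n,\mbf{q}_n)^2\bigr)$, which tends to $1$ by \cref{prop:sharpness}.

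I expect the main obstacle to be the bookkeeping of the reduction rather than any single hard estimate: one must verify that the Jordan--Hahn binary measurement preserves the trace norm \emph{exactly} (not merely contracts it, as a generic channel would), that $\|\mbf{p}\oplus\mbf{q}\|_\infty$ coincides with the stated constant $C_Q$, and that the degenerate cases where $\mbf{q}$ has full support but $\mbf{p}$ does not are handled correctly — the $\varepsilon\to0$ perturbation takes care of these, but one has to check that the constant obtained in the limit is the claimed one.
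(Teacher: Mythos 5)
Your proposal is correct and follows essentially the same route as the paper: the same binary measurement channel built from the Jordan--Hahn projector $\Pi$, data processing, exact preservation of $\trnorm{P-Q}$ and of both traces, the two-dimensional case of \cref{thm:classical-version} with $C^{-1}=\|\mbf{p}\oplus\mbf{q}\|_\infty$, and the same sequence from \cref{prop:sharpness} for sharpness. The only (minor) difference is bookkeeping for degenerate supports, which you handle by an $\varepsilon$-perturbation while the paper folds it into the non-negative-vector form of the classical theorem and a rank remark; both work.
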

\begin{proof}
    First note for any $P,Q \geq 0$, there exists a binary measurement channel $\mcl{E}$ such that we have $\|P-Q\|_{1} = \|\mcl{E}(P)-\mcl{E}(Q)\|_{1}$.
    This is achieved by defining the projector onto the positive eigenspace of $P-Q$, $\Pi$, and defining $\mcl{E}(X) = \trace{\Pi X} E_{0,0} + \trace{(\identity-\Pi)X} E_{1,1}$ where we define $E_{0,0} = \begin{bmatrix} 1 & 0 \\ 0 & 0 \end{bmatrix}$ and $E_{1,1} = \begin{bmatrix} 0 & 0 \\ 0 & 1 \end{bmatrix}$.
    This is because, by linearity, we then obtain
    \begin{equation}
        \|\mcl{E}(P)-\mcl{E}(Q)\|_{1} = |\trace{(P-Q)_{+}}| + |\trace{(P-Q)_{-}}| = \|P-Q\|_{1} \ .
    \end{equation}
    Then we have
    \begin{subequations}
    \begin{align}
        \relentr{P}{Q} - \trace{P} + \trace{Q} \geq& D(\mcl{E}(P)||\mcl{E}(Q)) - \trace{\mcl{E}(P)} + \trace{\mcl{E}(Q)} \\
        \geq & \frac{C_{Q}}{4}\|\mcl{E}(P) - \mcl{E}(Q)\|_{1}^{2} \\
        =& \frac{C_{Q}}{4}\|P-Q\|_{1}^{2} \\
        \geq & \frac{\widetilde{C}_{Q}}{4} \|P-Q\|_{1}^{2} \ ,
    \end{align}
    \end{subequations}
    where the first inequality is DPI and that $\mcl{E}$ is CPTP, the second inequality is Theorem \ref{thm:classical-version} where we use $L_{1}$-norm version with the $C$ coefficient for binary distributions, the third is our choice of $\mcl{E}$, and the fourth is our definition of $C_{Q}, \widetilde{C}_{Q}$.
    This establishes the inequalities so long as $\mcl{E}(P)$ and $\mcl{E}(Q)$ are full rank.
    If they are not, this means $\mcl{E}(P) = E_{0,0}$ and $\mcl{E}(Q) = E_{1,1}$ (up to labeling) and thus the inequality holds trivially as $D(P||Q) = +\infty$.

    Finally, to see that this is tight, note that if $P,Q$ are just the matrix versions of the Bernoulli distributions considered in Proposition \ref{prop:sharpness}, then the measurement channel does not do anything, so we inherit sharpness from Proposition \ref{prop:sharpness}.
    This completes the proof.
\end{proof}

\section{Derivative of the function \texorpdfstring{$\mbf{\Phi_{E}}$}{Phi}} \label{sec:deriv-of-func-phi-E}

In this section we establish that the function $\Phi_{E}(X) \coloneqq \langle E , X \rangle + \trace{X \ln(X)}$ has a gradient for all $X \in \Pd{\mcl{X}}$ and that it is given by
\begin{align}
    \nabla \Phi_{E}(X) = E + \identity + \ln(X) \ .
\end{align}

This claim was stated as \cref{lem:OpDerivative}, which we prove here in a somewhat roundabout manner.
Specifically, we use that for a function $f:\Herm{\mcl{X}} \to \reals$, the relationship between the Fr\'{e}chet derivative, the directional derivative, and its relation to the gradient allows us to ``extract" the gradient of $f$ whenever it exists.
We summarize this relationship as a proposition after a definition.
These facts are known and have been used in quantum information theory previously, e.g. \cite{girard_convex_2014,Coutts-2021a}.
We restate them here for clarity.

\begin{definition}[\cite{Bhatia-1997}]\label{def:Frechet-deriv} Given a map $f:\mcl{A} \to \Herm{\mcl{Y}}$ where $\mcl{A} \subseteq \Herm{\mcl{X}}$, the map is \textit{Fr\'{e}chet} differentiable at $X \in \Herm{\mcl{X}}$ if there exists a linear map $\Phi:\Herm{\mcl{X}}\to \Herm{\mcl{Y}}$ such that it holds
\begin{align}\label{eq:frechet-deriv-defn}
    \lim_{Z\to 0} \frac{\|f(X+Z) - f(X) - \Phi(Z)\|}{\|Z\|} = 0 \ .
\end{align}
\end{definition}
When such a map $\Phi$ exists, it is unique and we denoted it as $Df(X)$.
This derivative has many known properties, such as linearity.
\begin{proposition}\label{prop:frech-relation}(See \cite{Bhatia-1997,Coutts-2021a})
    When $f$ is differentiable at $X$, the map is the directional derivative of $f$ at $X$ for any $Z \in \Herm{\mcl{X}}$.
    That is, when $f$ is differentiable at $X$,
    \begin{align}\label{eq:frech-deriv-to-direc-deriv}
        Df(X)[Z] = \frac{d}{dt} f(X+tZ) \vert_{t=0}
    \end{align}
    for all $Z \in \Herm{\mcl{X}}$.
    Moreover, for $f:\mcl{A} \to \reals$, when $f$ is differentiable at $X$,
    \begin{align}\label{eq:frech-deriv-to-grad}
        Df(X)[Z] = \langle \nabla f(X), Z \rangle
    \end{align}
    for all $Z \in \Herm{\mcl{X}}$.
\end{proposition}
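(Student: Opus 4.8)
The proposition packages two standard facts: (i) that Fr\'echet differentiability implies G\^ateaux (directional) differentiability with the same derivative, and (ii) the Riesz representation of a linear functional on a finite-dimensional real inner-product space; I would prove them in that order. For \eqref{eq:frech-deriv-to-direc-deriv} the plan is to feed the increment $W = tZ$ into the defining limit of \cref{def:Frechet-deriv}. Assuming $f$ is Fr\'echet differentiable at $X$ --- so $X$ is interior to the domain $\mcl{A}$, hence $X + tZ \in \mcl{A}$ for all sufficiently small $|t|$ --- write $f(X+W) - f(X) - Df(X)[W] = r(W)$ with $\norm{r(W)}/\norm{W} \to 0$ as $W \to 0$. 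For a fixed $Z \in \Herm{\mcl{X}}$ with $Z \neq 0$, real-linearity of the map $Df(X)$ gives $Df(X)[tZ] = t\,Df(X)[Z]$, whence
\begin{equation}
    \frac{f(X+tZ) - f(X)}{t} = Df(X)[Z] + \frac{r(tZ)}{t}\ , \qquad \norm{r(tZ)/t} = \norm{Z}\,\norm{r(tZ)}/\norm{tZ} \xrightarrow[t\to 0]{} 0\ .
\end{equation}
Letting $t \to 0$ yields $\frac{d}{dt}f(X+tZ)\vert_{t=0} = Df(X)[Z]$, and the case $Z = 0$ is trivial. As this uses only norms and linearity, it applies to the operator-valued $f$ of the proposition, not just the scalar case.

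For \eqref{eq:frech-deriv-to-grad} I would specialize to $f:\mcl{A}\to\reals$, so that $Df(X)$ is an $\reals$-linear functional on $\Herm{\mcl{X}}$, a finite-dimensional real Hilbert space under the Hilbert--Schmidt inner product $\inner{A}{B} = \trace{AB}$ (real-valued and positive definite on Hermitian operators). The Riesz representation theorem in finite dimensions --- concretely, expanding $Df(X)$ against an orthonormal basis of $\Herm{\mcl{X}}$ --- then produces a unique $G \in \Herm{\mcl{X}}$ with $Df(X)[Z] = \inner{G}{Z}$ for all $Z$; defining $\nabla f(X) \coloneqq G$ establishes \eqref{eq:frech-deriv-to-grad}. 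Combined with the first part, this also gives $\frac{d}{dt}f(X+tZ)\vert_{t=0} = \inner{\nabla f(X)}{Z}$ and, importantly, $\nabla f(X) \in \Herm{\mcl{X}}$.

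I do not expect a real obstacle here; the points that need care are purely bookkeeping: that the interior condition implicit in \cref{def:Frechet-deriv} is exactly what makes $f(X+tZ)$ defined near $t = 0$; that the remainder estimate $o(\norm{W})$ genuinely transfers to $o(t)$ under the substitution $W = tZ$, as recorded in the displayed limit; and that $\Herm{\mcl{X}}$ must be treated as a \emph{real} inner-product space, so that the Riesz representative $\nabla f(X)$ is again a Hermitian operator --- the feature that makes downstream identities such as $\nabla \F{E}{X} = E + \identity + \ln(X)$ meaningful.
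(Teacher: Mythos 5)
Your proposal is correct, and in fact the paper offers no proof of this proposition at all: it is stated as a known fact with pointers to \cite{Bhatia-1997,Coutts-2021a}, so there is no in-paper argument to compare against, and what you have written is precisely the standard argument those references contain. Your restriction-to-lines computation with $W = tZ$ is sound, including for the two-sided limit at $t=0$, since $\norm{tZ} = \abs{t}\,\norm{Z}$ makes the estimate $\norm{r(tZ)}/\abs{t} = \norm{Z}\,\norm{r(tZ)}/\norm{tZ} \to 0$ valid for negative $t$ as well; and your insistence on treating $\Herm{\mcl{X}}$ as a finite-dimensional \emph{real} inner-product space, so that the Riesz representative of the real-linear functional $Df(X)$ is itself Hermitian, is exactly the feature the paper relies on downstream when it extracts $\nabla \F{E}{X} = E + \identity + \ln(X)$ in the proof of \cref{lem:OpDerivative}. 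One small remark: as you implicitly acknowledge by writing $\nabla f(X) \coloneqq G$, the identity \eqref{eq:frech-deriv-to-grad} is best read as the definition of the gradient (its Riesz representative) rather than a theorem about an independently defined object; this is consistent with how the paper uses it, so nothing is lost.
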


Combining the two points of the previous proposition, whenever we can prove $f$ is differentiable at $X$, if we can determine its directional derivative, we have also determined the gradient of $f$ at $X$, which is what we want.
To do this, we need the following lemma, which is a strengthening of a known result \cite{Carlen-2010}.
The basic idea is to prove the result for polynomials and then extend the result via continuity.
We remark our result is in fact a generalization of what is given in \cite{Carlen-2010} as we do not require $H$ to be positive definite in the following statement.
\begin{lemma}\label{lem:Carlen-Generalization}
    Let $I \subset \reals$ be an interval.
    Let $f : I \to \reals$ be once continuously differentiable, $f \in C^{1}(I)$.
    Let $X \in \Herm{\mcl{X}}$ such that its eigenvalues are contained within the interval $I$, $\spec{X} \subset I$.
    Then, the extension of the map $f$ to Hermitian matrices via the spectral decomposition is Fr\'{e}chet differentiable at $X$ and we have for all $H \in \Herm{\mcl{X}}$,
    \begin{align}\label{lem:thm-version-of-carlen-ext}
         D\trace{f(X)}[H] = \frac{d}{dt} \trace{f(X+tH)} \Big \vert_{t=0} = \trace{f'(X)H} \ .
    \end{align}
\end{lemma}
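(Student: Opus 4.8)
The plan is to establish \cref{lem:Carlen-Generalization} in two stages: first for polynomials, where the trace derivative formula is essentially a cyclic-trace bookkeeping exercise, and then for general $f \in C^1(I)$ by a uniform-approximation argument. For the polynomial case, write $p(x) = \sum_k c_k x^k$ and compute $\frac{d}{dt}\trace{(X+tH)^k}\big\vert_{t=0}$ directly: expanding $(X+tH)^k$, the coefficient of $t^1$ is $\sum_{j=0}^{k-1} X^j H X^{k-1-j}$, and cyclicity of the trace collapses this to $k\,\trace{X^{k-1}H}$. Summing over $k$ gives $\trace{p'(X)H}$, which is exactly the claimed identity for polynomials; Fr\'echet differentiability in this case is immediate since $t \mapsto \trace{p(X+tH)}$ is literally a polynomial in $t$, and one also checks the stronger multivariate Fr\'echet condition \eqref{eq:frechet-deriv-defn} holds since the remainder is $O(\|Z\|^2)$.

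Next I would extend to $f \in C^1(I)$. The key point is that $\spec{X} \subset I$ and, since $I$ can be taken to be a compact interval containing a neighborhood of $\spec{X}$ (or one works on a closed subinterval $[a,b] \subset I$ with $\spec{X} \subset (a,b)$), by the Weierstrass approximation theorem there is a sequence of polynomials $p_n$ with $p_n \to f$ and $p_n' \to f'$ uniformly on $[a,b]$. For $t$ small enough, $\spec{X+tH} \subset [a,b]$ as well, so $\trace{p_n(X+tH)} \to \trace{f(X+tH)}$ uniformly in $t$ near $0$. A standard theorem on interchanging limits and derivatives (uniform convergence of the derivatives $\frac{d}{dt}\trace{p_n(X+tH)} = \trace{p_n'(X+tH)H} \to \trace{f'(X+tH)H}$, uniformly in $t$ on a neighborhood of $0$) then gives that $t \mapsto \trace{f(X+tH)}$ is differentiable at $t=0$ with derivative $\trace{f'(X)H}$. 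To promote the directional-derivative statement to genuine Fr\'echet differentiability of $X \mapsto \trace{f(X)}$, invoke the known fact (this is the content of the result in \cite{Carlen-2010} being generalized, and follows from the integral representation of the Fr\'echet derivative of $f(X)$ via divided differences, which is continuous in $X$) that $X \mapsto \trace{f(X)}$ is $C^1$ on the open set of Hermitian matrices with spectrum in $I$; a map whose directional derivatives exist and depend continuously and linearly on the direction and on the base point is Fr\'echet differentiable, and then \cref{prop:frech-relation} identifies $D\trace{f(X)}[H]$ with the directional derivative computed above.

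The main obstacle is the removal of the positive-definiteness hypothesis present in \cite{Carlen-2010}. Their argument presumably uses $\spec{X} \subset (0,\infty)$ to control the analytic behavior of $f$ (e.g.\ via a power series or Cauchy integral on a contour avoiding the origin); for a general interval $I$ and merely $C^1$ regularity one does not have analyticity, so the polynomial-approximation route is essential rather than optional. The delicate step is ensuring that the approximation is valid \emph{uniformly in $t$}: one must first fix a compact interval $[a,b]$ with $\spec{X}$ in its interior, then observe that by continuity of eigenvalues there is $\delta > 0$ with $\spec{X+tH} \subset [a,b]$ for all $|t| < \delta$, and only then choose the polynomials $p_n$ approximating $f$ on $[a,b]$. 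With the order of quantifiers handled correctly, the interchange-of-limits argument is routine. Everything else — the cyclic-trace computation for monomials, linearity of $D$, and the passage from directional to Fr\'echet derivative via \cref{prop:frech-relation} — is standard.
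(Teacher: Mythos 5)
Your proposal is correct and shares the paper's skeleton --- prove the identity for polynomials via the expansion of $(X+tH)^k$ and cyclicity of the trace (this is exactly \cref{prop:carlen-extension-for-polynomials}), then pass to general $f\in C^1(I)$ by choosing polynomials with $p_n\to f$ and $p_n'\to f'$ uniformly on a compact $[a,b]$ that traps $\spec{X+tH}$ for small perturbations --- but the closing argument differs. The paper (\cref{appx:proofCarlen}) follows Bhatia's Theorem V.3.3: it applies the mean value theorem for Fr\'echet derivatives to differences $g_m-g_n$ of the polynomial approximants and runs an $\varepsilon/3$ estimate that verifies the defining inequality of the Fr\'echet derivative directly, so it only needs the bare existence statement of \cref{lem:cont-dif-frech-deriv} (and arguably not even that). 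You instead differentiate along each line $t\mapsto \trace{f(X+tH)}$ by the scalar interchange-of-limit-and-derivative theorem, which cleanly yields the directional derivative $\trace{f'(X)H}$, and then upgrade Gateaux to Fr\'echet by invoking continuity of the derivative in the base point (or outright citing that $X\mapsto\trace{f(X)}$ is $C^1$ via the divided-difference representation). That upgrade is sound, and can be made self-contained by noting that $Y\mapsto f'(Y)$ is continuous on matrices with spectrum in $[a,b]$ (again by polynomial approximation) and appealing to the standard Gateaux-with-continuous-derivative-implies-Fr\'echet theorem together with \cref{prop:frech-relation}; but as stated it leans harder on external machinery than the paper does --- if one is willing to cite the Daleckii--Krein/Hiai formula with its continuity, the whole lemma follows in a few lines, so your route trades the paper's hands-on quantitative estimate for a citation that sits close to the conclusion. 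One small mischaracterization: the generalization over \cite{Carlen-2010} claimed in the paper is that the direction $H$ need not be positive definite, not that $\spec{X}\subset(0,\infty)$ is dropped; this does not affect the validity of your argument.
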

\begin{proof}
    See \cref{appx:proofCarlen}.
\end{proof}

We are now ready to establish the gradient of $\Phi_E$ by applying the above result.
\begin{proof}[Proof of \cref{lem:OpDerivative}]
    First note that $\trace{X \ln (X)} = \trace{f(X)}$ where  $f(t)\coloneqq t\ln(t)$, i.e., it is a well-known trace function.
    By linearity of the Fr\'{e}chet derivative, whenever it exists for $H \in \Herm{\mcl{X}}$,
    \begin{equation}
        D\Phi_{E}(X)[H] = D\inner{E}{X}[H] + D\trace{f(X)}[H] \ .
    \end{equation}
    Moreover, recall that if two functions are Fr\'{e}chet differentiable at a point, then their sum is also Fr\'{e}chet differentiable at that point as may be verified rather directly from Definition \ref{def:Frechet-deriv} and the uniqueness of the Fr\'{e}chet derivative.
    From these two points, it suffices to consider the differentiability of the two functions independently.

    First, for all $X \in \Herm{\mcl{X}}$, $D\inner{E}{X}[H] = \inner{E}{H}$.
    This may be verified by substituting this map as $\Phi$ in \cref{eq:frechet-deriv-defn} and verifying it satisfies the condition.
    Note this means that $\inner{E}{X}$ is differentiable on $X \in \Herm{\mcl{X}}$.

    Next, using $\frac{d}{dt} t \ln(t) = 1 + \ln(t)$ and Lemma \ref{lem:Carlen-Generalization}, which apply so long as $\spec{X} \in (0,+\infty)$,
    \begin{align}
        D\trace{f(X)}[H] = \inner{\identity + \ln(X)}{H}\ .
    \end{align}

    Combining the above points, we have that $\Phi_{E}(X)$ is differentiable at all $X \in \Pd{\mcl{X}}$, where
    \begin{align}
        D\Phi_{E}(X)[H] = \inner{E + \identity + \ln(X)}{H}\ .
    \end{align}
    By \cref{eq:frech-deriv-to-grad}, we may conclude $\nabla \Phi_{E}(X) = E + \identity + \ln(X)$ for $X \in \Pd{\mcl{X}}$.
\end{proof}

\section{Sublinear regret when learning positive semidefinite objects} \label{sec:subLinRegret}

In this section, we formally show that the regret function for our online learning framework scales sublinearly with the time horizon $T$.
From our generalized version of quantum Pinkser's inequality described in \cref{sec:pinsker}, we have an upper bound on $\trnorm{\omega_t - \omega_{t+1}}^2$, where $\omega_t$ is the unique strategy that follows from \cref{algo:qrftl} and is positive definite as shown in \cref{sec:pd}.
On applying \cref{cor:quantum-gen-pinsker} to bound trace distance between $\omega_t$ and $\omega_{t+1}$, we get

    \begin{equation}
        \frac{1}{4} \trnorm{\omega_{t} - \omega_{t+1}}^2 \leq \max\{\trace{\omega_{t}}, \trace{\omega_{t+1}}\} \left[ \relentr{\omega_{t}}{\omega_{t+1}} - \trace{\omega_{t} - \omega_{t+1}} \right]\ .
    \end{equation}

From the definition of Bregman divergence in \cref{eq:breg-phi-e-defn} and the gradient calculation for operator given by \cref{lem:OpDerivative}, the Bregman divergence $\bregman{\Phi_E}{\omega_t}{\omega_{t+1}}$ associated with the function $\F{E}{X} \coloneqq \inner{E}{X} + \reg{X}$ calculates to

\begin{equation}
    \bregman{\Phi_E}{\omega_t}{\omega_{t+1}} = \relentr{\omega_{t}}{\omega_{t+1}} - \trace{\omega_{t} - \omega_{t+1}} = \bregman{\reg{}}{\omega_t}{\omega_{t+1}}\ ,
\end{equation}
where our regularizer $\reg{}$ is simply the negative von Neumann entropy denoted by $-\entr{}$.

This allows us to restate the previous bound as
\begin{equation}
    \frac{1}{4} \trnorm{\omega_{t} - \omega_{t+1}}^2 \leq \max\{\trace{\omega_{t}}, \trace{\omega_{t+1}}\} \, \bregman{-\entr{}}{\omega_t}{\omega_{t+1}}\ .
\end{equation}

Next we discuss another lemma which is useful to provide to an upper bound to $\bregman{-\entr{}}{\omega_t}{\omega_{t+1}}\!$ in terms of the inner product $\inner{\nabla_t}{\omega_{t} - \omega_{t+1}}$.

\begin{lemma}\label{lemma:bregman-inner}
    \cref{algo:qrftl} guarantees that for all $t \in \{1, \ldots, T-1\}$,
    \begin{equation}
        \bregman{-\entr{}}{\omega_t}{\omega_{t+1}} \leq \eta \, \inner{\nabla_t}{\omega_{t} - \omega_{t+1}}\ .
    \end{equation}
\end{lemma}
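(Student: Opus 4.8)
The plan is to run the standard regularized-follow-the-leader ``stability'' argument, checking at each step that it survives the passage to complex positive semidefinite operators. Introduce the cumulative co-objects $G_t \coloneqq \eta\sum_{s=1}^{t-1}\nabla_s$ and $G_{t+1}\coloneqq G_t + \eta\nabla_t$, both in $\Herm{\mcl{X}}$, so that by the update rule \eqref{eq:learner-update-QRFTL} the iterates satisfy $\omega_t = \argmin_{\varphi\in\mcl{K}}\F{G_t}{\varphi}$ and $\omega_{t+1} = \argmin_{\varphi\in\mcl{K}}\F{G_{t+1}}{\varphi}$. Since $\alpha\identity_{\mcl{X}}\in\mcl{K}$, \cref{lemma:minimzerIsPD} gives $\omega_t,\omega_{t+1}\in\Pd{\mcl{X}}$, hence by \cref{lem:OpDerivative} the function $\F{G_{t+1}}{}$ is Fr\'echet differentiable at $\omega_{t+1}$ with $\nabla\F{G_{t+1}}{\omega_{t+1}} = G_{t+1}+\identity_{\mcl{X}}+\ln(\omega_{t+1})$, and similarly for $\F{G_t}{}$ at $\omega_t$; in particular $\bregman{-\entr{}}{\omega_t}{\omega_{t+1}}$ is well-defined because $\omega_{t+1}\in\Pd{\mcl{X}}$, the set on which $-S$ is differentiable. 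The key bookkeeping observation is that $\F{E}{X}$ differs from the regularizer $-\entr{X}$ only by the \emph{linear} term $\inner{E}{X}$, and a Bregman divergence is unchanged by adding a linear term; hence $\bregman{\Phi_{G_{t+1}}}{\omega_t}{\omega_{t+1}} = \bregman{-\entr{}}{\omega_t}{\omega_{t+1}}$.

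Next I would establish and use the first-order optimality (variational) inequality at $\omega_{t+1}$: since $\omega_{t+1}$ minimizes the convex function $\F{G_{t+1}}{}$ over the convex set $\mcl{K}$ and is a point of differentiability, $\inner{\nabla\F{G_{t+1}}{\omega_{t+1}}}{X-\omega_{t+1}}\geq 0$ for all $X\in\mcl{K}$. For a convex body of complex Hermitian operators this follows by restricting $\F{G_{t+1}}{}$ to the segment $s\mapsto\omega_{t+1}+s(X-\omega_{t+1})$, $s\in[0,1]$ (which stays in $\mcl{K}$, and in $\Pd{\mcl{X}}$ for small $s$ since $\Pd{\mcl{X}}$ is open), whose right derivative at $s=0$ equals the directional derivative $\inner{\nabla\F{G_{t+1}}{\omega_{t+1}}}{X-\omega_{t+1}}$ by \cref{prop:frech-relation} and must be nonnegative at a minimizer; this is exactly the variational inequality referenced after \cref{lem:OpDerivative}. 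Taking $X=\omega_t$ and unfolding the definition of the Bregman divergence, the last (gradient) term is $\leq 0$, giving
\begin{align*}
    \bregman{-\entr{}}{\omega_t}{\omega_{t+1}}
    &= \F{G_{t+1}}{\omega_t} - \F{G_{t+1}}{\omega_{t+1}} + \inner{\nabla\F{G_{t+1}}{\omega_{t+1}}}{\omega_{t+1}-\omega_t} \\
    &\leq \F{G_{t+1}}{\omega_t} - \F{G_{t+1}}{\omega_{t+1}}\ .
\end{align*}

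To finish, I would expand $\F{G_{t+1}}{X} = \F{G_t}{X} + \eta\inner{\nabla_t}{X}$ so that the linear part telescopes:
\begin{align*}
    \F{G_{t+1}}{\omega_t} - \F{G_{t+1}}{\omega_{t+1}}
    &= \bigl(\F{G_t}{\omega_t} - \F{G_t}{\omega_{t+1}}\bigr) + \eta\inner{\nabla_t}{\omega_t-\omega_{t+1}} \\
    &\leq \eta\inner{\nabla_t}{\omega_t-\omega_{t+1}}\ ,
\end{align*}
the inequality being immediate since $\omega_t$ minimizes $\F{G_t}{}$ over $\mcl{K}$ while $\omega_{t+1}\in\mcl{K}$. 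Chaining the two displays yields $\bregman{-\entr{}}{\omega_t}{\omega_{t+1}}\leq\eta\inner{\nabla_t}{\omega_t-\omega_{t+1}}$, which is the claim. The main obstacle is not the algebra (which is the three-line telescoping above) but the two ``interface'' points: confirming that each iterate lands in the open set $\Pd{\mcl{X}}$ where $\F{G_{t+1}}{}$ is differentiable (supplied by \cref{lemma:minimzerIsPD}), and that the first-order optimality condition over a convex body of complex Hermitian operators is legitimate (supplied by \cref{lem:OpDerivative}, \cref{prop:frech-relation}, and the chain rule for Fr\'echet derivatives \citep{mordukhovich2006frechet,Coutts-2021a}).
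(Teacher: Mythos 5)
Your proposal is correct and follows essentially the same route as the paper's proof: bound $\bregman{-\entr{}}{\omega_t}{\omega_{t+1}}$ by $\F{G_{t+1}}{\omega_t}-\F{G_{t+1}}{\omega_{t+1}}$ using the first-order optimality condition at the positive definite minimizer $\omega_{t+1}$ (the paper's \cref{lemma:positiveInnerProd}, justified via \cref{lemma:minimzerIsPD} and \cref{lem:OpDerivative}), then telescope off the linear increment and use optimality of $\omega_t$ for the previous objective. The only cosmetic difference is notational ($G_t, G_{t+1}$ versus the paper's $\F{t-1}{}, \F{t}{}$), together with your slightly more explicit statement that Bregman divergences are unaffected by adding a linear term, which the paper records separately around \eqref{eq:breg-phi-e-defn}.
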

\begin{proof}
    Let $\F{t}{X} \coloneqq \inner{E}{X} + \reg{X}$ where $\reg{} = -\entr{}$ and $E = \eta \sum_{s=1}^t \nabla_s$ (here $\nabla_s$ is as defined in \cref{algo:qrftl}).
    Then,
    \begin{subequations}\label{eq:bregman1}
    \begin{align}
        \bregman{\F{t}{}}{\omega_t}{\omega_{t+1}}
        &= \F{t}{\omega_t} - \F{t}{\omega_{t+1}} - \inner{\nabla \F{t}{\omega_{t+1}}}{\omega_t - \omega_{t+1}} \\
        &\leq \F{t}{\omega_t} - \F{t}{\omega_{t+1}}\ .
    \end{align}
    \end{subequations}
    where the first equality follows from the definition of Bregman divergence in \cref{eq:breg-phi-e-defn} and the second inequality holds due to \cref{lemma:positiveInnerProd} as the function $\F{t}{}$ is minimized at $\omega_{t+1}$ so we have the inequality $\F{t}{\omega_{t+1}} \leq \F{t}{(1-\lambda) \omega_{t+1} + \lambda \omega_t}$ for all $\lambda \in [0,1]$.
    Moreover, $\omega_{t+1}$ being positive definite (and thus in the interior of PSD cone) ensures that the $\nabla \F{t}{\omega_{t+1}}$ is well-defined.

    From our definition of $\F{t}{}$ along with the observartion that $\F{t-1}{\omega_t} \leq \F{t-1}{\omega_{t+1}}$,
    \begin{subequations}\label{eq:bregman2}
    \begin{align}
        \F{t}{\omega_t} - \F{t}{\omega_{t+1}}
        &= \F{t-1}{\omega_t} - \F{t-1}{\omega_{t+1}} + \eta \inner{\nabla_t}{\omega_t- \omega_{t+1}} \\
        &\leq \eta \, \inner{\nabla_t}{\omega_t - \omega_{t+1}}\ .
    \end{align}
    \end{subequations}
    Combining \cref{eq:bregman1} and \cref{eq:bregman2} completes the proof.
\end{proof}

\begin{proof}[Proof of \cref{lemma:InnerProductBound}]
    Combining \cref{thm:generalized-Pinsker's-inequality} and \cref{lemma:bregman-inner} we get
    \begin{equation}
        \frac{1}{4} \trnorm{\omega_t - \omega_{t+1}}^2 \leq \eta \, \max\{\trace{\omega_t}, \trace{\omega_{t+1}}\} \, \inner{\nabla_t}{\omega_t - \omega_{t+1}}\ .
    \end{equation}
    As $\inner{\nabla_t}{\omega_t - \omega_{t+1}} \leq \trnorm{\omega_t - \omega_{t+1}} \, \opnorm{\nabla_t}$ due to H\"{o}lder's inequality, we get
    \begin{equation}
        \trnorm{\omega_t - \omega_{t+1}} \leq 4 \eta \, \max\{\trace{\omega_t}, \trace{\omega_{t+1}}\} \, \opnorm{\nabla_t}\ .
    \end{equation}
    Apply H\"{o}lder's inequality again and use the above bound to obtain
    \begin{equation}
        \inner{\nabla_t}{\omega_t - \omega_{t+1}} \leq \trnorm{\omega_t - \omega_{t+1}} \, \opnorm{\nabla_t}
        \leq 4 \eta \, \max\{\trace{\omega_t}, \trace{\omega_{t+1}}\} \, \opnorm{\nabla_t}^2\ ,
    \end{equation}
    which completes the proof.
\end{proof}

\begin{proof}[Proof of \cref{thm:sublinearity}]
    Combining \cref{lemma:RegretBoundRFTL,lemma:InnerProductBound} yields
    \begin{equation}\label{eq:regret-combined}
        \regret{T} \leq 4 \eta \, \sum_{t=1}^T \left[ \max\{\trace{\omega_t},\trace{\omega_{t+1}}\} \, \opnorm{\nabla_t}^2 \right] + \frac{1}{\eta} D^2\ .
    \end{equation}
    A bound on $\opnorm{\nabla_t}$ is implied by \cref{lemma:boundOnSubgrad}.
    Alternatively, we may obtain it via the definition of $\nabla_t$, Lipschitz-continuity of $\ell_t$, and openness of $\reals$ as follows.
    \begin{equation}\label{eq:bound-nabla}
        \opnorm{\nabla_t} = \opnorm{\ell'_t(\inner{E_t}{\omega_t}) E_t} \leq B \opnorm{E_t} \leq B C\ .
    \end{equation}
    Substituting \cref{eq:bound-nabla} along with $\max\{\trace{\omega_t},\trace{\omega_{t+1}}\} \leq A$ into \cref{eq:regret-combined} yields
    \begin{equation}
        \regret{T} \leq 4 \eta \, A B^2 C^2 T + \frac{1}{\eta} D^2\ ,
    \end{equation}
    where setting $\eta = \frac{D}{2 B C \sqrt{A T}}$ attains $\regret{T} \leq 4 B C D \sqrt{A T}$, the required bound.
\end{proof}

\section*{Acknowledgements}

The authors thank
Asad Raza,
Matthias C. Caro,
Jens Eisert,
and Sumeet Khatri
for sharing a draft of their independent and concurrent work \emph{Online learning of quantum processes}.

I.G.~was supported by the National Science Foundation under Grant No.~2112890.
I.G.~also acknowledges support from an Illinois Distinguished Fellowship during a portion of this project.
A.B.~was partially supported by the BitShares Fellowship at Virginia Tech.
This research was funded in part by the Commonwealth of Virginia’s Commonwealth Cyber Initiative (CCI) under grant number 467714.
\bibliographystyle{alpha}
\bibliography{references}

\newcommand{\etalchar}[1]{$^{#1}$}
\begin{thebibliography}{BMD{\etalchar{+}}05}

\bibitem[Aar04]{aaronson2004limitations}
Scott Aaronson.
\newblock Limitations of quantum advice and one-way communication.
\newblock In {\em Proceedings. 19th IEEE Annual Conference on Computational
  Complexity, 2004.}, pages 320--332. IEEE, 2004.

\bibitem[Aar05]{aaronson2005qma}
Scott Aaronson.
\newblock {QMA}/qpoly is contained in {PSPACE}/poly: {D}e-{M}erlinizing quantum
  protocols.
\newblock {\em arXiv preprint quant-ph/0510230}, 2005.

\bibitem[Aar18]{aaronson2018shadow}
Scott Aaronson.
\newblock Shadow tomography of quantum states.
\newblock In {\em Proceedings of the 50th annual ACM SIGACT {S}ymposium on
  {T}heory of {C}omputing}, pages 325--338, 2018.

\bibitem[ACH{\etalchar{+}}18]{aaronson2018online}
Scott Aaronson, Xinyi Chen, Elad Hazan, Satyen Kale, and Ashwin Nayak.
\newblock Online learning of quantum states.
\newblock {\em Advances in {N}eural {I}nformation {P}rocessing {S}ystems}, 31,
  2018.

\bibitem[Bha97]{Bhatia-1997}
Rajendra Bhatia.
\newblock {\em Matrix Analysis}, volume 169.
\newblock Springer {S}cience \& {B}usiness {M}edia, 1997.

\bibitem[BMD{\etalchar{+}}05]{Banerjee-2005a}
Arindam Banerjee, Srujana Merugu, Inderjit~S Dhillon, Joydeep Ghosh, and John
  Lafferty.
\newblock Clustering with {B}regman divergences.
\newblock {\em Journal of machine learning research}, 6(10), 2005.

\bibitem[Bre67]{Bregman-1967a}
Lev~M Bregman.
\newblock The relaxation method of finding the common point of convex sets and
  its application to the solution of problems in convex programming.
\newblock {\em USSR Computational Mathematics and Mathematical Physics},
  7(3):200--217, 1967.

\bibitem[Car10]{Carlen-2010}
Eric Carlen.
\newblock Trace inequalities and quantum entropy: An introductory course.
\newblock {\em Entropy and the quantum}, 529:73--140, 2010.

\bibitem[CDP08]{chiribella_quantum_2008}
G.~Chiribella, G.~M. D’Ariano, and P.~Perinotti.
\newblock Quantum circuit architecture.
\newblock {\em Physical Review Letters}, 101(6):060401, August 2008.

\bibitem[CGW21]{Coutts-2021a}
Bryan Coutts, Mark Girard, and John Watrous.
\newblock Certifying optimality for convex quantum channel optimization
  problems.
\newblock {\em {Quantum}}, 5:448, 2021.

\bibitem[CHL{\etalchar{+}}24]{chen2024adaptive}
Xinyi Chen, Elad Hazan, Tongyang Li, Zhou Lu, Xinzhao Wang, and Rui Yang.
\newblock Adaptive online learning of quantum states.
\newblock {\em Quantum}, 8:1471, 2024.

\bibitem[CPF{\etalchar{+}}10]{cramer2010efficient}
Marcus Cramer, Martin~B Plenio, Steven~T Flammia, Rolando Somma, David Gross,
  Stephen~D Bartlett, Olivier Landon-Cardinal, David Poulin, and Yi-Kai Liu.
\newblock Efficient quantum state tomography.
\newblock {\em Nature Communications}, 1(1):149, 2010.

\bibitem[CZ92]{Censor-1992a}
Yair Censor and Stavros~Andrea Zenios.
\newblock Proximal minimization algorithm with {D}-functions.
\newblock {\em Journal of Optimization Theory and Applications},
  73(3):451--464, 1992.

\bibitem[GGF14]{girard_convex_2014}
Mark~W. Girard, Gilad Gour, and Shmuel Friedland.
\newblock On convex optimization problems in quantum information theory.
\newblock {\em Journal of Physics A: Mathematical and Theoretical}, 47(50),
  2014.

\bibitem[GJ14]{gutoski2014process}
Gus Gutoski and Nathaniel Johnston.
\newblock Process tomography for unitary quantum channels.
\newblock {\em Journal of Mathematical Physics}, 55(3), 2014.

\bibitem[Gut10]{gutoski2012quantum}
Gus Gutoski.
\newblock Quantum strategies and local operations.
\newblock PhD thesis. Available as arXiv.org preprint 1003.0038 [quant-ph],
  2010.

\bibitem[GW07]{Gutoski_2007}
Gus Gutoski and John Watrous.
\newblock Toward a general theory of quantum games.
\newblock In {\em Proceedings of the thirty-ninth annual ACM {S}ymposium on
  {T}heory of {C}omputing}, pages 565--574. ACM, 2007.

\bibitem[Hay23]{Hayashi-2023a}
Masahito Hayashi.
\newblock {B}regman divergence based em algorithm and its application to
  classical and quantum rate distortion theory.
\newblock {\em IEEE Transactions on Information Theory}, 69(6), 2023.

\bibitem[Haz16]{hazan2016introduction}
Elad Hazan.
\newblock Introduction to online convex optimization.
\newblock {\em Foundations and Trends in Optimization}, 2(3-4):157--325, 2016.

\bibitem[HCP23]{huang2023learning}
Hsin-Yuan Huang, Sitan Chen, and John Preskill.
\newblock Learning to predict arbitrary quantum processes.
\newblock {\em PRX Quantum}, 4(4):040337, 2023.

\bibitem[HG07]{hjorungnes2007complex}
Are Hj{\o}rungnes and David Gesbert.
\newblock Complex-valued matrix differentiation: Techniques and key results.
\newblock {\em IEEE Transactions on Signal Processing}, 55(6):2740--2746, 2007.

\bibitem[HHJ{\etalchar{+}}17]{haah2017sample}
Jeongwan Haah, Aram~W. Harrow, Zhengfeng Ji, Xiaodi Wu, and Nengkun Yu.
\newblock Sample-optimal tomography of quantum states.
\newblock {\em IEEE Transactions on Information Theory}, pages 5628--5641,
  2017.

\bibitem[HKP20]{huang_predicting_2020}
Hsin-Yuan Huang, Richard Kueng, and John Preskill.
\newblock Predicting many properties of a quantum system from very few
  measurements.
\newblock {\em Nature Physics}, 16(10):1050--1057, October 2020.

\bibitem[H{\"{o}}r03]{Hormander-2003a}
Lars H{\"{o}}rmander.
\newblock {\em The Analysis of Linear Partial Differential Operators I}.
\newblock Springer-Verlag, 2nd edition, 2003.

\bibitem[HP14]{Hiai-2014a}
Fumio Hiai and D{\'e}nes Petz.
\newblock {\em Introduction to Matrix Analysis and Applications}.
\newblock Springer Science \& Business Media, 2014.

\bibitem[HSF23]{He-2023a}
Kerry He, James Saunderson, and Hamza Fawzi.
\newblock A {B}regman proximal perspective on classical and quantum
  {B}lahut-{A}rimoto algorithms.
\newblock Available as arXiv.org preprint 2306.04492 [cs.IT], 2023.

\bibitem[JOG22]{jezequel_efficient_2022}
Rémi Jézéquel, Dmitrii~M. Ostrovskii, and Pierre Gaillard.
\newblock Efficient and near-optimal online portfolio selection.
\newblock Available as arXiv.org preprint 2209.13932 [math, q-fin], September
  2022.

\bibitem[JRR{\etalchar{+}}14]{jullien2014quantum}
T.~Jullien, P.~Roulleau, B.~Roche, A.~Cavanna, Y.~Jin, and D.~C. Glattli.
\newblock Quantum tomography of an electron.
\newblock {\em Nature}, 514(7524):603--607, 2014.

\bibitem[Kiw97]{Kiwiel-1997}
Krzysztof~C. Kiwiel.
\newblock Proximal minimization methods with generalized {B}regman functions.
\newblock {\em SIAM Journal on Control and Optimization}, 35(4):1142--1168,
  1997.

\bibitem[Kle31]{klein1931quantenmechanischen}
O.~Klein.
\newblock Zur quantenmechanischen {B}egr{\"u}ndung des zweiten {H}auptsatzes
  der {W}{\"a}rmelehre.
\newblock {\em Zeitschrift fur Physik}, 72(11-12):767--775, 1931.

\bibitem[LHT22]{lumbreras2022multi}
Josep Lumbreras, Erkka Haapasalo, and Marco Tomamichel.
\newblock Multi-armed quantum bandits: Exploration versus exploitation when
  learning properties of quantum states.
\newblock {\em Quantum}, 6:749, June 2022.

\bibitem[LIR68]{lanford1968mean}
Oscar~E Lanford~III and Derek~W Robinson.
\newblock Mean entropy of states in quantum-statistical mechanics.
\newblock {\em Journal of Mathematical Physics}, 9(7):1120--1125, 1968.

\bibitem[MNY06]{mordukhovich2006frechet}
Boris~S Mordukhovich, Nguyen~Mau Nam, and ND~Yen.
\newblock Fr{\'e}chet subdifferential calculus and optimality conditions in
  nondifferentiable programming.
\newblock {\em Optimization}, 55(5-6):685--708, 2006.

\bibitem[OP04]{ohya2004quantum}
Masanori Ohya and D{\'e}nes Petz.
\newblock {\em Quantum Entropy and its Use}.
\newblock Springer Science \& Business Media, 2004.

\bibitem[Ora23]{orabona_modern_2023}
Francesco Orabona.
\newblock A modern introduction to online learning.
\newblock Available as arXiv.org preprint 1912.13213 [cs, math, stat], May
  2023.

\bibitem[OW16]{o2016efficient}
Ryan O'Donnell and John Wright.
\newblock Efficient quantum tomography.
\newblock In {\em Proceedings of the 48th annual ACM Symposium on Theory of
  Computing}, pages 899--912, 2016.

\bibitem[Pet07]{Petz-2007a}
Denes Petz.
\newblock {B}regman divergence as relative operator entropy.
\newblock {\em Acta Mathematica Hungarica}, 116(1-2):127--131, 2007.

\bibitem[PW23]{polyanskiy-2023a}
Yury Polyanskiy and Yihong Wu.
\newblock {\em Information Theory: From Coding to Learning}.
\newblock Cambridge University Press, forthcoming, 2023+.

\bibitem[QTF19]{Quadeer-2019a}
Maria Quadeer, Marco Tomamichel, and Christopher Ferrie.
\newblock Minimax quantum state estimation under {B}regman divergence.
\newblock {\em Quantum}, 3:126, 2019.

\bibitem[Rus02]{ruskai_inequalities_2002}
Mary~Beth Ruskai.
\newblock Inequalities for quantum entropy: {A} review with conditions for
  equality.
\newblock {\em Journal of Mathematical Physics}, 43(9):4358--4375, September
  2002.

\bibitem[Sca24]{Gen-Pinsker-Ineqs}
{\hspace*{-.35em}}~Subset{ of}~current authors.
\newblock Divergence inequalities from {T}aylor's theorem (in preparation),
  2024.

\bibitem[SV16]{sason2016f}
Igal Sason and Sergio Verd{\'u}.
\newblock $f$-divergence inequalities.
\newblock {\em IEEE Transactions on Information Theory}, 62(11):5973--6006,
  2016.

\bibitem[TCXL23]{tseng_online_2023}
Wei-Fu Tseng, Kai-Chun Chen, Zi-Hong Xiao, and Yen-Huan Li.
\newblock Online learning quantum states with the logarithmic loss via
  {VB}-{FTRL}.
\newblock Available as arXiv.org preprint 2311.04237 [quant-ph, stat], November
  2023.

\bibitem[Wat18]{watrous2018theory}
John Watrous.
\newblock {\em The Theory of Quantum Information}.
\newblock Cambridge University Press, 2018.

\bibitem[Weh78]{wehrl_general_1978}
Alfred Wehrl.
\newblock General properties of entropy.
\newblock {\em Reviews of Modern Physics}, 50(2):221--260, April 1978.

\bibitem[Wil13]{Wilde-Book}
Mark~M Wilde.
\newblock {\em Quantum Information Theory}.
\newblock Cambridge University Press, 2013.

\bibitem[WK12]{warmuth_online_2012}
Manfred~K. Warmuth and Dima Kuzmin.
\newblock Online variance minimization.
\newblock {\em Machine Learning}, 87(1):1--32, April 2012.

\bibitem[YJZS20]{yang_revisiting_2020}
Feidiao Yang, Jiaqing Jiang, Jialin Zhang, and Xiaoming Sun.
\newblock Revisiting online quantum state learning.
\newblock {\em Proceedings of the AAAI Conference on Artificial Intelligence},
  34(04):6607--6614, April 2020.
\newblock Number: 04.

\bibitem[Yu13]{yu13}
Yao-Liang Yu.
\newblock Lecture notes on the strong convexity of von {N}eumann’s entropy.
\newblock Unpublished, 2013.

\bibitem[ZAK22]{zimmert2022pushing}
Julian Zimmert, Naman Agarwal, and Satyen Kale.
\newblock Pushing the efficiency-regret {P}areto frontier for online learning
  of portfolios and quantum states.
\newblock In {\em Conference on Learning Theory}, pages 182--226. PMLR, 2022.

\end{thebibliography}

\appendix

\section{Proof of Lemma~\ref{lem:first-order-taylor}}\label{appx:taylor}

In this appendix, we establish a corollary of the multivariate Taylor's theorem with the integral remainder form.
To state this theorem, we make use of the following multi-index notation.
\begin{definition}
A multi-index is an $n$-tuple of non-negative integers, i.e.,~$\alpha = (\alpha_{1},...,\alpha_{n})$ where $\alpha_{i} \in \naturals$ for all $i \in [n]$.
Given a multi-index $\alpha$ and an $n$-dimensional vector $\mbf{x}$,
\begin{align}
    |\alpha| &= \sum_{i \in [n]} \alpha_{i}\ ,&
    \alpha! &= \prod_{i \in [n]} \alpha_{i}!\ ,&
    \mbf{x}^{\alpha} &= \prod_{i \in [n]} x_{i}^{\alpha_{i}}\ ,&
    \partial^{\alpha} f &= \partial^{\alpha_{1}}_{1} \hdots \partial_{n}^{\alpha_{n}} f = \frac{\partial^{|\alpha|}f}{\partial x_{1}^{\alpha_{1}}x_{2}^{\alpha_{2}}\hdots x_{n}^{\alpha_{n}}} \ .
\end{align}
\end{definition}
With the notation stated, we present Multivariate Taylor's theorem.
We remark such a result is well-known and may be found for example in \cite{Hormander-2003a} under a slightly different parameterization.%
\begin{lemma}[Multivariate Taylor's theorem with integral remainder]
    Let $f: \reals^{n} \to \reals$ be $C^{k+1}$ on an open convex set $S$.
    If $\mbf{a} \in S$ and $\mbf{a} + \mbf{h}, \in S$ then
    \begin{equation}
        f(\mbf{a}+\mbf{h}) = \sum_{|\alpha| \leq k} \frac{\partial^{\alpha}f(\mbf{a})}{\alpha!}\mbf{h}^{\alpha} + R_{\mbf{a},k}(\mbf{h}) \ ,
    \end{equation}
    where the remainder in integral form is
    \begin{equation}
        R_{\mbf{a},k}(\mbf{h}) = (k+1) \sum_{|\alpha|=k+1} \frac{\mbf{h}^{\alpha}}{\alpha!} \int_{0}^{1} (1-t)^{k} \partial^{\alpha} f(\mbf{a}+t\mbf{h})\ .
    \end{equation}
\end{lemma}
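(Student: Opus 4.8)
The plan is to reduce the statement to the one-dimensional Taylor theorem with integral remainder by restricting $f$ to the line segment from $\mbf{a}$ to $\mbf{a}+\mbf{h}$. Concretely, I would set $g(t) \coloneqq f(\mbf{a}+t\mbf{h})$. Since $S$ is open and convex and contains the endpoints $\mbf{a}$ and $\mbf{a}+\mbf{h}$, the compact segment $\{\mbf{a}+t\mbf{h} : t \in [0,1]\}$ lies in $S$; a routine compactness argument then produces an open interval $J$ containing $[0,1]$ whose image under $t \mapsto \mbf{a}+t\mbf{h}$ still lies in $S$, so $g$ is well-defined and $C^{k+1}$ on $J$. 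Because $f \in C^{k+1}$, all mixed partial derivatives of order at most $k+1$ are continuous and hence (by Clairaut/Schwarz) symmetric, so the multi-index notation $\partial^{\alpha}f$ is unambiguous throughout.

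The next step is to compute the derivatives of $g$ in multi-index form. Iterating the ordinary chain rule $\frac{d}{dt}(\partial^{\beta}f)(\mbf{a}+t\mbf{h}) = \sum_{i=1}^{n} h_{i}\,(\partial_{i}\partial^{\beta}f)(\mbf{a}+t\mbf{h})$ and collecting like terms, I would prove by induction on $j \leq k+1$ that
\begin{equation}
    g^{(j)}(t) = \sum_{|\alpha|=j} \frac{j!}{\alpha!}\,(\partial^{\alpha}f)(\mbf{a}+t\mbf{h})\,\mbf{h}^{\alpha}\ ,
\end{equation}
the coefficient $j!/\alpha!$ being exactly the multinomial coefficient arising from the $j$-fold application of the operator $h_{1}\partial_{1}+\cdots+h_{n}\partial_{n}$; the inductive step reindexes $\alpha \mapsto \alpha + e_i$ and uses $\sum_i \beta_i = j+1$ to assemble $\frac{(j+1)!}{\beta!}$. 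I expect the main obstacle to be precisely this bookkeeping — tracking how each partial at level $j$ contributes to level $j+1$ and checking that the multinomial coefficients combine correctly — and this is where the hypothesis $f \in C^{k+1}$ (ensuring symmetry of partials, so the counting is clean) is actually used. In parallel I would record the one-variable Taylor theorem with integral remainder applied to $g$ on $[0,1]$, namely $g(1) = \sum_{j=0}^{k} \frac{1}{j!}g^{(j)}(0) + \frac{1}{k!}\int_{0}^{1}(1-t)^{k}g^{(k+1)}(t)\,dt$; if a self-contained derivation is desired, this follows by induction on $k$ from the fundamental theorem of calculus via repeated integration by parts (integrating $(1-t)^{j}$ against $g^{(j+1)}(t)$).

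Finally, I would substitute the chain-rule formula into the one-variable expansion. For the polynomial part, $\frac{1}{j!}g^{(j)}(0) = \sum_{|\alpha|=j}\frac{1}{\alpha!}(\partial^{\alpha}f)(\mbf{a})\,\mbf{h}^{\alpha}$, and summing over $j=0,\dots,k$ merges these into $\sum_{|\alpha|\leq k}\frac{\partial^{\alpha}f(\mbf{a})}{\alpha!}\mbf{h}^{\alpha}$, the claimed Taylor polynomial. For the remainder, $\frac{1}{k!}g^{(k+1)}(t) = \frac{(k+1)!}{k!}\sum_{|\alpha|=k+1}\frac{1}{\alpha!}(\partial^{\alpha}f)(\mbf{a}+t\mbf{h})\,\mbf{h}^{\alpha} = (k+1)\sum_{|\alpha|=k+1}\frac{\mbf{h}^{\alpha}}{\alpha!}(\partial^{\alpha}f)(\mbf{a}+t\mbf{h})$, and since $\mbf{h}^{\alpha}/\alpha!$ is independent of $t$ it pulls out of the integral, yielding exactly $R_{\mbf{a},k}(\mbf{h})$ as stated. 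Once the multi-index derivative formula is in hand, everything after it is routine substitution and reindexing.
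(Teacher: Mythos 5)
Your proposal is correct, but note that the paper never actually proves this lemma: it states it as a well-known result and cites H\"{o}rmander (2003) ``under a slightly different parameterization,'' reserving its appendix argument for the $k=1$ corollary (\cref{lem:first-order-taylor}), which is derived from this lemma by collecting the $\abs{\alpha}=2$ terms into the Hessian quadratic form. What you supply is the standard self-contained derivation that the citation points to: restrict $f$ to the segment via $g(t) = f(\mathbf{a}+t\mathbf{h})$ (openness plus convexity of $S$ indeed lets you extend to an open interval $J \supset [0,1]$, or one can simply work with one-sided derivatives at the endpoints), establish by induction the multinomial identity $g^{(j)}(t) = \sum_{\abs{\alpha}=j} \frac{j!}{\alpha!}\,(\partial^{\alpha}f)(\mathbf{a}+t\mathbf{h})\,\mathbf{h}^{\alpha}$ — where the $C^{k+1}$ hypothesis and Clairaut's theorem justify the symmetric collection of mixed partials — and then substitute into the one-variable Taylor formula with integral remainder, pulling the constants $\mathbf{h}^{\alpha}/\alpha!$ out of the integral and converting $(k+1)!/k!$ into the prefactor $(k+1)$. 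All three steps check out, including the coefficient bookkeeping you flag as the main obstacle, so your argument correctly fills in what the paper delegates to the literature; the only trade-off is length versus the paper's one-line citation.
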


\begin{proof}[Proof of \cref{lem:first-order-taylor}]
    Note that the $k = 0$ term is just $f(\mbf{a})$.
    The $k=1$ terms summed over is $\sum_{i} \frac{\partial f(\mbf{a})}{\partial x_{i}} h_{i} = \langle \nabla f(\mbf{a}), \mbf{h} \rangle$.
    Now we just need to deal with the remainder term,
    \begin{equation}
        R_{\mbf{a},1}(\mbf{h}) = 2 \sum_{|\alpha|=2} \frac{\mbf{h}^{\alpha}}{\alpha!} \int_{0}^{1} (1-t) \partial^{\alpha} f(\mbf{a}+t\mbf{h})\, dt\ .
    \end{equation}
    We can split this into two pieces.
    The first piece is when $\alpha$ contains a $2$ (thus the rest are zero).
    In this case, each term is of the form
    \begin{equation}
        \frac{h_{i}^{2}}{2} \int^{1}_{0} (1-t) \frac{\partial^{2}f(\mbf{a}+t\mbf{h})}{\partial x_{i}^{2}} dt \ .
    \end{equation}
    The other case is when $\alpha$ contains two ones.
    In this case, each term is of the form
    \begin{equation}
        h_{i}h_{j} \int^{1}_{0} (1-t) \frac{\partial^{2}f(\mbf{a}+t\mbf{h})}{\partial x_{i} \partial x_{j}} dt\ .
    \end{equation}
    Each such term is counted once when we sum over $\abs{\alpha} = 2$, and twice if we sum over $i \neq j$.
    We correct for this via a factor of $1/2$ when doing the latter.
    Use linearity of the integral to obtain
    \begin{subequations}
    \begin{align}
        R_{\mbf{a},1}(\mbf{h})
        &= 2 \sum_{i} \frac{h_{i}^{2}}{2} \int^{1}_{0} (1-t) \frac{\partial^{2}f(\mbf{a}+t\mbf{h})}{\partial x_{i}^{2}} dt
        + 2 \cdot \frac{1}{2} \sum_{i \neq j} h_{i}h_{j} \int^{1}_{0} (1-t) \frac{\partial^{2}f(\mbf{a}+t\mbf{h})}{\partial x_{i} \partial x_{j}} dt \\
        &= \int_0^1 (1-t) \left[ \sum_{i} h_{i}^{2} \frac{\partial^{2}f(\mbf{a}+t\mbf{h})}{\partial x_{i}^{2}}
        + \sum_{i \neq j} h_{i}h_{j} \frac{\partial^{2}f(\mbf{a}+t\mbf{h})}{\partial x_{i} \partial x_{j}} \right] dt \\
        &= \int_0^1 (1-t) \sum_{i, j} h_{i}h_{j} \frac{\partial^{2}f(\mbf{a}+t\mbf{h})}{\partial x_{i} \partial x_{j}} dt
        = \int_0^1 (1-t)\ \mbf{h}^{T} H_{f}\vert_{\mbf{a}+t\mbf{h}} \mbf{h}\ dt\ ,
    \end{align}
    \end{subequations}
    which completes the proof.
\end{proof}

\section{Proof of Lemma~\ref{lem:Carlen-Generalization}}\label{appx:proofCarlen}
The proof method is to prove the result for polynomials and then use continuity to extend it to arbitrary functions.
As such, the overall proof method is extremely similar to establishing \cite[Theorem V.3.3]{Bhatia-1997}.
\begin{proposition}\label{prop:polynomial-AtH-form}
    Let $q$ be an integer greater than or equal to one.
    Let $t \in \reals$.
    Then,
    \begin{equation}
        (A + tH)^{q} = \sum_{i = 0}^{q} t^{i} \sum_{s \in \{0,1\}^{q}: w(s) = i} \left[ \prod_{i \in [q]} \mcl{O}(s(i)) \right] \ ,
    \end{equation}
    where $w(s)$ is the Hamming weight of the string and
    \begin{equation}
        \mcl{O}(i) = \begin{cases}
            A & \text{ if } i = 0 \\
            H & \text{ if } i = 1 \ .
        \end{cases}
    \end{equation}
\end{proposition}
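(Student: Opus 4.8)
The plan is to prove the expansion of $(A+tH)^q$ by a straightforward induction on $q$, treating $t$ as a scalar parameter and keeping careful track of the noncommutativity of $A$ and $H$. The key observation is that the right-hand side is just the result of fully distributing the product $(A+tH)(A+tH)\cdots(A+tH)$ ($q$ factors) without ever commuting terms past one another: each of the $2^q$ terms in the expansion is indexed by a binary string $s \in \{0,1\}^q$ recording, for each factor position $i \in [q]$, whether we picked $A$ (encoded $0$) or $tH$ (encoded $1$); the term contributes $t^{w(s)}$ times the ordered product $\prod_{i\in[q]} \mcl{O}(s(i))$, and collecting terms by the power of $t$ gives exactly the stated formula.

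The base case $q = 1$ is immediate: the strings are $s = 0$ and $s = 1$, giving $A + tH$. For the inductive step, assume the formula holds for $q$; then write $(A+tH)^{q+1} = (A+tH)^q (A+tH)$, substitute the inductive expansion, and distribute over the final factor $A + tH$ on the right. Each term $t^{w(s)} \prod_{i\in[q]} \mcl{O}(s(i))$ produces two terms: one multiplied by $A$ on the right, corresponding to appending a $0$ to $s$, and one multiplied by $tH$ on the right, corresponding to appending a $1$ (and picking up an extra factor of $t$, matching the incremented Hamming weight). Since every string in $\{0,1\}^{q+1}$ arises uniquely as such an extension of a unique string in $\{0,1\}^q$, reindexing the double sum by the power of $t$ yields the formula for $q+1$. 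Regrouping $\sum_{i=0}^q t^i (\cdots) \cdot A$ and $\sum_{i=0}^q t^{i+1}(\cdots)\cdot H$ into a single sum $\sum_{j=0}^{q+1} t^j \sum_{s \in \{0,1\}^{q+1}: w(s)=j}[\cdots]$ is the only mildly fiddly bookkeeping step.

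I do not anticipate a genuine obstacle here; the statement is essentially a formalization of the distributive law for a product of $q$ binomials in a noncommutative ring, and the main care needed is purely notational — making sure the ordered-product notation $\prod_{i \in [q]} \mcl{O}(s(i))$ is read left-to-right in increasing $i$, and that appending a symbol to the string corresponds to multiplying on the \emph{right}, so that the ordering is preserved under the induction. An alternative, if one prefers to avoid induction, is to simply assert that $(A+tH)^q = \prod_{i=1}^q (A+tH)$ equals its full (ordered) expansion $\sum_{s\in\{0,1\}^q} \prod_{i\in[q]} (t\,[s(i)=1])\text{-scaled } \mcl{O}(s(i))$ directly by the distributive law, then collect by $w(s)$; but the inductive write-up is cleanest for the paper.
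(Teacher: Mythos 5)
Your proposal is correct and follows essentially the same route as the paper: induction on $q$, with the inductive step expanding $(A+tH)^{q+1} = (A+tH)^{q}(A+tH)$ and identifying strings in $\{0,1\}^{q+1}$ with extensions of strings in $\{0,1\}^{q}$ by an appended $0$ or $1$, then re-indexing by Hamming weight. The care you flag about right-multiplication matching string-appending is exactly the point the paper's induction also relies on.
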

\begin{proof}
    We prove this by induction.\\
    \noindent \textbf{Base Case:} Let $p = 1$.
    Then,
    \begin{equation}
        (A+tH) = t^{0} \mcl{O}(0) + t^{1} \mcl{O}(1) = \sum_{i=0}^{1} t^{i} \sum_{s \in \{0,1\}: |s| = i} \left[ \prod_{i \in [1]} \mcl{O}(s(i)) \right] \ .
    \end{equation}
    \noindent \textbf{Induction Hypothesis:} For some $k \in \naturals$,
    \begin{equation}
        (A + tH)^{k} = \sum_{i = 0}^{k} t^{i} \sum_{s \in \{0,1\}^{k}: w(s) = i} \left[ \prod_{i \in [q]} \mcl{O}(s(i)) \right] \ .
    \end{equation}
    \noindent \textbf{Induction Step}
    \begin{subequations}
    \begin{align}
        & (A+tH)^{k+1} \\
        =& (A+tH)^{k}(A+tH) \\
        =& \left(\sum_{i = 0}^{k} t^{i} \sum_{s \in \{0,1\}^{k}: w(s) = i} \left[ \prod_{i \in [q]} \mcl{O}(s(i)) \right] \right) \left(A + tH \right) \\
        =& \left(\sum_{i = 0}^{k} t^{i} \sum_{s \in \{0,1\}^{k}: w(s) = i} \left[ \left\{ \prod_{i \in [q]} \mcl{O}(s(i)) \right\}A \right] \right) \\
         & \hspace{2cm} +
        \left(\sum_{i = 0}^{k} t^{i+1} \sum_{s \in \{0,1\}^{k}: w(s) = i} \left[ \left\{ \prod_{i \in [q]} \mcl{O}(s(i)) \right\}H \right] \right) \\
        =&  \sum_{i = 0}^{k+1} t^{i} \sum_{s \in \{0,1\}^{k+1}: w(s) = i} \left[ \prod_{i \in [q]} \mcl{O}(s(i)) \right] \ ,
    \end{align}
    \end{subequations}
    where the second equality is the induction hypothesis and the last equality is just noting every possible string $s \in \{0,1\}^{k+1}$ is the set of strings $s \in \{0,1\}^{k}$ with a zero appended at the end unioned with the same with one appended at the end and then re-indexing.
    This completes the proof.
\end{proof}

We make use of the following fact, which relies on similar proof methods to ones used shortly.%
\begin{lemma}\label{lem:cont-dif-frech-deriv}
    (See \cite[Theorem 3.33]{Hiai-2014a}) If $f: I \to \mathbb{R}$ is continuously differentiable over the interval $I$ and $A \in \Herm{\mcl{X}}$ has its spectrum contained in $I$, $\spec{A} \subset I$, then $f(A)$ is Fr\'{e}chet differentiable at $A$, i.e.,~$Df(A)$ exists.
\end{lemma}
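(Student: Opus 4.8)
The plan is to prove the result first for polynomials, where the Fréchet derivative can be written explicitly, and then extend to an arbitrary $f \in C^{1}(I)$ by a $C^{1}$-approximation argument in which the approximation errors are controlled linearly in the size of the perturbation. Throughout, fix the spectral decomposition $A = \sum_{j} \lambda_{j} P_{j}$ into distinct eigenvalues $\lambda_{j}$ and spectral projections $P_{j}$, and for a $C^{1}$ function $g$ write $g^{[1]}$ for its divided difference, $g^{[1]}(x,y) = (g(x)-g(y))/(x-y)$ for $x \neq y$ and $g^{[1]}(x,x) = g'(x)$. The candidate derivative is the linear map $L[H] := \sum_{j,k} f^{[1]}(\lambda_{j},\lambda_{k})\, P_{j} H P_{k}$, and the goal is to verify that $L$ satisfies \eqref{eq:frechet-deriv-defn}.

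First I would handle polynomials. Expanding $(A+Z)^{m}$ via Proposition~\ref{prop:polynomial-AtH-form} (with $t=1$ and $H=Z$) isolates the term linear in $Z$ as $\sum_{i=0}^{m-1} A^{i} Z A^{m-1-i}$, all remaining terms being of order $\|Z\|^{2}$ or higher; summing over monomials shows that every polynomial $q$ yields a Fréchet-differentiable $q(A)$ with $Dq(A)[H] = \sum_{j,k} q^{[1]}(\lambda_{j},\lambda_{k}) P_{j} H P_{k}$, the passage to the divided-difference form being the identity $\sum_{i=0}^{m-1}\lambda_{j}^{i} \lambda_{k}^{m-1-i} = q^{[1]}(\lambda_{j},\lambda_{k})$ for $q(x)=x^{m}$. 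The crucial quantitative input is a bound on the norm of this derivative map: since $H \mapsto \sum_{j,k} c_{jk} P_{j} H P_{k}$ is a Schur multiplier in the eigenbasis, its norm is controlled, up to a constant $\kappa$ depending only on $\dim(\mcl{X})$, by $\max_{j,k}|c_{jk}|$, and the mean value theorem gives $\max_{x,y \in I_{0}} |q^{[1]}(x,y)| \le \|q'\|_{\infty,I_{0}}$ on any interval $I_{0}$ containing the spectrum. Hence $\opnorm{Dq(M)} \le \kappa\,\|q'\|_{\infty,I_{0}}$ whenever $\spec{M} \subset I_{0}$, and integrating the derivative map along the segment from $A$ to $A+Z$ yields the operator-Lipschitz estimate $\|q(A+Z) - q(A)\| \le \kappa\,\|q'\|_{\infty,I_{0}}\,\|Z\|$ for all $A, A+Z$ with spectra in $I_{0}$.

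Next I would set up the approximation. Choose a compact interval $I_{0} \subset I$ containing $\spec{A}$ in its interior with enough margin that $\spec{A+Z} \subset I_{0}$ for all sufficiently small $\|Z\|$, and pick polynomials $p_{n}$ with $p_{n} \to f$ and $p_{n}' \to f'$ uniformly on $I_{0}$ (integrate a Weierstrass approximation of $f'$). Pointwise one has $p_{n}^{[1]} \to f^{[1]}$ on $I_{0} \times I_{0}$ --- off the diagonal from $p_{n} \to f$, on the diagonal from $p_{n}' \to f'$ --- so that $L_{n} := Dp_{n}(A) \to L$ in norm, $L$ being a finite sum over the finitely many projections. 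To conclude, decompose
\begin{equation*}
f(A+Z) - f(A) - L[Z] = \big[(f-p_{n})(A+Z) - (f-p_{n})(A)\big] + \big[p_{n}(A+Z) - p_{n}(A) - L_{n}[Z]\big] + (L_{n} - L)[Z].
\end{equation*}
The first bracket is bounded by $\kappa\,\|f'-p_{n}'\|_{\infty,I_{0}}\,\|Z\|$ (apply the polynomial operator-Lipschitz estimate to $p_{m} - p_{n}$ and let $m \to \infty$, using continuity of the functional calculus on the left and $\|p_m'-p_n'\|_{\infty,I_0}\to\|f'-p_n'\|_{\infty,I_0}$ on the right), the last term is bounded by $\opnorm{L_{n} - L}\,\|Z\|$, and the middle term is $o(\|Z\|)$ for each fixed $n$ by the polynomial case. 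Dividing by $\|Z\|$, taking $\limsup_{Z \to 0}$, and then $n \to \infty$ forces the limit in \eqref{eq:frechet-deriv-defn} to zero, so $Df(A)$ exists and equals $L$.

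The main obstacle is the Schur-multiplier norm bound $\opnorm{Dq(M)} \le \kappa\,\|q'\|_{\infty,I_{0}}$: it is exactly this estimate that makes the approximation errors scale linearly in $\|Z\|$ and uniformly in the polynomial degree. A cruder coefficient-wise bound on $Dq(A)$, such as $\sum_{m} |c_{m}|\, m\, \|A\|^{m-1}$ for $q = \sum_{m} c_{m} x^{m}$, is \emph{not} controlled by $\|q'\|_{\infty,I_{0}}$ and therefore fails to close the limiting argument. Establishing the bound via the divided-difference (double-operator-integral) representation is the genuine technical content; everything else is bookkeeping with the triangle inequality and two interchanges of limits.
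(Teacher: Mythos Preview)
The paper does not give its own proof of this lemma: it is stated with a citation to \cite[Theorem~3.33]{Hiai-2014a} and used as a black box. Your proposal supplies an actual argument, and it is correct. The route you take---compute $Dq(A)$ explicitly for polynomials via the divided-difference/Schur-multiplier representation $H\mapsto\sum_{j,k}q^{[1]}(\lambda_j,\lambda_k)P_jHP_k$, bound its operator norm by a dimension-dependent constant times $\|q'\|_{\infty,I_0}$, and then pass to general $f\in C^1(I)$ by a uniform $C^1$ approximation with the three-term decomposition---is precisely the standard proof one finds in the cited reference (and in Bhatia's treatment of operator differentiability).

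It is worth noting that your argument runs parallel to, but is genuinely harder than, the paper's proof of \cref{lem:Carlen-Generalization} in \cref{appx:proofCarlen}. There the target is only the scalar $\trace{f(A)}$, so the derivative map collapses to $H\mapsto\trace{f'(A)H}$ and the relevant operator norm is controlled directly by $\|p_m'-p_n'\|_1$ via H\"older. For the operator-valued $f(A)$ you must instead control the full Schur multiplier $H\mapsto\sum_{j,k}c_{jk}P_jHP_k$, and you correctly identify that the crude coefficientwise bound $\sum_m |c_m|\,m\,\|A\|^{m-1}$ is useless here since it is not dominated by $\|q'\|_{\infty,I_0}$; the finite-dimensional Schur-multiplier estimate $\max_{j,k}|c_{jk}|\cdot\kappa(\dim\mcl{X})$ together with the mean-value bound $|q^{[1]}(\lambda_j,\lambda_k)|\le\|q'\|_{\infty,I_0}$ is exactly what closes the argument. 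Your handling of the first bracket (passing through $p_m-p_n$ and letting $m\to\infty$ via continuity of the functional calculus) and your observation that $L_n\to L$ needs only convergence at the finitely many spectral points are both sound.
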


\begin{proposition}\label{prop:carlen-extension-for-polynomials}
    Let $A,H \in \Herm{\mcl{X}}$.
    For all polynomials $p$,
    \begin{align} \label{eq:polynomial-trace-directional-derivative}
        D\trace{p(A)}[H] = \trace{p'(A)H} \ .
    \end{align}
\end{proposition}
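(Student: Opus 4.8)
The plan is to reduce to monomials and then invoke the non-commutative binomial-type expansion of $(A+tH)^{q}$ from \cref{prop:polynomial-AtH-form}. Write $p(x) = \sum_{q} c_{q} x^{q}$. Since the trace is linear, the Fr\'{e}chet derivative is linear in the function (by its uniqueness), and $p \mapsto p'$ is linear, it suffices to prove \eqref{eq:polynomial-trace-directional-derivative} for a single monomial $p(x) = x^{q}$ with $q \geq 1$; the $q = 0$ case is trivial since then $\trace{p(X)} = \dim(\mcl{X})$ is constant and $p' \equiv 0$.

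Next I would settle the differentiability bookkeeping. As $p$ is $C^{1}$ on $\reals$, \cref{lem:cont-dif-frech-deriv} gives that $X \mapsto p(X)$ is Fr\'{e}chet differentiable at $A$; composing with the linear (hence Fr\'{e}chet differentiable) map $\trace{\cdot}$ shows $X \mapsto \trace{p(X)}$ is Fr\'{e}chet differentiable at $A$, so $D\trace{p(A)}$ exists. By \cref{prop:frech-relation} it then coincides with the directional derivative, $D\trace{p(A)}[H] = \frac{d}{dt}\trace{p(A+tH)}\big|_{t=0}$, which is what I will actually compute.

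Now specialize to $p(x) = x^{q}$, apply \cref{prop:polynomial-AtH-form}, and take the trace to obtain the polynomial identity in $t$
\begin{equation*}
    \trace{(A+tH)^{q}} = \sum_{i=0}^{q} t^{i} \sum_{s \in \{0,1\}^{q}:\, w(s)=i} \trace{\,\prod_{k \in [q]} \mcl{O}(s(k))\,} \ .
\end{equation*}
Differentiating at $t = 0$ kills the constant $i = 0$ term and all $i \geq 2$ terms, leaving only the weight-one contribution $\sum_{s:\, w(s)=1} \trace{\prod_{k \in [q]} \mcl{O}(s(k))}$. A string $s$ of Hamming weight one picks a single position $j \in [q]$ holding $H$, with $A$ in all remaining positions, so this sum equals $\sum_{j=1}^{q} \trace{A^{j-1} H A^{q-j}}$; by cyclicity of the trace each summand is $\trace{A^{q-1} H}$, giving $q\,\trace{A^{q-1}H}$. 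Since $p'(x) = q x^{q-1}$, this is exactly $\trace{p'(A) H}$, and linearity in $p$ extends the identity to all polynomials.

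The proof is essentially bookkeeping, so the only point requiring genuine care is the second step: one must invoke the differentiability result (\cref{lem:cont-dif-frech-deriv}) and the Fr\'{e}chet-to-directional-derivative correspondence (\cref{prop:frech-relation}) correctly, so that the $t$-derivative extracted from \cref{prop:polynomial-AtH-form} really does compute $D\trace{p(A)}[H]$ rather than merely a directional slope that a priori might not assemble into a Fr\'{e}chet derivative. After that, the weight-one term extraction and one application of trace cyclicity finish it.
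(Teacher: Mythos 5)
Your proposal is correct and follows essentially the same route as the paper: reduce to monomials by linearity, establish Fr\'{e}chet differentiability of $\trace{p(X)}$ via \cref{lem:cont-dif-frech-deriv} and the chain rule so that \cref{prop:frech-relation} reduces the task to a directional derivative, then expand $(A+tH)^{q}$ with \cref{prop:polynomial-AtH-form}, keep the weight-one terms at $t=0$, and use cyclicity of the trace to get $q\,\trace{A^{q-1}H} = \trace{p'(A)H}$. The only cosmetic difference is that you trace the expansion before differentiating in $t$ while the paper differentiates the matrix expansion first; the content is identical.
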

\begin{proof}
    First note that the trace function is Fr\'{e}chet differentiable on $\Herm{\mcl{X}}$.
    By the chain rule for Fr\'{e}chet derivatives, $\trace{f(X)}$ is differentiable at a point $A$ so long as $f(X)$ is differentiable at point $A$.
    If $f$ is a polynomial, it is Fr\'{e}chet differentiable via Lemma \ref{lem:cont-dif-frech-deriv}.
    Thus by the statement around \cref{eq:frech-deriv-to-direc-deriv}, we may now focus on establishing
    \begin{align}\label{eq:reduc-to-direc-deriv-polynomial} \frac{d}{dt} \trace{p(A+tH)} \Big \vert_{t=0} = \trace{p'(A)H} \ .
    \end{align}
    Now note by the linearity of the derivative with respect to $t$ and the linearity of trace, it is sufficient to prove this for powers, $p_{q}(t) = t^{q}$ for $q = 1,2,3\hdots$ By Proposition \ref{prop:polynomial-AtH-form}, we have
    \begin{subequations}
    \begin{align}
        \frac{d}{dt}  p_{q}(A+tH) =& \frac{d}{dt} \sum_{i = 0}^{q} t^{i} \sum_{s \in \{0,1\}^{q}: w(s) = i} \left[ \prod_{i \in [q]} \mcl{O}(s(i)) \right] \\
        =& \sum_{i = 1}^{q} i \cdot t^{i-1} \sum_{s \in \{0,1\}^{q}: w(s) = i} \left[ \prod_{i \in [q]} \mcl{O}(s(i)) \right] \ .
    \end{align}
    \end{subequations}
    Note that when we evaluate this at $t=0$, every term but the $i=1$ term goes away.
    We thus have
    \begin{equation}
        \frac{d}{dt} p_{q}(A+tH) \Big \vert_{t=0} =  \sum_{s \in \{0,1\}^{q}: w(s) = 1} \left[ \prod_{i \in [q]} \mcl{O}(s(i)) \right] \ .
    \end{equation}
    We now take the trace of both sides.
    Because the Hamming weight of $s$ is always one, there is only one $H$ in each product.
    Thus by cyclicity of trace, $\trace{\prod_{i \in [q]} \mcl{O}(s(i))} = \trace{HA^{q-1}}$.
    Moreover, there are $\binom{q}{1} = q$ such strings.
    Thus,
    \begin{subequations}
    \begin{align}
        \frac{d}{dt} \trace{p_{q}(A+tH)} \Big \vert_{t=0} &= \sum_{s \in \{0,1\}^{q}: w(s) = 1} \trace{ \prod_{i \in [q]} \mcl{O}(s(i)) } \\
        &= q \trace{A^{q-1}H} = \trace{p_{q}'(A)H} \ ,
    \end{align}
    \end{subequations}
    where the last equality uses that $p_{q}'(x) = q p_{q-1}(x)$.
    This completes the proof.
\end{proof}

\begin{theorem}[Lemma \ref{lem:Carlen-Generalization}]
    Let $I \subset \reals$ be an open interval.
    Let $f \in C^{1}(I)$, $A \in \Herm{\mcl{X}}$ such that $\spec{A} \subset I$ and $H \in \Herm{\mcl{X}}$.
    Then,
    \begin{align}\label{eq:thm-version-of-carlen-ext}
        D\trace{f(A)}[H] = \trace{f'(A)H} \ .
    \end{align}
\end{theorem}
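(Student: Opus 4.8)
The plan is to bootstrap from the polynomial case in \cref{prop:carlen-extension-for-polynomials} by a $C^{1}$-approximation argument, in the spirit of the proof of \cite[Theorem~V.3.3]{Bhatia-1997} already alluded to above. Since differentiability is a local notion, I may assume $I$ is open, and I fix once and for all a compact interval $J = [c,d] \subset I$ whose interior contains $\spec{A}$. By Weyl's perturbation inequality the eigenvalues of $A + sH$ differ from those of $A$ by at most $\abs{s}\,\opnorm{H}$, so there is $\delta > 0$ with $\spec{A+sH} \subset J$ for all $\abs{s} \le \delta$; this is precisely what is needed for $f(A+sH)$ and $f'(A+sH)$ to be well-defined along the whole segment. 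The Fr\'echet differentiability of $X \mapsto \trace{f(X)}$ at $A$ is already granted by \cref{lem:cont-dif-frech-deriv} together with the chain rule and differentiability of the trace, so by \eqref{eq:frech-deriv-to-direc-deriv} it remains only to evaluate the directional derivative $\frac{d}{dt}\trace{f(A+tH)}\big\vert_{t=0}$.

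First I would record the following consequence of the polynomial case: for any polynomial $p$ and any $\abs{s} \le \delta$, applying \cref{prop:carlen-extension-for-polynomials} at the Hermitian point $A + sH$ gives $\frac{d}{ds}\trace{p(A+sH)} = \trace{p'(A+sH)H}$, whence, integrating,
\begin{equation}\label{eq:plan-poly-integral}
    \trace{p(A+tH)} - \trace{p(A)} = \int_{0}^{t} \trace{p'(A+sH)H}\, ds \qquad (\abs{t} \le \delta)\ .
\end{equation}
Next, given $\epsilon > 0$, I would take a Weierstrass approximation $q$ of the continuous function $f'$ on $J$ and set $p(x) \coloneqq f(c) + \int_{c}^{x} q(u)\, du$; then $p' = q$ and $\norm{f-p}_{\infty,J} \le (d-c)\,\norm{f'-q}_{\infty,J}$, so both $\norm{f-p}_{\infty,J}$ and $\norm{f'-p'}_{\infty,J}$ can be made smaller than $\epsilon$.

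Then I would compare the $f$- and $p$-versions of \eqref{eq:plan-poly-integral} through the functional calculus: since every spectrum encountered lies in $J$, one has $\abs{\trace{(f-p)(B)}} \le \dim(\mcl{X})\,\norm{f-p}_{\infty,J}$ for $B \in \{A, A+tH\}$, and $\abs{\trace{(f'-p')(A+sH)H}} \le \dim(\mcl{X})\,\norm{f'-p'}_{\infty,J}\,\opnorm{H}$ using $\trnorm{\cdot} \le \dim(\mcl{X})\,\opnorm{\cdot}$ and H\"older. Substituting these bounds, subtracting \eqref{eq:plan-poly-integral} from the analogous expression for $f$, and letting $\epsilon \to 0$ removes the polynomial entirely and yields
\begin{equation}\label{eq:plan-f-integral}
    \trace{f(A+tH)} - \trace{f(A)} = \int_{0}^{t} \trace{f'(A+sH)H}\, ds \qquad (\abs{t} \le \delta)\ .
\end{equation}
Since $s \mapsto \trace{f'(A+sH)H}$ is continuous (a composition of continuous functional calculus, matrix addition, and the trace), the fundamental theorem of calculus applied to \eqref{eq:plan-f-integral} gives $\frac{d}{dt}\trace{f(A+tH)}\big\vert_{t=0} = \trace{f'(A)H}$, and \eqref{eq:frech-deriv-to-direc-deriv} then delivers $D\trace{f(A)}[H] = \trace{f'(A)H}$.

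The only genuinely delicate point I anticipate is the bookkeeping that keeps every spectrum that appears — those of $A$, of $A+tH$, and of $A+sH$ for $s \in [0,t]$ — inside a single compact interval $J \subset I$ on which the $C^{1}$ approximation is performed; this is where the openness of $I$ and Weyl's inequality are used, and it is what makes $f(A+sH)$ and $f'(A+sH)$ legitimate objects in \eqref{eq:plan-f-integral}. Everything else is a routine assembly of the polynomial identity \eqref{eq:plan-poly-integral}, the Weierstrass approximation theorem, and elementary trace-norm estimates.
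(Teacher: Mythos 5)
Your argument is correct, and it transfers the polynomial case to general $f$ by a genuinely different limiting mechanism than the paper's. Both proofs rest on the same two pillars: the polynomial identity of \cref{prop:carlen-extension-for-polynomials} and a uniform $C^{1}$ approximation of $f$ on a compact interval $J \subset I$ containing every spectrum encountered (your Weyl-inequality bookkeeping plays the same role as the paper's choice of $[a,b]$ with $\spec{A}, \spec{A+H} \subset [a,b]$ for $\norm{H}$ small). Where you diverge is in how the limit is taken: the paper follows Bhatia's template, choosing $p_n \to f$, $p_n' \to f'$ uniformly, invoking the mean value theorem for Fr\'echet derivatives along the segment joining $A$ and $A+H$, and running an $\varepsilon/3$ estimate that verifies the defining limit of the Fr\'echet derivative directly against the candidate map $H \mapsto \trace{f'(A)H}$; you instead use \cref{lem:cont-dif-frech-deriv} together with \cref{prop:frech-relation} to reduce everything to the scalar directional derivative, encode the polynomial case as an integral identity along $s \mapsto A+sH$, pass to the limit under the integral via sup-norm and H\"older bounds, and finish with the fundamental theorem of calculus. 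Your route is more elementary in its machinery (one-variable calculus only, no Fr\'echet mean value theorem), but it makes the existence statement of \cref{lem:cont-dif-frech-deriv} genuinely load-bearing: knowing directional derivatives in every direction would not by itself give Fr\'echet differentiability, whereas the paper's uniform-in-$H$ estimate essentially re-derives that differentiability along the way and so uses \cref{lem:cont-dif-frech-deriv} only nominally. Both arguments are complete; yours is arguably shorter given that \cref{lem:cont-dif-frech-deriv} is already quoted in the paper.
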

\begin{proof}
     We already established this result for all polynomials, so we extend it via continuity in the same manner as the proof of \cite[Theorem V.3.3]{Bhatia-1997}.
     As $f \in C^{1}(I)$, by Lemma \ref{lem:cont-dif-frech-deriv} and the chain rule for Fr\'{e}chet derivatives, the Fr\'{e}chet derivative exists for $\trace{f(A)}$ whenever $\spec{A} \subset I$.
     For notational simplicity, we denote $g(A) := \trace{f(A)}$.
     Define the RHS of \cref{eq:thm-version-of-carlen-ext} as $\mcl{D} g(A)[H]$ which is a function on the vector space of Hermitian matrices $H \in \Herm{\mcl{X}}$.

    Let $H \in \Herm{\mcl{X}}$ have norm sufficiently small that $\text{spec}(A+H) \subset I$.
    Let $[a,b] \subset I$ such that $\spec{A}, \spec{A+H} \subset [a,b]$.
    Choose a sequence of polynomials $(p_{n})$ such that $p_{n} \to f$ and $p_{n}' \to f'$ uniformly on $[a,b]$ which we know exists as we assumed $f \in C^{1}(I)$.
    Define the function $g_{n}(X) := \trace{p_{n}(X)}$.
    Let $\mcl{L}$ be the line segment joining $A$ and $A+H$ in the space of Hermitian matrices.
    Then by the mean value theorem for Fr\'{e}chet derivatives,
    \begin{subequations}
    \begin{align}
        & \|g_{m}(A+H) - g_{n}(A+H) - (g_{m}(A) -g_{n}(A))\| \\
        \leq& \|H\| \sup_{X \in \mcl{L}} \|Dg_{m}(X) - Dg_{n}(X)\| \\
        =& \|H\| \sup_{X \in \mcl{L}}\|\mcl{D} g_{m}(X) - \mcl{D} g_{n}(X)\| \ , \label{eq:Frechet-polynomial}
    \end{align}
    \end{subequations}
    where the last equality is because Proposition \ref{prop:carlen-extension-for-polynomials} showed that $D\trace{p(X)} = \mcl{D} \trace{p(X)}$ for any polynomial $p$.
    Now,
    \begin{subequations}
    \begin{align}
        \|\mcl{D} g_{m}(X) - \mcl{D} g_{n}(X)\|
        =& \sup_{\substack{H \in \Herm{X}: \\ \|H\|=1}} |\mcl{D} g_{m}(X)[H] - \mcl{D} g_{n}(X)[H]| \\
        =& \sup_{\substack{H \in \Herm{X}: \\ \|H\|=1}} \vert \trace{\left(p_{m}'(X)-p_{n}'(X)\right)H} \vert \\
        \leq& \sup_{\substack{H \in \Herm{X}: \\ \|H\|=1}} \norm{p_{m}'(X)-p_{n}'(X)}_{1} \norm{H} \\
        \leq & \norm{p_{m}'(X)-p_{n}'(X)}_{1} \ ,
    \end{align}
    \end{subequations}
    where we used the definition of $g_{n}$ and H\"{o}lder's inequality for the case $p=1$ and $q= \infty$.
    This also implies $\|\mcl{D} g_{n}(X) - \mcl{D} g(X)\| \leq \|p_{n}'(X)-f'(X)\|_{1}.$ By this bound, as we chose a sequence of polynomials such that $p_{n}' \to f'$ uniformly, it follows that for any $\varepsilon > 0$, there exists $n_{0} \in \naturals$ such that for all $n,m \geq n_{0}$
    \begin{align}\label{eq:polynomial-Cd-dif}
        \sup_{X \in \mcl{L}}\|\mcl{D} g_{m}(X) - \mcl{D} g_{n}(X)\| \leq \varepsilon/3 \ ,
    \end{align}
    and similarly
    \begin{align}\label{eq:polynomial-and-true-f-Cd-dif}
        \| \mcl{D} g(X) - \mcl{D} g_{n}(X) \| \leq \varepsilon/3 \ .
    \end{align}

    Next,
    \begin{subequations}
    \begin{align}
        & \|g(A+H) - g(A) + (g_{n}(A+H)-g_{n}(A))\| \\
        =& \lim_{m \to \infty} \|g_{m}(A+H) - g_{m}(A) + (g_{n}(A+H)-g_{n}(A))\| \\
        \leq& \|H\| \lim_{m \to \infty} \sup_{X \in \mcl{L}}\|\mcl{D} g_{m}(X) - \mcl{D} g_{n}(X)\| \\
        \leq & \frac{\varepsilon}{3} \|H\| \label{eq:polynomial-and-true-sum-dist} \ ,
    \end{align}
    \end{subequations}
    where we used \cref{eq:Frechet-polynomial} for the first inequality and \cref{eq:polynomial-Cd-dif} in the second.
    Lastly, when $\|H\|$ is sufficiently small, by the definition of Fr\'{e}chet derivative,
    \begin{align}\label{eq:Frechet-deriv-bound-for-poly}
        \|g_{n}(A+H) - g_{n}(A) -\mcl{D} g_{n}(A)\| \leq \frac{\varepsilon}{3} \|H\| \ .
    \end{align}

    Finally, we put this all together.
    That is, if $\|H\|$ sufficiently small,
    \begin{subequations}
    \begin{align}
        & \|g(A+H) - g(A) - \mcl{D} g(A)[H]\| \\
        \leq& \|g(A+H) - g(A) + (g_{n}(A+H)-g_{n}(A))\| \\
        & \hspace{1cm} + \|g_{n}(A+H) - g_{n}(A) -\mcl{D} g_{n}(A)\| + \|(\mcl{D} g(A) - \mcl{D} g_{n}(A))(H)\| \\
        \leq& \frac{\varepsilon}{3}\|H\| + \frac{\varepsilon}{3}\|H\| + \|(\mcl{D} g(A) - \mcl{D} g_{n}(A))(H)\| \\
        \leq& \frac{2\varepsilon}{3}\|H\| + \|\mcl{D} g(A) - \mcl{D} g_{n}(A)\|\|H\| \\
        \leq& \varepsilon \|H\|
        \ ,
    \end{align}
    \end{subequations}
    where the first inequality is the triangle inequality and the second is \cref{eq:polynomial-and-true-sum-dist}, \cref{eq:Frechet-deriv-bound-for-poly}, the third is just bounding the operator norm by the map's operator norm and the operator norm of $H$, and the last uses \cref{eq:polynomial-and-true-f-Cd-dif}.
    Note this means we have shown for all sufficiently small $\|H\|$ and any $\varepsilon > 0$,
    \begin{equation}
        \|g(A+H) - g(A) - \mcl{D} g(A)[H]\| \leq \varepsilon \|H\| \ .
    \end{equation}
    By definition of the Fr\'{e}chet derivative, this means $\mcl{D} g(A) = D g(A)$.
    Recalling the definition of $g$ from the beginning of the proof, this completes the proof.
\end{proof}

\section{Some useful bounds to analyze regret}

\begin{lemma}[Variational inequality]\label{lemma:positiveInnerProd}
    Let $f : S \to \reals$ be differentiable on $\interior{S}$ and $X \in \interior{S}$.
    If for a given direction $H$, there exists $\kappa > 0$ such that for all $t \in [0,\kappa]$, $f(X + tH)$ is defined and $f(X) \leq f(X + tH)$.
    Then,
    \begin{equation}
        \inner{\nabla f(X)}{H} \geq 0\ .
    \end{equation}
\end{lemma}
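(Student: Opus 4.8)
The plan is to reduce to a one-dimensional statement about difference quotients. First I would restrict $f$ to the line segment through $X$ in direction $H$: define $g : [0,\kappa] \to \reals$ by $g(t) \coloneqq f(X + tH)$, which is well-defined by hypothesis. Since $X \in \interior{S}$ and $f$ is differentiable (i.e.~Fr\'echet differentiable) at $X$, Proposition \ref{prop:frech-relation} applies: the directional derivative of $f$ at $X$ in direction $H$ exists, and
\begin{equation}
    g'(0) = \frac{d}{dt} f(X + tH)\Big\vert_{t=0} = Df(X)[H] = \inner{\nabla f(X)}{H}\ .
\end{equation}
In particular this one-sided derivative at $t = 0$ is the limit $\lim_{t \to 0^+} \frac{g(t) - g(0)}{t}$.

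Next I would use the minimality hypothesis. For every $t \in (0,\kappa]$ we have $g(t) = f(X + tH) \geq f(X) = g(0)$ by assumption, hence $\frac{g(t) - g(0)}{t} \geq 0$ since $t > 0$. Taking the limit $t \to 0^+$, which exists by the previous paragraph, preserves the (non-strict) inequality, so $g'(0) \geq 0$. Combining with the identification $g'(0) = \inner{\nabla f(X)}{H}$ gives $\inner{\nabla f(X)}{H} \geq 0$, as required.

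I do not expect a substantive obstacle here; the only point requiring care is that the gradient is genuinely the directional derivative, which is exactly what Proposition \ref{prop:frech-relation} provides once differentiability at the interior point $X$ is invoked (and the hypothesis that $X \in \interior{S}$ together with $f$ differentiable on $\interior{S}$ supplies this). The passage from the difference-quotient inequality to the limit is the routine order-limit argument. One could alternatively phrase the whole argument directly through the Fr\'echet-derivative definition \eqref{eq:frechet-deriv-defn}, writing $f(X+tH) - f(X) = t\,\inner{\nabla f(X)}{H} + o(t)$ and dividing by $t > 0$, but routing through Proposition \ref{prop:frech-relation} keeps the proof short and self-contained.
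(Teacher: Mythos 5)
Your proposal is correct and follows essentially the same route as the paper's proof: the nonnegativity of the difference quotients $\frac{f(X+tH)-f(X)}{t}$ for $t>0$ gives a nonnegative right directional derivative, and Proposition \ref{prop:frech-relation} identifies that directional derivative with $\inner{\nabla f(X)}{H}$ using differentiability at the interior point $X$. No gaps.
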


\begin{proof}
    As $f(X) \leq f(X + tH)$ for all $t \in [0,\kappa]$, the right directional directive of $f$ at $X$ in the direction of $H$ given by
        \begin{equation}
            \mcl{D} f(X)[H^+] \coloneqq \lim_{t \to 0^{+}} \frac{f(X + tH) - f(X)}{t} \geq 0\ .
        \end{equation}
    From \cref{prop:frech-relation}, differentiability at $X$ implies that the directional derivative of $f$ in the direction $H$ denoted by $\mcl{D} f(X)[H]$ exists with $\mcl{D} f(X)[H] = \mcl{D} f(X)[H^+] = \inner{\nabla f(X)}{H}$ which finally implies $\inner{\nabla f(X)}{H} \geq 0$.
\end{proof}

\begin{lemma}\label{lemma:boundOnSubgrad}
    Let $f : S \to \reals$ be convex and $B$-Lipschitz over an open set $S \subseteq \Herm{\mcl{X}}$, the latter meaning $\abs{f(X) - f(Y)} \leq B \trnorm{X - Y}$ for all $X,Y \in S$.
    If $\nabla_X \in \partial f_X$ is a subgradient of the function $f$ at $X \in S$, then $\opnorm{\nabla_X} \leq B$.
\end{lemma}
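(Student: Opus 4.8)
The plan is to combine the subgradient inequality with the Lipschitz hypothesis to obtain a linear growth bound in every Hermitian direction, and then invoke the duality between the operator norm and the trace norm (for Hermitian operators) to convert this into the desired bound on $\opnorm{\nabla_X}$.

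First I would fix an arbitrary direction $Z \in \Herm{\mcl{X}}$. Since $S$ is open and $X \in S$, there is some $\kappa > 0$ with $X + tZ \in S$ for all $t \in [0,\kappa]$. The defining property of a subgradient gives $f(X + tZ) - f(X) \geq \inner{\nabla_X}{tZ} = t\,\inner{\nabla_X}{Z}$ for $t \in [0,\kappa]$, while $B$-Lipschitzness gives $f(X+tZ) - f(X) \leq \abs{f(X+tZ) - f(X)} \leq B\,\trnorm{tZ} = tB\,\trnorm{Z}$. Dividing by $t > 0$ yields $\inner{\nabla_X}{Z} \leq B\,\trnorm{Z}$, and replacing $Z$ by $-Z$ gives $\abs{\inner{\nabla_X}{Z}} \leq B\,\trnorm{Z}$ for every $Z \in \Herm{\mcl{X}}$.

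Next I would use that $\nabla_X \in \Herm{\mcl{X}}$ (a subgradient of a real-valued function on a subset of $\Herm{\mcl{X}}$, taken with respect to the Hilbert–Schmidt inner product $\inner{\cdot}{\cdot}$, is Hermitian) together with the variational characterization $\opnorm{A} = \max\{\abs{\inner{A}{Z}} : Z \in \Herm{\mcl{X}},\ \trnorm{Z} \leq 1\}$, valid for Hermitian $A$ because the maximum is attained at $Z = \pm vv^{*}$ for a unit eigenvector $v$ of $A$ associated with the eigenvalue of largest magnitude. Applying the inequality from the previous paragraph to $Z$ with $\trnorm{Z} \leq 1$ then gives $\opnorm{\nabla_X} \leq B$, which completes the proof.

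The argument is essentially routine; the only point requiring a little care is the duality claim, and in particular that it suffices to test against Hermitian $Z$, which follows at once from the spectral decomposition of the Hermitian operator $\nabla_X$.
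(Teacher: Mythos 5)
Your proposal is correct and takes essentially the same route as the paper's proof: both combine the subgradient inequality with the Lipschitz hypothesis to obtain $\inner{\nabla_X}{Z} \leq B \trnorm{Z}$ for admissible perturbations, and then conclude via the duality of the operator norm and the trace norm. Your directional parametrization by $t$ and the explicit remark that it suffices to test against Hermitian $Z$ (attained at $\pm vv^{*}$) are only minor presentational refinements of the same argument.
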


\begin{proof}
    From convexity of $f$, we know
    \begin{equation}
        f(Y) \leq f(X) + \inner{\nabla_X}{Y - X}, \quad \forall X,Y \in S\ .
    \end{equation}
    The Lipschitz continuity of $f$ further implies
    \begin{equation}
        B \trnorm{X - Y} \geq \abs{f(X) - f(Y)} \geq \inner{\nabla_X}{Y - X}, \quad \forall X,Y \in S\ .
    \end{equation}
    For a fixed $X \in S$, pick an $\alpha > 0$ such that $Z - X \in S$ for all $Z$ with $\trnorm{Z} \leq \alpha$.
    Note that such an $\alpha$ is guaranteed to exist by the openness of $S$.
    Set $Y = (\argmax_{\trnorm{Z} \leq \alpha} \inner{\nabla_X}{Z}) + X$, then $\trnorm{X - Y} = \alpha$ and the previous equation gets reduced to
    \begin{equation}
        \alpha B \geq \max_{\trnorm{Z} \leq \alpha} \inner{\nabla_X}{Z} = \alpha \max_{\trnorm{Z} \leq 1} \inner{\nabla_X}{Z} = \alpha \opnorm{\nabla_X}\ ,
    \end{equation}
    as the operator norm is simply dual to the trace norm.
\end{proof}

\end{document}